\documentclass[12pt]{article}

\title{Robust Mechanism Design on Networks with Externalities}
\author{Kohmei Makihara\footnote{Graduate School of Economics, the University of Osaka, 1-7, Machikaneyama, Toyonaka, Osaka 560-0043, Japan. Tel: +81-6-6850-6111. E-mail: k.makihara@econ.osaka-u.ac.jp}\\
}
\date{September 28, 2026}

\usepackage{cancel}
\usepackage{afterpage}  
\usepackage{color}
\usepackage[a4paper, hmargin={1in, 1in}, vmargin={1in, 1in}]{geometry}
\usepackage[dvipsnames]{xcolor}
\usepackage{fullpage}
\usepackage{amsmath}
\usepackage{amsthm}
\usepackage{amssymb}
\usepackage{multirow}
\usepackage{pdfpages}
\usepackage{color}
\usepackage{graphicx}
\usepackage[hidelinks]{hyperref}
\usepackage[english]{babel}
\usepackage[utf8]{inputenc}
\makeatletter
\DeclareRobustCommand{\cite}[1]{\citeauthor{#1}~(\citeyear{#1})}
\makeatother

\usepackage{comment}
\usepackage{appendix}
\usepackage{multirow,array}
\usepackage{tabularx}
\usepackage[T1]{fontenc}
\usepackage{lmodern}
\usepackage{tikz}
\usepackage[hang,small,bf]{caption}
\usepackage[subrefformat=parens]{subcaption}
\usepackage{chngpage}
\usepackage[normalem]{ulem}
\usepackage{enumerate}
\usepackage{newfloat,caption,float}
\usepackage{caption}
\usepackage{mathtools}
\usepackage[export]{adjustbox} 
\usepackage{xparse,etoolbox}
\usepackage{amssymb, amsthm, amsmath}

\DeclareMathOperator*{\argmax}{arg\,max}

\DeclareMathOperator{\BR}{BR}

\renewcommand{\bold}[1]{\ifmmode\mathbf{#1}\else\textbf{#1}\fi}

\theoremstyle{definition}
\newtheorem{theorem}{Theorem}
\newtheorem{proposition}{Proposition}
\newtheorem{definition}{Definition}
\newtheorem{lemma}{Lemma}
\newtheorem{corollary}{Corollary}

\makeatletter
\let\old@begintheorem=\@begintheorem
\def\@begintheorem#1#2{\itshape\old@begintheorem{#1}{#2}}
\makeatother

\newtheorem{manualtheoreminner}{Theorem}

\let\oldmanualtheoreminner\manualtheoreminner
\let\endoldmanualtheoreminner\endmanualtheoreminner
\RenewDocumentEnvironment{manualtheoreminner}{o}{%
  \IfNoValueTF{#1}{\oldmanualtheoreminner}{\oldmanualtheoreminner[#1]}%
  \itshape
}{%
  \endoldmanualtheoreminner
}

\newcounter{example}
\newenvironment{example}[1][]{\refstepcounter{example}\par
   \noindent \textbf{Example~\theexample. #1} }{}

\renewenvironment{itemize}{	%
	\begin{list}{\textbullet}{\topsep=1pt}
	\setlength{\parskip}{0mm}    %
	\setlength{\itemsep}{0mm}    %
}{
	\end{list}
}

\usepackage{enumitem}
\setlist[itemize,enumerate]{itemsep=0pt, topsep=0pt, parsep=0pt, partopsep=0pt}

\makeatletter
\newcommand{\@supervisor}{}
\newcommand{\supervisor}[1]{\renewcommand{\@supervisor}{#1}}
\newcommand{\@institute}{}
\newcommand{\institute}[1]{\renewcommand{\@institute}{#1}}
\makeatother

\DeclareRobustCommand{\erase}{\bgroup\markoverwith{\textcolor{red}{\rule[.5ex]{2pt}{0.4pt}}}\ULon}

\DeclareFloatingEnvironment[
  fileext = loe,
  listname = Examples,
  name = Example,
  placement = H,
  within = none,
  ]{ex}
\let\citet\textcite
\usepackage{csquotes}
\usepackage[backend=biber, style=apa, natbib=true, doi=false, url=false, isbn=false, eprint=false, related=false]{biblatex}
\theoremstyle{definition}
\newtheorem{remark}{Remark}
\let\oldremark\remark
\let\endoldremark\endremark
\renewenvironment{remark}{\oldremark\upshape}{\endoldremark}

\begin{document}

\maketitle

\begin{abstract}
We study how to allocate a good with positive externalities among agents in an information network, without monetary transfers. Each agent observes their own valuation and those of their neighbors. A principal seeks to allocate the good to the highest-valuation agent through a mechanism robust to agents' heterogeneous belief hierarchies, leveraging the network structure and the partial incentive alignment created by allocative externalities. We characterize the networks for which a robustly efficient mechanism exists: it exists if and only if at least two agents are connected to all others. Relaxing robustness to rule out weakly dominated strategies, one universally connected agent suffices. We generalize the results to a broader class of utility functions, show that robust welfare maximization is impossible unless the network is complete, and establish that an \textit{ex post incentive compatible} mechanism that is efficient under truthful reports exists if and only if no agent is isolated.
\vspace{\baselineskip}
\\
Keywords: network, full implementation, belief-free implementation, interdependent valuations, mechanism design without transfers\\
JEL classifications: D82, D85, D62, C72, D83
\end{abstract} 

\section{Introduction}
Many allocation problems by the principal involve distributing resources to the most deserving agent, while the principal lacks this information. However, such information is often possessed locally by the friends or neighbors in the community. This type of local information is used in several environments to decide the allocation of resources: for instance, coordinators of aid programs use local information to target poor households in the community (e.g. \citet{alatas2012targeting}, \citet{alatas2016network}), or management of a company uses peer evaluation to assess worker performance to decide bonus or promotion allocation (e.g. \citet{pollack1996using}, \citet{fletcher2001performance}). 
In these settings, agents are often not indifferent to who receives the resource: households benefit when their community is accurately targeted by aid programs, since well-targeted programs increase trust in the system and future aid inflows, and coworkers benefit when the most productive colleague is promoted, since firm performance improves for the whole team. This creates positive externalities that give agents a partial incentive to report their local information truthfully, even without monetary transfers.

This paper formalizes this problem. A principal must allocate a good to the highest-valuation agent, but only agents observe valuations—their own, and those of the peers they are connected to in a network. The good generates a positive allocative externality that depends on the recipient's valuation: every agent benefits from a higher-valuation recipient, whether or not they are connected to that agent.\footnote{One interpretation of this specification is that the agent to whom the good is allocated produces a public good whose size increases in the recipient's productivity (i.e., valuation). Other agents then consume a fraction of this public good, given by their own externality factor.} The network therefore shapes only what agents know, not who benefits from the allocation. Our central question is the following: what are the necessary and sufficient network structures for which a mechanism exists that allocates the good to the highest-valuation agent in \textit{every} equilibrium, without restricting agents' beliefs?
Our results offer practical guidance for two decisions. First, they specify the mechanism a principal should use to elicit truthful reports from a given community. Second, they specify the network structure a policymaker should aim to create, when she can influence how information is shared, in order to obtain accurate information.

Because valuations are interdependent due to externalities—each agent's utility depends on the valuation of whoever receives the good—any mechanism induces a game of incomplete information, in which an agent's optimal message depends on his beliefs about others' valuations and strategies. Since agents differ in what they observe about the network, these beliefs need not be commonly known, and an agent's belief about others' beliefs, and so on, can differ as well. To design a mechanism whose efficiency does not depend on any such belief hierarchy, we adopt the solution concept of \textit{robust implementation} by \citet{bergemann2005robust, bergemann2009robust}, which requires the mechanism to produce an efficient allocation for any strategy profile that can arise as an interim equilibrium under \textit{any} belief hierarchy.\footnote{An interim equilibrium is a Bayes-Nash equilibrium where agents' beliefs do not necessarily admit a common prior (\citet{kunimoto2014interim}). The full characterization of such belief hierarchies follows \citet{harsanyi1967games}.}

Bringing this approach into our setting reveals why network structure and robust implementation combine naturally, and where the tension between them lies. Robust implementation demands efficiency across every belief hierarchy an agent could hold about what he cannot observe; an agent connected to more peers observes more valuations directly, which narrows the set of such hierarchies and makes the requirement easier to satisfy on his messages. But that same well-connected agent may exploit his position strategically, misreporting to secure the good for himself. Network structure therefore cuts both ways for eliciting true valuations. Robust implementation is demanding enough on its own that, absent further structure, it typically forces restrictive assumptions—private values, bounded interdependence, or monetary transfers—or fails outright.\footnote{See \citet{bergemann2009robust, ollar2017full}.}  Our central contribution is to show that the network's information structure, together with the positive externalities it carries, resolves this tension in the principal's favor across a broad class of environments—while also pinning down exactly where that resolution reaches its limit: not every network admits a robustly efficient mechanism, and characterizing which ones do is itself part of the contribution.

Our main finding, Theorem \ref{theo:robust_iff}, provides a tight characterization: such a mechanism exists if and only if at least two agents are connected to all others. The necessity part can be understood as follows. A non-central agent who believes an unobserved agent has the highest valuation regards allocating the good to that agent as payoff-maximizing---so any message profile that delivers this outcome is optimal for him, truthful or not, and hence an untruthful message profile may survive. 
Two universally connected agents overturn this: since each center observes the other, their reports can be directly compared, and the mechanism burns just enough of the good when they disagree that the higher-valuation center won't contest, yet the lower one still concedes. 
This result gives a concrete target to a policymaker able to shape the network: not one well-connected agent, but two, if a policymaker wants to guarantee that the outcome is always efficient, no matter what agents believe about each other.

Theorem \ref{theo:robust_iff}'s demanding robustness requirement restricts which networks it applies to, since it lets agents adopt untruthful strategies whenever they believe their own message won't affect the outcome. We therefore relax it by restricting attention to full-support beliefs over others' valuations and strategies, a criterion we call \textit{weakly robust efficiency}, equivalent to iterated elimination of weakly dominated strategies.\footnote{This elimination procedure lacks a Bayesian foundation, as \citet{borgers1994weak} notes. In the online appendix, we use the notion of rationalizability under approximate common knowledge (\citet{dekel1990rational}; \citet{borgers1994weak}; \citet{gul1996rationality}) to define \textit{admissibly robust efficiency}, and show a mechanism achieving it exists under the same network condition.} A weakly robustly efficient mechanism exists whenever at least one agent is connected to everyone — a weaker, more common network structure: many organizations have a single well-connected member, such as a supervisor or community leader, but few have two. Moreover, the star network — the sparsest network with this property — is itself a stable outcome of decentralized network formation, so this structure remains relevant even when a policymaker lacks the capacity to design the network.\footnote{See \citet{jackson1996strategic, bala2000noncooperative}.}

Theorem \ref{theo:robust_iff} characterizes the network structure required for a robust mechanism to exist; the natural question now is how much efficiency can be salvaged when a network fails to satisfy it. We tackle this question through what we call verifiable blocks: subnetworks with two internal hubs who observe everyone within them. Partitioning the network into such blocks yields a mechanism that robustly identifies the highest-valuation agent block by block. Then, when no single verifiable block spans the entire network, allocating the good to the best agent within the largest verifiable block guarantees a rank no worse than the number of agents outside that block. The guarantee therefore improves as the largest verifiable block grows: it recovers full efficiency when the block spans the whole network.

To investigate how the model behaves in other environments, we generalize both the utility function and the principal's objective. We introduce \textit{linear-in-valuations} utility functions, and show our main results extend whenever every agent's largest externality is for the agent the principal wants to allocate to — a condition flexible enough to let the externality factor vary with the recipient, i.e., a single agent's externality factor can differ depending on who obtains the good. This alignment need not hold when the principal maximizes utilitarian welfare instead: because the externality factor is heterogeneous across recipients, the welfare-maximizing allocation need not maximize the utility of every agent who does not obtain the good.
In that case, a robustly efficient mechanism exists only under the complete network. The contrast underscores that incentive alignment between principal and agents is the key feature of the model enabling robust implementation.

We also examine our model under \textit{ex post full implementation}: a weaker benchmark that fixes agents' beliefs about others' behavior at the equilibrium strategy itself, rather than allowing the arbitrary belief hierarchies that drive our robustness requirement. We evaluate \citet{bergemann2008ex}'s two necessary conditions — ex post incentive compatibility (EPIC) and ex post monotonicity (EM) — in our setting. Both are easy to satisfy: a mechanism that is efficient under truthful reports and satisfies EPIC exists if and only if no agent is isolated,\footnote{A mechanism that burns the good at every message profile is trivially EPIC on any network. This is why we restrict attention to mechanisms that are efficient under truthful reports.} and EM holds for any network under our utility function. The contrast is stark: ex post implementation asks only that no agent be isolated, while our results require one or two universally connected agents — showing that our network conditions are driven by robustness to belief hierarchies, not by incentive compatibility alone.

\subsection*{Related literature and contribution}

This paper contributes to several strands of literature. First, it advances recent work on mechanism design in information networks and peer selection problems. \citet{bloch2022friend} study how a planner can construct a truthful ranking of agents by an ordinal characteristic (e.g. ability or need), using only the reports agents can make locally about their friends in a social network; agents care only about having a higher rank themselves. Their solution concept is ex post incentive compatibility, which does not require full implementation. \citet{baumann2025robust} studies a similar prize-allocation problem and achieves full implementation in Bayesian Nash equilibrium under a commonly known prior distribution over values, made possible by a limit on how much agents can lie (evidence). \citet{niemeyer2024optimal} also study a mechanism design problem where agents hold peer information; their agents likewise have private values, and dominant-strategy incentive compatibility gives them robustness to beliefs for free, though their practical mechanisms are only approximately optimal in the large-network limit. \citet{bloch2025peer} study a peer selection problem in which agents care about who wins even when it is not themselves, i.e. a setting with interdependent values.  Their model differs from ours in what the network represents: rather than encoding who observes whom, it encodes each agent's preferences over who is selected by the planner, while every agent already knows which agents are deserving of the good, the information the planner is trying to elicit. They characterize when a dominant-strategy incentive-compatible and efficient mechanism exists. Our central contribution is to show that a specific form of interdependent values governed by positive allocative externalities, mediated through the right network structure, lets a designer achieve full robust implementation — efficiency under every belief hierarchy, with no common prior and no restriction on how agents' values depend on one another — a combination none of these four papers deliver.

Second, this paper advances implementation theory and mechanism design in environments with interdependent valuations, particularly in contexts without monetary transfers. In these settings, agents' utility depends on the valuations of others, with each agent's valuation typically unknown to the others, creating an incomplete information game. Previous work on this topic has largely focused on allocations with monetary transfers (\citet{cremer1985optimal}, \citet{jehiel2001efficient}) and, more recently, on allocations without transfers (\citet{bhaskar2019resource}, \citet{goldlucke2020multiple}, \citet{feng2023limits}). Our study extends this literature by examining how specific network structures can ensure the existence of an efficient mechanism.
In this context, outcomes rely heavily on common knowledge assumptions that shape agents' beliefs. Assuming a common prior as in standard Bayes-Nash analysis is one way to make the analysis feasible, but this does not fully capture the heterogeneous higher order beliefs that agents may possess.\footnote{The criticism of excessive reliance on common knowledge assumption is referred to as the \textit{Wilson doctrine} and has garnered significant attention in implementation theory.}

To address this point, \citet{bergemann2005robust, bergemann2009robust} introduce the concept of robust implementation, and establish necessary and sufficient conditions for the existence of such mechanisms; however, they take a specific structure of the utility function,\footnote{They assume that the utility of an agent depends only on the {\it aggregate} of the valuations, i.e. there exists an aggregator $h_i(\cdot)$ which gives a scalar depending on the realized valuation profiles.} and, due to the allocative externalities that are present in our model, their framework does not cover our setting. \citet{ollar2017full} address a related problem by providing conditions for an efficient monetary transfer scheme to exist under general restrictions on agents' beliefs. Their key tool is the notion of a \textit{moment condition}.\footnote{This is represented as a function $L_i$ of others' types whose conditional expectation, given agent $i$'s own type, is commonly agreed upon across all admissible beliefs.} Intuitively, a moment condition encodes partial common knowledge about the type-generating process that every agent agrees on regardless of their particular belief. The designer exploits this shared knowledge to adjust the canonical VCG transfers, weakening the strategic externalities in the mechanism just enough to ensure that truthful revelation is the unique rationalizable outcome.\footnote{Robust implementation is also discussed in the context of dynamic mechanism design or virtual implementation. See, for instance, \citet{bergemann2009virtual}, \citet{penta2015robust}, \citet{muller2016robust, muller2020robust}, \citet{battigalli2026monotonicity}.}

Our contribution is to present a novel approach to achieving robust efficiency by leveraging network structures where agents have information about some of their peers, without imposing any belief restrictions and without relying on monetary transfers. Instead, we focus on the network structure and the positive externalities that align the incentives of the principal and the agents. We note that we are not the first to utilize externalities as a means of incentive alignment. For instance, \citet{bhaskar2019resource} discuss the conditions for a second-best optimal mechanism and, using positive externalities, propose an optimal mechanism in a Bayes-Nash equilibrium under a common prior, where agents' valuations are private information  which, in our setting, amounts to considering that the network is empty. Our value-added regarding the use of the externalities is the following: by introducing the information network into such context, not only the first best is possible for some networks, but also with stronger solution concept which does not assume any common knowledge.

\subsection*{Outline}
The paper is organized as follows. Section \ref{sec:model} presents the model, Section \ref{sec:rob_eff} characterizes when a robustly efficient mechanism exists, and Section~\ref{sec:weak_rob_eff} studies a weaker notion of robustness that requires only one universally connected agent. Section~\ref{sec:general_networks} extends the analysis to networks without such structure, Section~\ref{sec:discussion} discusses a broader class of utility functions and the welfare-maximization problem, and applies the concept of ex post implementation.

\section{Model}\label{sec:model}
A principal has one unit of a divisible good to allocate among $n\geq 2$ agents. Let $N = \{1,\cdots,n\}$ be the set of all agents in the community. We denote the amount of good allocated to agent $i$ as $x_i$, and the allocation profile as ${\bf x} = (x_1,\cdots,x_n) \in X$. We require that for all $i \in N$, $x_i \geq 0$, and $\sum_{i \in N}x_i \leq 1$. This implies that some of the good may not be allocated, allowing for money burning, i.e. the principal can choose not to allocate a portion of the good. The divisibility of the good can be interpreted in two ways: either as the inherent nature of the good itself (e.g., if the good to be allocated is a budget sum) or as the probability of allocating an indivisible good to an agent (e.g., the allocation of a single promotion or prize).

Each agent derives utility from the allocation profile. The (ex-post) utility function for agent $i$ is defined as follows:
$$
u_i({\bf x};{\bf v}) = (1 - \alpha_i)v_i x_i + \alpha_i \sum_{j \in N}v_j x_j
$$
where $v_i \in [0,1]$ is the valuation of agent $i$, drawn from an unknown distribution, and $\alpha_i \in (0,1)$ is $i$'s externality factor.\footnote{If $\alpha_i = \alpha$ for all $i$, the utility function is equivalent to the one employed by \citet{bhaskar2019resource}.}
The term $v_i$ represents the constant marginal utility of the good allocated to agent $i$, indicating either the agent's productivity\footnote{Under this interpretation, an agent who receives the good generates $v_i$ units of output with positive externalities.} or the degree to which each agent values the good. We denote the valuation profile by ${\bf v} = (v_1,\cdots, v_n)$, and let $\mathbf{v}_{I} := (v_{i})_{i \in I}$ for $I \subseteq N$. We call the agent who has the highest valuation the highest agent, and denote him by $i^{*} := i^{*}(\mathbf{v})$. We assume that $v_i \neq v_j$ for any $i \neq j$. 

The principal's objective is to allocate the entire good to agent $i^{*}$. For instance, in the context of task allocation, this means assigning the task to the most productive agent. In the case of a simple allocation of a single good, the principal aims to allocate it to the agent who values it the most. Let $\mathbf{e}_{i}$ denote the vector with $1$ at the $i$ entry and $0$ otherwise. We say that the allocation ${\bf x}$ is efficient for a given realization ${\bf v}$ if ${\bf x} = \mathbf{e}_{i^{*}}$. 

Given the principal's objective, the externality factor $\alpha_i$ represents how closely aligned the incentives are between the principal and the agents. As $\alpha_i \to 1$, the utility approaches $\sum_{j \in N}v_jx_j$ which is maximized when we allocate the entire good to the agent with the highest valuation. In this case, the efficient allocation maximizes the utility function of each agent. Conversely, as $\alpha_i$ decreases, the weight of the first term in the utility function increases, suggesting that the significance of each agent's own allocation becomes more pronounced.%

The externality factor $\alpha_i$ can be interpreted in various ways depending on the context. For instance, consider a scenario where the good being allocated is the assignment of a project manager to a team member. Once the project manager is assigned and the project is completed, utility is realized based on the individual reward given only to the manager, alongside a group reward distributed to all team members. In this case, $\alpha_i$ represents the relative magnitude of the group reward compared to the individual reward. %
In another context, when the principal aims to allocate the good to the individual who values it most, $\alpha_i$ can be seen as a measure of altruism of agent $i$, where he receives part of the material benefit of agent $j$ being $v_j x_j$. Furthermore, considering that the second term in the utility function represents total welfare, $\alpha_i$ can also be interpreted as a preference for efficiency, as discussed in various studies in experimental economics.\footnote{See for instance \citet{charness2002understanding} or \citet{engelmann2004inequality}.}

We assume that the principal does not know the valuation of any agent, while each agent is aware of his own valuation as well as the valuations of his neighbors. 
Let $N_i \subseteq N$ denote the set of agents whose valuations are known by agent $i$, that is, it includes agent $i$'s neighbors and himself. We assume that the information is mutual, meaning that if agent $i$ knows agent $j$'s valuation, then agent $j$ also knows agent $i$' one. This forms a network $\mathbf{G} = (g_{ij})_{i,j \in N}$ such that $g_{ij} = g_{ji} = 1$ if $i \in N_{j}$, and $g_{ij} = g_{ji} = 0$ otherwise.
The information possessed by agent $i$ is described by $\theta_i(\mathbf{v}) := (v_{j})_{j \in N_{i}}$, and we call it agent $i$'s type.\footnote{Strictly speaking, a type profile should be defined as a function of the realization of the valuations and the set of neighbors, i.e. $\theta_i(\mathbf{v}, N_{i}) := \mathbf{v}_{N_{i}}$, but for the sake of simplicity, we omit the notation for the dependence of the type on the set of neighbors and write $\theta_i(\mathbf{v}) := \theta_i({\bf v}, N_{i})$, or simply $\theta_i$ when there is no risk of confusion.} Throughout, we write $\mathbf v = (v_j)_{j \in N}$ for the true, realized valuation profile, and $\mathbf v' = (v'_j)_{j \in N}$ for a generic profile that an agent may consider possible.
The network $\mathbf{G}$ and the externality factor $(\alpha_i)_{i \in N}$ are common knowledge among agents and the principal. The structure of the utility function, combined with the fact that an agent does not have complete knowledge of everyone's valuations, implies that an agent lacks full information about his utility function ex-ante.%

\subsection*{Mechanism}
The principal's objective is to achieve an efficient allocation for any realization of valuation profile ${\bf v}$, which is unknown for her. To accomplish this, the principal proposes and commits to a mechanism, defined by the message space and the allocation rule. A mechanism is a pair $(M, g(\cdot))$, where $M := \prod_{i \in N}M_{i}$ with $M_i$ being the message space for agent $i$, and $g = (g_{1}, \cdots, g_{n})$ where $g: M \to X$ is the allocation rule. 

From the agents' perspective, each mechanism induces a game of incomplete information, in which the strategy set is his message space $M_i$, and the ex-post payoff function is defined as follows:
\begin{align*}
\pi_i(m_i,m_{-i};{\bf v}) = u_i(g(m_i,m_{-i});{\bf v})
\end{align*}
We consider a direct mechanism such that agents report the valuation of their neighbors and themselves.\footnote{Some studies on implementation theory and mechanism design take the indirect mechanism approach to eliminate inefficient allocation profiles from the set of equilibria of the game induced by the mechanism (see for instance \citet{palfrey1989mechanism} or \citet{bergemann2008ex}). However, this often involves an augmented mechanism that uses variants of integer games as a device to eliminate inefficient allocations. Not only are integer games considered unrealistic to implement, but they might also lack any equilibrium due to the message space being infinite.} The message space of agent $i$ is $M_i := [0,1]^{N_i}$, and we write $V_{-i} := [0,1]^{N \setminus \{i\}}$ for the set of valuation profiles of the other agents. We denote the message of agent $i$ about the valuation of agent $j$ by $m_{ij}$, so that $m_{i} = (m_{ij})_{j \in N_{i}}$. For simplicity, we restrict the attention to direct mechanisms with $g(\theta(\mathbf{v})) = \mathbf{e}_{i^{*}}$.\footnote{It means that if everyone is truthful, then the mechanism allocates the good to the highest agent.}
We say that a message profile $\mathbf m = (m_i)_{i\in N}$ is consistent if, for every $j \in N$ and every $i,i' \in N_{j}$, $m_{ij} = m_{i'j}$---that is, every pair of agents who observe agent $j$ report the same valuation for him---and inconsistent otherwise.

\section{Beliefs and Robustly efficient mechanism}\label{sec:rob_eff}
In this paper, we take the approach of robust implementation proposed by \citet{bergemann2005robust}. This robust concept ensures that the mechanism leads to an efficient allocation for any message profile that could be supported as an interim equilibrium under some belief hierarchy. In other words, it is robust against any possible configuration of beliefs and higher-order beliefs among agents. 
Following the literature, we use the iterative elimination of strategies that are never a best response, which is equivalent to assessing interim equilibria across all belief spaces.

To introduce this approach, we first define the expected payoff and the best responses. Agent $i$'s expected payoff depends on his conjecture conditional on his type, which is a joint probability distribution $\mu_i(\cdot\mid\theta_i)$ over the messages and valuations of the other agents. 
Given his conjecture $\mu_i(\cdot\,|\,\theta_i)$, the expected payoff when agent $i$ adopts a message $m_i$ is
$$
\mathbb{E} \pi_i(m_i; \theta_i, \mu_i) = \int_{M_{-i} \times V_{-i}}u_i(g(m_i, m_{-i});{\bf v})d\mu_i(m_{-i},v_{-i}|\theta_{i})
$$
and the best responses are
$$
\BR_{i}(\theta_i,\mu_i) = \argmax_{m'_i \in M_i}\,\mathbb{E} \pi_i(m'_i; \theta_i, \mu_i)
$$

The concept of robust efficiency is defined based on the iterative elimination of never-best responses for each agent given his information.\footnote{This concept is a special case of {\it $\Delta$-rationalizability} proposed by \citet{battigalli2003rationalization}.} Let $S^0_i = M_{i}$, and for $k \geq 1$ define
\begin{align*}
  E_i^{k-1}(\theta_i) := \Big\{(m_{-i},v'_{-i}) \in M_{-i}\times V_{-i} \;\Big|\; &(v'_j)_{j\in N_i \setminus \{i\}} = (v_{j})_{j\in N_i \setminus \{i\}} \text{ and } \\
  &m_{-i}\in S_{-i}^{k-1}(\theta_{-i}(v_i,v'_{-i}))\Big\}
\end{align*}
$$
S^k_i(\theta_i) := \Big\{ m_i \in M_i \;\Big|\; \exists \mu_i \in \Delta(M_{-i}\times V_{-i}) \text{ s.th. } \mu_i\big(E_i^{k-1}(\theta_i)\mid\theta_i\big) = 1 \text{ and } m_i \in \BR_i(\theta_i,\mu_i) \Big\}
$$
The set $E_{i}^{k-1}(\theta_i)$ collects pairs of message and valuation profiles of others which agent $i$ regards as possible after $(k - 1)$-th round of iterations, whereas $S^k_i(\theta_i)$ collects the messages of agent $i$ that survives the $k$-th round of iteration.
Within $E_{i}^{k-1}(\theta_i)$, the condition $(v'_j)_{j\in N_i \setminus \{i\}} = (v_{j})_{j\in N_i \setminus \{i\}}$ requires the valuation profile to be consistent with what agent $i$ observes,\footnote{Note that this observational-consistency restriction corresponds to the belief restriction $C_i^{\mathcal{B}}(\theta_i)$ of $\mathcal{B}$-Rationalizability in \citet{ollar2017full}: the possible beliefs of an agent are exactly those consistent with what he observes. We write $C_i(\theta_i)$ for the set of such conjectures, i.e. $\mu_i \in \Delta(M_{-i}\times V_{-i})$ assigning probability 1 to $v'_{-i}$ with $(v'_j)_{j\in N_i \setminus \{i\}} = (v_{j})_{j\in N_i \setminus \{i\}}$.} while $m_{-i}\in S_{-i}^{k-1}(\theta_{-i}(v_i,v'_{-i}))$ requires the other agents' message profiles to have survived the previous round.

The definition of $S^k_i(\theta_i)$ then states the condition for a message of agent $i$ to survive the $k$-th round of iteration: $m_{i}$ survives if it is a best response to some conjecture that assigns the positive probability only to the profiles in $E_{i}^{k-1}(\theta_{i})$. Equivalently, a message is eliminated if it is never a best response to any such conjecture. 

Let $S_{i}(\theta_i) = \cap_{k \geq 0}S^k_i(\theta_i)$. Moreover, we say that a message profile ${\bf m}$ is rationalizable if ${\bf m} \in \Pi_{i \in N}\,S_{i}(\theta_i)$.
\begin{definition}[Robust efficiency]
Given a network $\mathbf{G}$ and $(\alpha_{i})_{i \in N}$, the allocation rule $g$ is robustly efficient if for all ${\bf v}$, $g({\bf m})$ is efficient for all $\mathbf{m}$ which are rationalizable.
\end{definition}
$S^k_i(\theta_i)$ is the set of all messages that survive the $k$-th round of elimination. %
We can see that for all $k$, $S^k_i(\theta_i) \subseteq S^{k-1}_i(\theta_i)$ which guarantees that this process ends after certain number of rounds, i.e. there exists $k^*$ such that $S^{k^*}_i(\theta_i) = S^{k^*+1}_i(\theta_i)$ for all $i$.\footnote{We can see that by redefining this process in terms of an operator\vspace{-2mm}
$$
b_i(S,\theta_i) = \left\{ m_i \in M_i \,|\,\exists \mu_i \in C_{i}(\theta_i) \text{ s.t. } m_i \in \BR_i(\theta_i,\mu_i) \text{ and } \mu_i\left(\{(m_{-i},v_{-i})\,|\,m_{-i} \in S_{-i}(\theta_{-i})\}\right) = 1 \right\} \vspace{-2mm}
$$
which is monotone in the set inclusion ordering, hence to which the Tarski's fixed point theorem is applicable.}\textsuperscript{,}\footnote{Note that if the network is empty, i.e. no one is connected to anyone, then the set $S_i(\theta_i)$ corresponds to $S_{i}^{\mathcal{M}}(\theta_i)$ of \citet{bergemann2009robust}, known as belief-free environments.}
It is known that a message profile is in $\Pi_{i \in N}S_i(\theta_i)$ if and only if it can be played as an interim equilibrium in certain beliefs of agents.\footnote{See Proposition 1 of \citet{bergemann2011robust}.} Thus, robust efficiency requires that, regardless of the agents' beliefs\textemdash including all higher-order beliefs\textemdash the mechanism consistently results in an efficient allocation at an interim equilibrium. This invariance to belief variations is the essence of its robust nature.	

To find a robustly efficient mechanism, it is necessary that all message profiles surviving the iterative elimination process yield an efficient allocation under the mechanism. This requires not only ensuring that among the surviving message profiles, there exists at least one that leads to an efficient allocation (partial implementation), but also that all inefficient message profiles are removed from the set of possible outcomes (full implementation).
\begin{remark}
Penalizing inconsistent messages alone does not guarantee full implementation. For instance, with the mechanism that allocates nothing whenever agents disagree, a consistent but untruthful message profile—each claiming that the same agent has the highest valuation, even though he does not—survives iterative elimination, since for each agent, such an untruthful message is a best response to the others' messages: a deviation from this profile results in no allocation to anyone, since the message profile is then no longer consistent. Achieving robust efficiency therefore requires that the mechanism also creates strict incentives to report truthfully when messages are consistent, not merely when they conflict.
\end{remark}

The primary challenge here stems from the lack of monetary transfers. In settings with monetary transfers, the principal could design a transfer scheme to reward agents individually, ensuring that one agent's reward does not impact the utility of others. However, without monetary transfers, the principal's only available punitive tool is money burning. This approach is inherently limited, as money burning acts as a collective punishment. Because of the positive externalities in our setting, agents prefer that the good is allocated to someone\textemdash even if it is not themselves\textemdash rather than being destroyed.

Despite this difficulty, the following lemma shows that in the 2-agents case, there exists a simple allocation rule which rewards the agent with the higher valuation and punishes the lower agent, so that untruthful consistent message profiles are eliminated from the set of rationalizable message profiles.
\begin{lemma}\label{lem:two_agents}
Assume that $N = \{i,j\}$ and they are connected. The direct mechanism such that 
\begin{align*}
  g(\mathbf{m}) = \begin{cases}
    (1, 0), &\text{ if } m_{ii} \geq m_{ij} \text{ and } m_{ji} \geq m_{jj}\\
    \left(\beta_{ij}, \beta_{ji}\right), &\text{ if } m_{ii} \geq m_{ij} \text{ and } m_{ji} < m_{jj}\\
    (0,0), &\text{ if } m_{ii} < m_{ij} \text{ and } m_{ji} \geq m_{jj}\\
    (0,1), &\text{ if } m_{ii} < m_{ij} \text{ and } m_{ji} < m_{jj}
  \end{cases}
\end{align*}
where 
$$
(\beta_{ij}, \beta_{ji}) = \bigg(\frac{\alpha_i(1 - \alpha_j)}{1 - \alpha_i \alpha_j}, \frac{\alpha_j(1 - \alpha_i)}{1 - \alpha_i \alpha_j}\bigg)
$$
is robustly efficient.
\end{lemma}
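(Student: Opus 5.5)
Since $i$ and $j$ are connected, each observes the entire profile $\mathbf v=(v_i,v_j)$. Every admissible conjecture $\mu_i\in C_i(\theta_i)$ therefore puts probability one on the true valuations, and the only remaining uncertainty is about the opponent's message. The plan is to reduce the induced game to a $2\times 2$ complete-information game and run two rounds of elimination.

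\textbf{Step 1: reduce to a binary game.} The allocation depends on $m_i$ only through the binary statistic ``$m_{ii}\ge m_{ij}$'' (call it \emph{claim $i$}) versus ``$m_{ii}<m_{ij}$'' (\emph{claim $j$}). The same holds for $m_j$ with ``$m_{ji}\ge m_{jj}$''. All messages inducing the same claim are payoff-equivalent, so it suffices to work with the claims. I will also rewrite the utility as
\[
u_i(\mathbf x;\mathbf v)=v_ix_i+\alpha_i v_jx_j .
\]

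\textbf{Step 2: key payoff identities.} Set $D=1-\alpha_i\alpha_j$. Then
\[
1-\beta_{ji}=\frac{1-\alpha_j}{D},\qquad 1-\beta_{ij}=\frac{1-\alpha_i}{D}.
\]
For agent $i$ facing claim $j$ from $j$, the payoff gain of claim $i$ over claim $j$ is
\[
v_i\beta_{ij}+\alpha_i v_j\beta_{ji}-\alpha_i v_j=\frac{\alpha_i(1-\alpha_j)}{D}\,(v_i-v_j).
\]
Facing claim $i$ from $j$, the gain is $v_i-0=v_i$. Symmetrically, for agent $j$: facing claim $i$ from $i$, the gain of claim $j$ over claim $i$ is $\frac{\alpha_j(1-\alpha_i)}{D}(v_j-v_i)$; facing claim $j$ from $i$, it is $v_j$.

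\textbf{Step 3: two rounds of elimination.} Suppose first that $v_i>v_j$.
\begin{enumerate}
\item Both gains in Step 2 for agent $i$ are strictly positive, so claim $i$ strictly dominates claim $j$ pointwise against each opponent message. By linearity of the expected payoff in $\mu_i$, claim $j$ is never a best response to any conjecture in $C_i(\theta_i)$. Hence $S_i^1$ contains only claim-$i$ messages.
\item In round 2, agent $j$ must put probability one on $i$ sending claim $i$. Against that, claim $i$ beats claim $j$ by $\frac{\alpha_j(1-\alpha_i)}{D}(v_i-v_j)>0$. So only claim-$i$ messages survive for $j$.
\item Every rationalizable profile therefore yields $(1,0)=\mathbf e_{i^*}$.
\end{enumerate}
The case $v_j>v_i$ is symmetric. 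Now agent $j$ has the dominant claim $j$, with gains $v_j$ and $\frac{\alpha_j(1-\alpha_i)}{D}(v_j-v_i)$. Agent $i$ then best-responds with claim $j$, gaining $\frac{\alpha_i(1-\alpha_j)}{D}(v_j-v_i)$. The outcome is $(0,1)$. Ties cannot occur since $v_i\neq v_j$ is assumed. This also confirms $g(\theta(\mathbf v))=\mathbf e_{i^*}$.

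\textbf{Main obstacle.} The only nontrivial point is the algebra in Step 2. The choice of $\beta$ must make the disagreement outcome $(\beta_{ij},\beta_{ji})$ exactly calibrated: the higher agent strictly prefers insisting, yet the lower agent strictly prefers conceding once the higher agent insists. The identities $1-\beta_{ji}=(1-\alpha_j)/D$ and $1-\beta_{ij}=(1-\alpha_i)/D$ are what make both comparisons collapse to the sign of $v_i-v_j$. I would verify these first. Everything else is the standard observation that strict pointwise dominance implies never-best-response.
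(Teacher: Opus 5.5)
Your proposal is correct and follows the paper's proof. The higher-valuation agent's conceding messages are strictly dominated, with gains $v_i$ and $\beta_{ij}(v_i-v_j)$ against the opponent's two claims; once those messages are eliminated, the lower-valuation agent's self-claim loses by $\beta_{ji}(v_i-v_j)$. Your explicit remarks that each agent's conjectures pin down the true valuations and that messages matter only through the binary claim are implicit in the paper but are exactly what its argument relies on.
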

\begin{proof}
All proofs are relegated to the appendix.
\end{proof}
To see why this split works, compare agent $i$'s utility, $u_i(\mathbf x;\mathbf v) = v_ix_i + \alpha_i v_j x_j$, under two messages when agent $j$ claims to be highest: claiming himself highest too (triggering the punishment split $(\beta_{ij},\beta_{ji})$) versus conceding (letting $j$ take the whole good). The utility difference is
$$
  \big[v_i\beta_{ij} + \alpha_i v_j \beta_{ji}\big] - \alpha_i v_j = v_i\beta_{ij} - \alpha_i v_j(1-\beta_{ji}) = v_i\beta_{ij} - v_j\beta_{ij} = \beta_{ij}(v_i - v_j),
$$
where the last step uses the identity $\alpha_i(1-\beta_{ji}) = \beta_{ij}$ (immediate from the formulas for $\beta_{ij},\beta_{ji}$). Since $\beta_{ij}>0$, this difference is strictly positive exactly when $v_i>v_j$ and strictly negative exactly when $v_i<v_j$. This exact cancellation is what pins down $\beta_{ij}$.\footnote{Note that this mechanism is not robustly efficient if we allow for valuation profiles where $v_{1} = v_{2}$. In such cases, both agents are indifferent between allocating the entire good to the other agent and receiving the partial allocation $(\beta_{12}, \beta_{21})$. Consequently, untruthful but consistent message profiles can survive the iterative elimination process. This illustrates the fundamental difficulty of dealing with ties in valuations, which is why we focus our analysis on the generic case where all agents have strictly distinct valuations.}

Assuming agent $j$ claims to be highest, a higher $\alpha_i$ raises the externality agent $i$ sacrifices by claiming himself highest instead of conceding: conceding delivers him the full externality $\alpha_iv_j$, while claiming himself highest delivers only the fraction $\alpha_iv_j\beta_{ji}$ he captures through $j$'s reduced share $\beta_{ji}$. To keep agent $i$ willing to claim himself highest when he truly is, his private allocation $\beta_{ij}$ must rise to cover this larger gap — and fall as $\alpha_i$ falls.
As $\alpha_i\to0$, agent $i$'s utility from conceding is zero since he does not receive externalities, so $\beta_{ij}$ must also converge to zero: otherwise agent $i$ would prefer to claim himself highest even when his own valuation is lower. 
At $\alpha_i\to1$, agent $i$'s utility depends only on the total value delivered ($u_i\approx v_ix_i+v_jx_j$), so he always prefers that the good is allocated to the highest agent, and has no incentive to claim himself as the highest when he is not.\footnote{This underscores the interpretation of $\alpha_i$ as the degree of incentive alignment between the principal and agent $i$: at $\alpha_i\to0$ incentives are not aligned at all, and no split can incentivize agent $i$ to tell the truth, whereas at $\alpha_i\to1$ incentives are fully aligned. Note also that as $\alpha_i$ or $\alpha_j$ increase, the amount of money burning decreases. This also illustrates that $\alpha_i$ serves as a measure of the degree of incentive alignment between the principal and the agents, since the principal can elicit the true information with less money burning.}

Using this two-agent mechanism, we can show that it suffices to have two agents within the network who are connected to all other agents. Furthermore, we can demonstrate that this particular network structure is also necessary for the existence.
\begin{theorem}\label{theo:robust_iff}
A robustly efficient mechanism exists if and only if there are at least two agents who are connected to everyone.
\end{theorem}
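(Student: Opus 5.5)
The plan is to prove the two directions separately. Sufficiency is a construction that reduces the problem to a complete-information game between two hubs and reuses the split of Lemma~\ref{lem:two_agents}. Necessity is a best-response-closed-set argument showing that, without two hubs, some inefficient message profile is always rationalizable.

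\textbf{Sufficiency.} Let $c_1,c_2$ be connected to everyone, so each hub's type is the whole profile $\mathbf{v}$. I will use a mechanism that ignores all non-hub messages. From $m_{c}$ it extracts only the named top agent $a_c:=\arg\max_{j} m_{cj}$, and the allocation depends only on $(a_{c_1},a_{c_2})$:
\begin{enumerate}
\item If $a_{c_1}=a_{c_2}=k$, allocate $\mathbf{e}_k$.
\item If both hubs name themselves, use the split $(\beta_{c_1c_2},\beta_{c_2c_1})$ of Lemma~\ref{lem:two_agents}.
\item If $c$ names himself and $c'$ names some $k\neq c$, give $c$ a share $s$ and $k$ a share $t$ with $\alpha_c(1-t)=s$. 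With this choice $c$'s gain from claiming over conceding is exactly $s(v_c-v_k)$, by the same cancellation as in the Lemma.
\item The shares on the remaining disagreement cells, where neither hub names himself, will be chosen the same way.
\end{enumerate}
The aim is that each hub faces a game whose payoffs depend only on $\mathbf{v}$, which he knows. Naming the true $i^{*}$ should then be his strict best response to every message of the other hub.

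\textbf{Checking the hub game.} I will split into cases according to whether $i^{*}$ is $c_1$, $c_2$, or a third agent.
\begin{itemize}
\item When $i^{*}\neq c$, hub $c$'s payoff from any allocation not involving himself is $\alpha_c\sum_j v_jx_j$. He therefore strictly prefers the good to go wholly to $i^{*}$.
\item Claiming himself yields the gain $s(v_c-v_k)$, which is negative when $v_c<v_k$, so this deviation is strictly bad for him.
\item When $i^{*}=c$, claiming himself is strictly good for the same reason.
\end{itemize}
If truthful naming is strictly dominant for both hubs, one round of elimination leaves only messages with $a_c=i^{*}$. Every rationalizable profile then yields $\mathbf{e}_{i^{*}}$, and the requirement $g(\theta(\mathbf{v}))=\mathbf{e}_{i^{*}}$ holds as well.

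The main obstacle is designing the off-diagonal cells, where the hubs name two different third agents, so that dominance is strict for both hubs simultaneously. If strict dominance fails there, I will fall back on a two-round argument. In that version, a hub for whom some agent other than $i^{*}$ is named first eliminates naming himself, and the Lemma split then pins down the rest.

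\textbf{Necessity.} Suppose at most one agent $h$ is connected to everyone, and fix any direct mechanism with $g(\theta(\mathbf{v}))=\mathbf{e}_{i^{*}}$. Then there exist non-adjacent agents $j,k$, neither equal to $h$. I will pick two profiles $\mathbf{v}$ and $\mathbf{v}'$ that coincide except in $v_k$, with $i^{*}(\mathbf{v})=j$, $i^{*}(\mathbf{v}')=k$, and $v_k$ raised in $\mathbf{v}'$. At the true profile $\mathbf{v}$, consider the message profile $\theta(\mathbf{v}')$, which gives the good to $k$.
\begin{itemize}
\item Agents who do not observe $v_k$ send identical messages in $\theta(\mathbf{v})$ and $\theta(\mathbf{v}')$, so their messages are trivially justified.
\item Agent $k$ receives the good himself, so he is happy with the outcome.
\item Non-hub neighbours of $k$ do not observe $j$. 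They may believe the world is $\mathbf{v}'$ with everyone truthful, and then their message is a best response, exactly as in the intuition stated before the theorem.
\end{itemize}
The set $\prod_i\{\theta_i(\mathbf{v}),\theta_i(\mathbf{v}')\}$, together with these conjectures, should then be closed under best responses. A best-response-closed set is contained in $S_i$, so the inefficient profile $\theta(\mathbf{v}')$ is rationalizable at $\mathbf{v}$.

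The delicate step is the single hub $h$, who sees both $v_j$ and $v_k$. For $h$ I will try two routes. The first is to choose the profiles so that $h$'s report coincides in both: instead of raising $v_k$, lower $v_j$ within a range that $h$'s message does not pivot on. The second is to justify $h$'s misreport by a conjecture under which the others' surviving messages make $h$ non-pivotal. If $n=2$ without an edge, the argument degenerates to the empty network, which is handled directly.
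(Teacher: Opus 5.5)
Both directions have a genuine gap.

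\textbf{Sufficiency.} The problem is not only the unspecified off-diagonal cells. Your case~3 already breaks the construction, whatever case~4 is.

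Fix $t<1$ and $s=\alpha_c(1-t)>0$. Take $i^*=q$, a third agent with $v_q$ slightly above $v_c$, and set $v_c>v_{c'}$, $v_c>\alpha_c v_q$, with all other valuations small. Then both hubs naming $c$ is a complete-information Nash equilibrium of the hubs' game:
\begin{itemize}
\item $c$ receives $\mathbf e_c$, which is his favorite allocation.
\item $c'$ gets $\alpha_{c'}v_c$ from conceding. Naming $q$ gives $\alpha_{c'}(sv_c+tv_q)$, which is smaller because $1-s=(1-\alpha_c)+\alpha_c t>t$ and $v_q\approx v_c$.
\item If $c'$ names himself he gets the split of Lemma~\ref{lem:two_agents}, which loses $\beta_{c'c}(v_c-v_{c'})>0$.
\end{itemize}
The mechanism ignores non-hub messages and the hubs observe everything. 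So point conjectures on this profile survive every round, and $\mathbf e_c\neq\mathbf e_{i^*}$ is rationalizable.

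Your cancellation $s(v_c-v_k)$ disciplines only the self-claimer. The other hub must also strictly prefer naming $q$ to conceding, i.e.\ $tv_q>(1-s)v_c$ for all $v_q>v_c$. This forces $s+t\ge 1$. But then the claimer's gain from claiming against $q$ is $s(v_c-\alpha_c v_q)$, which is positive when $v_c>\alpha_c v_q$ (or zero if $s=0$). No fixed shares do both, so no mechanism reading only the named top agents can work.

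The paper's rule~4 is the missing device: a report-dependent share.
\begin{itemize}
\item The self-claiming hub gets nothing.
\item The third agent named by the other hub gets $m_{12}/m_{1h_1}+\epsilon<1$.
\item At a truthful ratio, the non-claimer then earns $\alpha(v_c+\epsilon v_q)>\alpha v_c$.
\item Since the share is below one, the claimer earns strictly less than the $\alpha v_q$ he would get by agreeing.
\end{itemize}
Together with the half-half rule~1, the proof is a three-step elimination, not one-round dominance: first the higher hub never names the lower hub, then the lower hub names $i^*$, then the higher hub does.

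\textbf{Necessity.} The target profile $\theta(\mathbf v')$ is the wrong one, so neither of your routes for the hub can succeed.

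Route~1 fails because $h$ observes both $v_j$ and $v_k$. His type must therefore differ between any two profiles that reverse their ranking.

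Route~2 fails for a general $g$. Consider the rule that gives the whole good to whoever the hub names first. It satisfies $g(\theta(\mathbf v))=\mathbf e_{i^*}$. Under it, naming $k$ is strictly dominated by naming $j$ (since $\alpha_h v_k<\alpha_h v_j$), so $\theta(\mathbf v')$ is never rationalizable at $\mathbf v$.

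Your justification for $k$'s non-hub neighbours also fails. They observe $v_k$, so a conjecture on $\mathbf v'$ violates the observational-consistency restriction on conjectures. Moreover, some of them may observe $j$ as well.

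The missing idea is to let the inefficiency benefit the hub himself.
\begin{itemize}
\item The paper takes $\mathbf v^*$ with the hub (agent~1) second behind agent~2, $v_1^*>\alpha_1v_2^*$, and $\alpha_iv_1^*>v_i^*$ for $i\neq 1,2$.
\item It targets the profile where every agent reports agent~2's value as a common $\overline w_2<v_1^*$. The good then goes to agent~1, his favorite allocation, so his misreport is automatically a best response.
\item Each non-hub $i$'s report is justified separately, via Lemma~\ref{lem:nec_1}, by a complete-information equilibrium in a world consistent with $\theta_i(\mathbf v^*)$. In that world the recipient is everybody's favorite: agent~1 if $i$ does not see agent~2, and an unobserved $k(i)\notin N_i$ if he does. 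Hence the no-deviation check does not depend on $g$.
\item Rationalizability is verified agent by agent, so these worlds need not be mutually consistent. The common $\overline w_2$ makes the reports assemble into $\theta(\hat{\mathbf v})$ with $g(\theta(\hat{\mathbf v}))=\mathbf e_1$.
\end{itemize}
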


The intuition of the necessity is as follows: with only one central agent, every other agent fails to observe someone, so—as the example below shows—he can always rationalize an untruthful message by believing the good will go to that unobserved agent regardless of what he reports.\vspace{3mm}
\begin{example}[(Necessity of two central agents)]\label{ex:necessity}
Consider the network in Figure \ref{fig:one_center}, where only agent 1 is connected to everyone, and let the true valuation profile satisfy \(v_2>v_1>v_3>v_4\). We show that no robustly efficient mechanism can exist.
\begin{figure}[htbp]
\centering
\begin{tikzpicture}[every node/.style={circle, draw, fill=gray!25, inner sep=1.5pt, font=\small}]
  \node (4) at (0,2.2) {4};
  \node (1) at (0,1.1) {1};
  \node (2) at (-0.75,0) {2};
  \node (3) at (0.75,0) {3};
  \draw (1) -- (4) (1) -- (2) (1) -- (3) (2) -- (3);
\end{tikzpicture}
\caption{Four-agent network where $v_2>v_1>v_3>v_4$ with only one agent connected to everyone. }\label{fig:one_center}
\end{figure}
The key observation is that any non-central agent fails to observe at least one other agent. For instance, agent 2 does not observe agent 4. Hence, agent 2 may hold a belief under which agent 4 has the highest valuation and all other agents send messages leading the mechanism to allocate the good to agent 4. Under such a belief, agent 2 has no incentive to correct a message that ranks agent 1 above agent 2, since he still expects the good to go to agent 4. The same reasoning applies to agent 3.

As a result, the message profile in which the non-central agents support agent 1 as the highest remains rationalizable. Agent 1 can then also report himself as highest, so the mechanism allocates the good to agent 1, even though agent 2 is truly highest. This contradicts robust efficiency.\vspace{3mm}
\end{example}

This logic extends beyond this example to any network with fewer than two universally connected agents. If fewer than two agents are connected to everyone, then at least $n-1$ agents each fail to observe someone else, just as agents 2 and 3 above fail to observe agent 4, and agent 4 fails to observe agents 2 and 3. Each such agent can believe that the agent he does not observe has the highest valuation, so that the allocation he expects—the entire good going to that unobserved agent—is already utility-maximizing for him. Since he has no incentive to induce any other allocation, he also has no incentive to correct his report of the ranking among the agents he does observe: an untruthful report of that ranking is then a best response too, as it leaves this already-optimal expected outcome unchanged. Robust efficiency therefore requires that at least two agents be connected to everyone. This is exactly the target identified in the introduction: a policymaker able to shape the network should aim not for one well-connected agent, but two.

Sufficiency builds on Lemma \ref{lem:two_agents}: two agents connected to everyone already know every agent's true valuation, so applying Lemma \ref{lem:two_agents}'s punishment device to these two centers alone makes truthful reporting dominant for each of them, and their reports pin down the efficient allocation regardless of what anyone else claims. We construct such a mechanism explicitly. Let $\gamma^{ij} \in X$ be the allocation profile such that $\gamma^{ij}_{i} = \beta_{ij}$, $\gamma^{ij}_{j} = \beta_{ji}$, and $\gamma^{ij}_{k} = 0$ for $k \neq i,j$.
\vspace{2mm}\\
\textbf{The two-center mechanism}\\
Let $C \subseteq N$ be the set of agents who are connected to everyone. Choose two agents from $C$, and without loss of generality let them be agents $1$ and $2$. Let
\begin{align*}
  h_{i}(m_{i}) &= \min\big\{j \in N_{i} : m_{ij} = \max_{k \in N_{i}}m_{ik} \big\}\\
  \hat{h}_{i}(m_{i}) &= \min\big\{j \in N_{i}\setminus\{i\} : m_{ij} = \max_{k \in N_{i}\setminus\{i\}}m_{ik} \big\}
\end{align*}
In words, $h_{i}(m_{i})$ is the agent who has the lower index among the neighbors of $i$ who are ranked as the highest by $i$, and $\hat{h}_{i}(m_{i})$ is of the same definition but excluding agent $i$ himself. Let the allocation rule $g(\mathbf{m}) = g(m_{1}, m_{2})$ be the following:
\begin{enumerate}
\item If $h_{i} \neq i$ for all $i \in \{1,2\}$ and $h_{1} \neq h_{2}$, then $g_{h_1}(m_{1},m_{2}) = g_{h_2}(m_{1},m_{2}) = \frac{1}{2}$. If $h_1 = h_2 \equiv h^*$, then $g_{h^{*}}(m_{1},m_{2}) = 1$.
\item If $h_{1} = 1$ and $h_{2} = 2$, then $g(m_{1},m_{2}) = \gamma^{12}$.
\item If $h_{i} = 1$ for all $i \in \{1,2\}$, then $g(m_{1},m_{2}) = \mathbf{e}_{1}$. If $h_{i} = 2$ for all $i \in \{1,2\}$, then $g(m_{1},m_{2}) = \mathbf{e}_{2}$.
\item Otherwise,
\begin{itemize}
  \item[-] if $h_{1} \neq 1,2$ and $h_{2} = 2$, then $g_{h_1}(m_{1},m_{2}) = m_{12}/m_{1h_{1}} + \epsilon_{1}(\mathbf{m})$
  \item[-] if $h_{1} = 1$ and $h_{2} \neq 1,2$, then $g_{h_2}(m_{1},m_{2}) = m_{21}/m_{2h_{2}} + \epsilon_{2}(\mathbf{m})$
\end{itemize}
with $\epsilon_{1}(\mathbf{m}) > 0$ and $\epsilon_{2}(\mathbf{m}) > 0$ small enough so that $m_{12}/m_{1h_{1}} + \epsilon_{1}(\mathbf{m}) < 1$ and $m_{21}/m_{2h_{2}} + \epsilon_{2}(\mathbf{m}) < 1$,
and $g_i(m_{1},m_{2}) = 0$ for all other $i$.
\end{enumerate}\vspace{2mm}

The mechanism uses only the two central agents' messages, since together they already know every agent's true valuation. When their reports disagree on who is highest between themselves, the punishment split $(\beta_{12},\beta_{21})$ from Lemma \ref{lem:two_agents} applies, making truthful reporting dominant for exactly the same reason as before.  The remaining case is where some other agent has a higher valuation than both centers: the lower-valuation center still prefers that this outside agent receive the good rather than let the higher-valuation center obtain it, so he has no incentive to falsely report the latter as highest. This, in turn, leaves the higher-valuation center with no incentive to claim himself highest either, since his claim would no longer be supported by the lower-valuation center—so he reports the true highest agent instead, to avoid the money burning.\footnote{The money burning in this case can be driven by point 4 of the mechanism.} Together, these two forces make only honest reporting rationalizable for both centers, which is what pins down the efficient allocation.

\section{Mechanism with weak robustness}\label{sec:weak_rob_eff}
As Example \ref{ex:necessity} illustrates, the impossibility in Theorem \ref{theo:robust_iff} arises because a non-central agent can always believe the good will go to an agent he does not observe, making an untruthful message appear payoff-maximizing to him even though it is not. This section shows how to rule this out, for networks with at least one central agent, by relaxing robustness: in each round of the elimination process, we also remove messages that are weakly dominated. Because the central agent observes everyone, he has no unobserved agent to hide behind, so we can construct the mechanism such that his untruthful messages are weakly dominated and are eliminated already in the first round. Once non-central agents can infer that the center reports truthfully, their own incentive to misreport disappears as well, breaking the impossibility.

Let \(T_i^0(\theta_i) := M_i\) and the true valuation be $\mathbf{v}$. For $k\ge 1$, define
\begin{align*}
  F_i^{k-1}(\theta_i) := \Big\{(m_{-i},v'_{-i}) \in M_{-i}\times V_{-i} \;\Big|\; &(v'_j)_{j\in N_i \setminus \{i\}} = (v_{j})_{j\in N_i \setminus \{i\}} \text{ and } \\
  &m_{-i}\in T_{-i}^{k-1}(\theta_{-i}(v_i,v'_{-i}))\Big\}
\end{align*}
\begin{align*}
T_i^k(\theta_i)
:=
\bigg\{
&m_i\in M_i\;|\;\exists \mu_i\in \Delta(M_{-i}\times V_{-i})\text{ s.th. }\\
&\mu_i(F_i^{k-1}(\theta_i)\mid \theta_i)=1,
\;\\
&\mu_i(U\mid\theta_i) > 0 \text{ for every nonempty relatively open } U \subseteq F_i^{k-1}(\theta_i),
\text{ and }\\
&m_i\in \BR_i(\theta_i,\mu_i)
\bigg\}.
\end{align*}
$F_i^{k-1}(\theta_i)$ plays exactly the role of $E_i^{k-1}(\theta_i)$ in Section~\ref{sec:rob_eff}, built from $T_{-i}^{k-1}$ instead of $S_{-i}^{k-1}$: the condition $(v'_j)_{j\in N_i \setminus \{i\}} = (v_{j})_{j\in N_i \setminus \{i\}}$ requires the valuation profile to be consistent with what agent $i$ observes, while $m_{-i}\in T_{-i}^{k-1}(\theta_{-i}(v_i,v'_{-i}))$ requires the opponents' message profile to have survived the previous round.

The definition of $T_i^k(\theta_i)$ mirrors that of $S_i^k(\theta_i)$, with one additional requirement: rather than merely assigning probability 1 to $F_i^{k-1}(\theta_i)$, the conjecture must have full support on it, that is, it must assign positive probability to every nonempty relatively open subset of $F_i^{k-1}(\theta_i)$.\footnote{We state full support in this form rather than as $\operatorname{supp}(\mu_i) = F_i^{k-1}(\theta_i)$ because a support is always a closed set, while $F_i^{k-1}(\theta_i)$ need not be closed.} This is the only difference between the two constructions, and it is what turns elimination of never-best responses into elimination of weakly dominated messages, in the sense of Lemma \ref{lem:WD_elimination} below.

Let $T_i(\theta_i)=\bigcap_{k\ge 0} T_i^k(\theta_i)$.
We say that a message profile ${\bf m}$ is weakly rationalizable if ${\bf m}\in \prod_{i\in N} T_i(\theta_i)$. 

\begin{definition}[Weakly robust efficiency]
Given the network ${\bf G}$ and $(\alpha_{i})_{i \in N}$, the allocation rule $g$ is weakly robustly efficient if, for all ${\bf v}$, $g({\bf m})$ is efficient for every weakly rationalizable message profile ${\bf m}$.
\end{definition}

Since message and valuation spaces are continua, a full-support conjecture may assign probability zero to any single contingency; what full support does rule out is the following.
\begin{lemma}\label{lem:WD_elimination}
  Fix $k \geq 1$, $i \in N$, and $\theta_{i}$, and let $m_{i}, m_{i}' \in M_{i}$. Suppose that, with $\mathbf{v}' := (v_{i},v'_{-i})$, we have $\pi_{i}(m_{i}',m_{-i};\mathbf{v}') \geq \pi_{i}(m_{i},m_{-i};\mathbf{v}')$ for every $(m_{-i},v'_{-i}) \in F_{i}^{k-1}(\theta_{i})$, with strict inequality on a nonempty relatively open subset of $F_{i}^{k-1}(\theta_{i})$. Then $m_{i} \notin T_{i}^{k}(\theta_{i})$.
\end{lemma}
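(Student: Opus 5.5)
Let $U$ be the nonempty relatively open subset of $F_i^{k-1}(\theta_i)$ on which the inequality is strict. We argue by contradiction: suppose $m_i \in T_i^k(\theta_i)$. Then, by the definition of $T_i^k(\theta_i)$, there is a conjecture $\mu_i$ with $\mu_i(F_i^{k-1}(\theta_i)\mid\theta_i)=1$, with positive mass on every nonempty relatively open subset of $F_i^{k-1}(\theta_i)$, and with $m_i \in \BR_i(\theta_i,\mu_i)$. We will show that $m_i'$ then yields a strictly higher expected payoff than $m_i$ under $\mu_i$, contradicting that $m_i$ is a best response.

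First, write the expected payoff difference as
\begin{align*}
\mathbb{E}\pi_i(m_i';\theta_i,\mu_i)-\mathbb{E}\pi_i(m_i;\theta_i,\mu_i)
=\int_{F_i^{k-1}(\theta_i)} \Delta(m_{-i},v'_{-i})\,d\mu_i(m_{-i},v'_{-i}\mid\theta_i),
\end{align*}
where $\Delta(m_{-i},v'_{-i}) := \pi_i(m_i',m_{-i};\mathbf v')-\pi_i(m_i,m_{-i};\mathbf v')$ and $\mathbf v'=(v_i,v'_{-i})$. Restricting the integral to $F_i^{k-1}(\theta_i)$ is legitimate because $\mu_i$ puts full mass there. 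Integrability and measurability are harmless: utilities are bounded in $[0,1]$, being linear in allocations with $v,x\in[0,1]$. We take $g$ to be measurable, as is implicit in the definition of expected payoff.

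Next, split the integral over $U$ and $F_i^{k-1}(\theta_i)\setminus U$. On the complement the integrand is nonnegative by hypothesis, so that piece contributes at least zero. On $U$ the integrand is strictly positive pointwise, and $\mu_i(U\mid\theta_i)>0$ by the full-support requirement. Writing $U=\bigcup_n U_n$ with $U_n:=\{\Delta>1/n\}\cap U$, continuity of measure gives some $n$ with $\mu_i(U_n\mid\theta_i)>0$. Hence the integral over $U$ is at least $\mu_i(U_n\mid\theta_i)/n>0$. The total difference is therefore strictly positive, so $m_i\notin\BR_i(\theta_i,\mu_i)$, a contradiction.

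The only delicate step is this last one. Full support yields positive mass on the open set $U$ but not a uniform lower bound on $\Delta$ there; the countable-union argument above supplies the needed strictness without any continuity of $g$.
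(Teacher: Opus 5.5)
Your proof is correct and is essentially the paper's argument. Both write the expected payoff gain of $m_i'$ over $m_i$ as the integral over $F_i^{k-1}(\theta_i)$ of a nonnegative function that is strictly positive on $U$, and use the full-support requirement to get $\mu_i(U\mid\theta_i)>0$. Your countable-union step only makes explicit the standard measure-theoretic fact that such an integral is strictly positive, which the paper leaves implicit.
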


We propose a simple mechanism that is weakly robustly efficient when there is one agent who is connected to everyone. This mechanism applies to certain networks where it is known by Theorem \ref{theo:robust_iff} that a robustly efficient mechanism does not exist. \vspace{2mm}

\noindent{\bf The one-center mechanism}\\
Let agent 1 be the central agent who is connected to everyone. Let $h_{i}(m_{i})$ and $\hat{h}_{i}(m_{i})$ be as in the mechanism in Section~\ref{sec:rob_eff}.
Let the allocation rule $g(\cdot)$ be the following.
\begin{itemize}
\item If $h_{1}(m_{1}) = 1$, then 
\begin{itemize}
\item if, for all $i \neq 1$, $m_{i1} \geq m_{ii}$, then $g(\mathbf{m}) = \mathbf{e}_{1}$,
\item otherwise, $g(\mathbf{m}) = \gamma^{1\hat{h}_{1}(m_{1})}$.
\end{itemize}
\item Otherwise, 
\begin{itemize}
\item if, for all $i \neq 1$, $m_{i1} \geq m_{ii}$, then $g(\mathbf{m}) = \mathbf{e}_{1}$
\item otherwise, $g(\mathbf{m}) = \mathbf{e}_{h_{1}(m_{1})}$
\end{itemize}
\end{itemize}\vspace{4mm}

To allocate the entire good to the central agent, all others must report that he is the highest. If some agent instead claims to outrank the center, the mechanism gives the allocation of $\gamma^{ij}$ such that $i$ and $j$ are the center and the highest ranked non-central agent by the center. This makes misreporting unattractive for the central agent: if he observes a higher-valued agent, he prefers that agent to receive the full good rather than a split allocation, while truthful reporting is only weakly enforced because the center still receives the full good whenever all others rank him highest. As a result, we can show that this mechanism is weakly robustly efficient.
\begin{proposition}\label{prop:weak_robust}
The one-center mechanism is weakly robustly efficient. Hence, there exists a weakly robustly efficient mechanism if there is one agent who is connected to everyone.
\end{proposition}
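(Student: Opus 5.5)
The plan is to run the iterated elimination defining $T_i^k$ in three rounds, using Lemma \ref{lem:WD_elimination} at each step. Round one pins down the center's top-ranked agent $h_1(m_1)$. Round two pins down each non-central agent's concession decision, that is, whether $m_{i1}\geq m_{ii}$. The allocation then follows.

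\textbf{Round 1 (center).} The center observes the whole profile $\mathbf v$. If every $i\neq 1$ concedes, the outcome is $\mathbf e_1$ whatever $m_1$ says, so $m_1$ matters only on the event $D$ that some agent does not concede. I will show that on $D$ any $m_1$ with $h_1(m_1)=i^*$ does strictly better than any $m_1$ with $h_1(m_1)\neq i^*$. The set of opponents' profiles in $D$ is nonempty and relatively open, so Lemma \ref{lem:WD_elimination} removes every $m_1$ with $h_1(m_1)\neq i^*$ from $T_1^1$.

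\emph{Case $i^*=1$.} Reporting $h_1=1$ and $\hat h_1=j$ yields $\gamma^{1j}$, while reporting $h_1=j$ yields $\mathbf e_j$. The identity $\alpha_1(1-\beta_{j1})=\beta_{1j}$, used after Lemma \ref{lem:two_agents}, gives the payoff difference $\beta_{1j}(v_1-v_j)>0$.

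\emph{Case $i^*=j\neq 1$.} Reporting $h_1=j$ gives $\alpha_1 v_j$. Reporting another $h_1=l\notin\{1,j\}$ gives $\alpha_1 v_l<\alpha_1 v_j$. Reporting $h_1=1$ with $\hat h_1=k$ gives $\beta_{1k}v_1+\alpha_1\beta_{k1}v_k$. Writing $\alpha_1 v_j=\beta_{1k}v_j+\alpha_1\beta_{k1}v_j$ and using $v_j>v_1$ and $v_j\geq v_k$ shows this last payoff is strictly below $\alpha_1 v_j$.

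\textbf{Round 2 (non-central agents).} Each non-central agent $i$ observes $v_1$. He then knows that every surviving center message has $h_1=i^*$, although he may not know $i^*$. His only relevant choice is whether to concede; the remaining coordinates of $m_i$ never affect $g$. I plan to show that truthful concession weakly dominates, with strict dominance on a relatively open set.

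\emph{Case $v_i<v_1$, and $i$ refuses.} This matters only if every other agent concedes. If the center reports $h_1=1$, refusing turns $\mathbf e_1$ into $\gamma^{1\hat h_1}$. When $\hat h_1=i$, the loss is $\beta_{i1}(v_1-v_i)>0$. When $\hat h_1=k\neq i$, agent $i$ gets $\alpha_i(\beta_{1k}v_1+\beta_{k1}v_k)<\alpha_i v_1$, because $\beta_{1k}+\beta_{k1}<1$. If instead the center reports $h_1=j\neq1$, refusing yields $\mathbf e_j$ with $v_j>v_1$ (since $h_1=i^*$), so refusing is weakly better there. Hence conceding is not dominant in general.

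I should therefore only establish the weaker property I actually need. When $i^*=1$, no agent refuses in surviving play. This is enough, because then every non-central agent has $v_i<v_1$. Every surviving center message has $h_1=1$, and that event has full weight under the conjectures that agent $i$ can entertain given $v_1$ and $v_{N_i}$? Not quite: $i$ cannot see whether some unobserved agent outranks $1$.

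\textbf{Expected main obstacle.} This uncertainty is the hardest step. Agent $i$ with $v_i<v_1$ may place positive probability on $h_1=j\neq 1$ and then prefer to refuse.

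I will resolve it by reversing the required property. Efficiency does not need low agents to concede; it needs two things. First, when $i^*=1$, the outcome is $\mathbf e_1$ or $\gamma^{1\cdot}$. Second, when $i^*=j\neq1$, someone refuses.

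For the second requirement, agent $j$ (with $v_j>v_1$) conceding is weakly dominated. If the center reports $h_1=j$ and all others concede, refusing gives $v_j>\alpha_j v_1$. Otherwise conceding and refusing give the same outcome. So $j$ always refuses, and the outcome is $\mathbf e_j$.

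For the first requirement, a refusal would produce the inefficient $\gamma^{1\hat h_1}$. So I must exclude refusals when $i^*=1$, and the obstacle is exactly the agents with $v_i<v_1$ who do not observe all of $N$. I would handle this by refining the elimination on conditional events. For agent $i$, refusing is dominated on the event $\{h_1=1\}$. On $\{h_1=j\}$, agent $i$ weakly prefers refusal only if $j$ is not already refusing, and in round 2 agent $j$ does refuse. In round 3, given $j$'s refusal, agent $i$'s choice is irrelevant on $\{h_1=j\}$, so refusing is strictly worse on the open set $\{h_1=1\}$ and equal elsewhere. Lemma \ref{lem:WD_elimination} then eliminates refusal by every $i$ with $v_i<v_1$. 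With the center truthful about $h_1$, all high agents refusing and all low agents conceding, every weakly rationalizable profile yields $\mathbf e_{i^*}$.
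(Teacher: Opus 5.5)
Your final argument is the paper's three-step proof: round 1 removes the center's messages with $h_1\neq i^*$; round 2 removes conceding for agents with $v_i>v_1$; and round 3 removes refusing for agents with $v_i<v_1$ when $i^*=1$, their message being irrelevant when $i^*\neq 1$ because the center names $i^*$ and $i^*$ refuses. A few claims are overstated but harmless: in round 1 with $i^*=1$ the dominating message needs $\hat h_1$ equal to the deviation's top agent; in round 2 the outcomes are not the same when the center names an unobserved $l$ above $j$ and all others concede, though they still favor refusing since $\alpha_j v_l>\alpha_j v_1$; and in round 3 strictness holds only where, in addition, every other peripheral agent concedes.
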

We illustrate the mechanism and its incentive scheme in the example below by using the simple network of three agents.\vspace{2mm}
\begin{example}
Consider a network with \(N=\{1,2,3\}\), where agent 1 is connected to both agents 2 and 3, while agents 2 and 3 are not connected to each other. Table \ref{tab:3_agents} illustrates the allocation rule of the one-center mechanism.
\begin{table}
  \centering
  \begin{subtable}[t]{0.48\textwidth}
    \centering
    \begin{tabular}{*{4}{c|}}
      \multicolumn{2}{c}{} & \multicolumn{2}{c}{$m_3$}\\\cline{3-4}
      \multicolumn{1}{c}{} &  & $m_{31} \geq m_{33}$ & $m_{31} < m_{33}$ \\\cline{2-4}
      \multirow{2}*{$m_2$} & $m_{21} \geq m_{22}$ & $(1,0,0)$ & $(\beta_{12}, \beta_{21}, 0)$ \\\cline{2-4}
                           & $m_{21} < m_{22}$   & $(\beta_{12}, \beta_{21}, 0)$ & $(\beta_{12}, \beta_{21}, 0)$ \\\cline{2-4}
    \end{tabular}
    \vspace{0.3em}
    \subcaption{$m_1: h_{1}(m_{1}) = 1, \hat{h}_{1}(m_{1}) = 2$}\label{subtab:SR1}
  \end{subtable}\hfill
  \begin{subtable}[t]{0.48\textwidth}
    \centering
    \begin{tabular}{*{4}{c|}}
      \multicolumn{2}{c}{} & \multicolumn{2}{c}{$m_3$}\\\cline{3-4}
      \multicolumn{1}{c}{} &  & $m_{31} \geq m_{33}$ & $m_{31} < m_{33}$ \\\cline{2-4}
      \multirow{2}*{$m_2$} & $m_{21} \geq m_{22}$ & $(1,0,0)$ & $(\beta_{13}, 0, \beta_{31})$ \\\cline{2-4}
                           & $m_{21} < m_{22}$   & $(\beta_{13}, 0, \beta_{31})$ & $(\beta_{13}, 0, \beta_{31})$ \\\cline{2-4}
    \end{tabular}
    \vspace{0.3em}
    \subcaption{$m_1: h_{1}(m_{1}) = 1, \hat{h}_{1}(m_{1}) = 3$}\label{subtab:SR2}
  \end{subtable}

  \vspace{0.5em}

  \begin{subtable}[t]{0.48\textwidth}
    \centering
    \begin{tabular}{*{4}{c|}}
      \multicolumn{2}{c}{} & \multicolumn{2}{c}{$m_3$}\\\cline{3-4}
      \multicolumn{1}{c}{} &  & $m_{31} \geq m_{33}$ & $m_{31} < m_{33}$ \\\cline{2-4}
      \multirow{2}*{$m_2$} & $m_{21} \geq m_{22}$ & $(1,0,0)$ & $(0, 1, 0)$ \\\cline{2-4}
                           & $m_{21} < m_{22}$   & $(0, 1, 0)$ & $(0, 1, 0)$ \\\cline{2-4}
    \end{tabular}
     \vspace{0.3em}
    \subcaption{$m_1: h_{1}(m_{1}) = 2$}\label{subtab:PR1}
  \end{subtable}\hfill
  \begin{subtable}[t]{0.48\textwidth}
    \centering
    \begin{tabular}{*{4}{c|}}
      \multicolumn{2}{c}{} & \multicolumn{2}{c}{$m_3$}\\\cline{3-4}
      \multicolumn{1}{c}{} &  & $m_{31} \geq m_{33}$ & $m_{31} < m_{33}$ \\\cline{2-4}
      \multirow{2}*{$m_2$} & $m_{21} \geq m_{22}$ & $(1,0,0)$ & $(0, 0, 1)$ \\\cline{2-4}
                           & $m_{21} < m_{22}$   & $(0, 0, 1)$ & $(0, 0, 1)$ \\\cline{2-4}
    \end{tabular}
     \vspace{0.3em}
    \subcaption{$m_1: h_{1}(m_{1}) = 3$}\label{subtab:PR2}
  \end{subtable}
  \caption{The allocation rule in the one-center mechanism for the three-agent case, where agent 1 is connected to agents 2 and 3, but agent 2 and 3 are not connected to each other. Each panel represents the allocation table given agent 1's message.}\label{tab:3_agents}
\end{table}

The example shows how the iterative elimination works. First, the center eliminates weakly dominated messages. If agent 1 is the highest, then the message ranking himself first weakly dominates messages that rank another agent first; if instead some agent \(i\) is higher than agent 1, then the message ranking \(i\) first weakly dominates the others. Thus, after the first round, only messages of agent 1 that truthfully identify the highest agent survive.

Given this, the peripheral agents' incentives become straightforward. An agent who observes that his own valuation exceeds agent 1's weakly prefers to report this truthfully, while an agent who observes that agent 1 exceeds him cannot profitably overturn the allocation by misreporting. Hence the surviving message profiles yield the efficient allocation. This illustrates why weak robustness can be achieved with a single center: the weak elimination step removes the central agent's misleading reports first, after which the remaining agents' truthful reports are sustained.
\end{example}\vspace{3mm}

As the above example illustrates, the process of iterative elimination begins with the central agent, who always has a weakly dominated message based on his observations. This fact allows us to eliminate the possibility of the central agent choosing an untruthful message to falsely claim that he has the highest valuation. In the context of (strictly) robust efficiency, the central agent may hold a conjecture that he is certain that others will send untruthful messages. By definition of the set $T_{i}(\theta_{i})$, these beliefs are excluded, and therefore we can construct a weakly robustly efficient mechanism.

\section{Beyond Networks with Two Universally Connected Agents}\label{sec:general_networks}
Theorem \ref{theo:robust_iff} restricts the two-hub construction to networks with at least two agents connected to everyone. We now ask what can still be said when this condition fails. The natural idea is to partition the agents into smaller groups --- or, more generally, to select any collection of possibly overlapping, not-necessarily-exhaustive groups --- each of which satisfies Theorem \ref{theo:robust_iff}'s own requirement internally, and to run the two-center construction behind Theorem \ref{theo:robust_iff} (built on the two-agent mechanism of Lemma \ref{lem:two_agents}) separately within each group. This section formalizes that idea and gives a worst-case guarantee on how far the resulting allocation can be from efficient.\footnote{We present the construction for Theorem \ref{theo:robust_iff} for expositional purposes rather than a limitation of the idea itself, since the same covering logic applies to Proposition \ref{prop:weak_robust}'s weakly robust efficiency, covering the network by single-hub neighborhoods instead of two-hub blocks. We develop that analogue in the online appendix.} 

A group of agents should count as verifiable if the principal can apply, within that group alone, the same cross-checking device that proves Theorem \ref{theo:robust_iff}: two members who each observe everyone else in the group, whose reports about the rest of the group can be checked against each other exactly as in the two-center mechanism (Lemma \ref{lem:two_agents}'s device, applied to the whole group rather than just its two hubs).

\begin{definition}[Verifiable block]\label{def:verifiable_block}
A subset $S \subseteq N$ is \textit{verifiable} if either (i) $|S|=1$; or (ii) there exist two agents $i,j\in S$, $i\neq j$, called \textit{hubs} of $S$, such that $S \subseteq N_i$ and $S \subseteq N_j$. A collection $\mathcal B = \{B_1,\dots,B_K\}$ of subsets of $N$ is \textit{verifiable} if every $B_k$ is.\footnote{The sets in $\mathcal B$ need not be disjoint, and need not cover $N$.}
\end{definition}

Condition (ii) is Theorem \ref{theo:robust_iff}'s own requirement, restricted to the induced subnetwork $\mathbf G[B_k]$ (agents in $B_k$, with $\mathbf G$'s links between them): the hubs of $B_k$ can run the same two-center construction as Theorem \ref{theo:robust_iff} on $B_k$, identifying its highest-valuation member without input from outside the group. Condition (ii) always requires $|S|\geq2$, so condition (i) is what makes an isolated agent's own singleton block verifiable. Two verifiable blocks may also overlap, including sharing a hub, a case we return to below. Hubs must belong to $S$ itself because Lemma \ref{lem:two_agents}'s discipline runs through the hubs' own shares; an outside verifier with no stake in $S$ would have nothing to be disciplined by.\vspace{2mm}

\noindent\underline{\textit{Block mechanism}}\\
Fix a verifiable collection $\mathcal B = \{B_1,\dots,B_K\}$ and weights $y_1,\dots,y_K \geq 0$ with $\sum_k y_k \leq 1$, fixed independently of any messages. For each $k$ with $|B_k|\geq 2$, fix a pair of hubs $i_k,j_k \in B_k$ as in Definition \ref{def:verifiable_block}; only $i_k$ and $j_k$ send messages to block $k$, denoted $m_{i_k}^{B_k}$ and $m_{j_k}^{B_k}$, reporting what each observes about $B_k$, elicited independently across the blocks in which their sender serves as hub. For each $k$ with $|B_k|\geq 2$, let $\tilde g^{B_k}(\mathbf m^{B_k})\in[0,1]^{B_k}$, with $\sum_{i\in B_k}\tilde g_i^{B_k}(\mathbf m^{B_k})\leq1$, be the allocation Theorem \ref{theo:robust_iff}'s two-hub construction assigns on $B_k$ given the block's messages, with the hubs of $B_k$ playing the role Section \ref{sec:rob_eff}'s two centers play on all of $N$; if $|B_k|=1$, say $B_k=\{i\}$, set $\tilde g^{B_k}\equiv\mathbf e_i$. The block mechanism allocates $\mathbf x = \sum_k y_k\,\tilde g^{B_k}(\mathbf m^{B_k})$, extending each $\tilde g^{B_k}$ by zero outside $B_k$ and summing the shares of an agent who wins in more than one block. When $\mathcal B$ is a partition of $N$, every agent is a hub of at most one block; when two blocks share a hub, that agent submits two separate, block-specific reports, one for each block in which he serves as hub. We call the case $K=1$ the \textit{single-block mechanism}.\vspace{2mm}

For $\mathbf v$ and a block $B_k\in\mathcal B$, write $i^*_k(\mathbf v):=\argmax_{j\in B_k}v_j$ for the highest-valuation agent within $B_k$ --- the block's analogue of the highest agent $i^*(\mathbf v)$ of Section \ref{sec:model} --- and $i^*_B(\mathbf v)$ for the same object when a single block $B$, rather than a member of an indexed collection, is under discussion.

\begin{proposition}\label{prop:block_achievability}
For every $\mathbf v$ and every block $B_k$ with either $|B_k|=1$ or $y_k>0$, the block mechanism satisfies $\tilde g^{B_k}(\mathbf m^{B_k}) = \mathbf e_{i^*_k(\mathbf v)}$ in every rationalizable message profile.
\end{proposition}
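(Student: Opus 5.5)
The singleton case is immediate: if $|B_k|=1$, say $B_k=\{i\}$, then $\tilde g^{B_k}\equiv\mathbf e_i$ and $i=i^*_k(\mathbf v)$ trivially. So fix $k$ with $|B_k|\geq 2$ and $y_k>0$, with hubs $i_k,j_k$. The plan is to reduce the claim to the sufficiency argument behind Theorem \ref{theo:robust_iff} on the induced subnetwork $\mathbf G[B_k]$. The key structural fact is \emph{separability}. Each hub's block-$k$ report $m^{B_k}_{i_k}$ is a separate component of his message, elicited independently of his reports to other blocks. It enters the allocation only through the term $y_k\tilde g^{B_k}(\mathbf m^{B_k})$. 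Since $u_i$ is linear in $\mathbf x$, hub $i_k$'s payoff splits as
\begin{align*}
u_{i_k}(\mathbf x;\mathbf v) = y_k\, u_{i_k}\big(\tilde g^{B_k}(\mathbf m^{B_k});\mathbf v\big) + \sum_{l\neq k} y_l\, u_{i_k}\big(\tilde g^{B_l}(\mathbf m^{B_l});\mathbf v\big).
\end{align*}
The second sum does not depend on $m^{B_k}_{i_k}$. Moreover, $u_{i_k}(\tilde g^{B_k};\mathbf v)$ depends only on $\mathbf v_{B_k}$, because $\tilde g^{B_k}$ is supported on $B_k$. Since $B_k\subseteq N_{i_k}\cap N_{j_k}$, both hubs observe $\mathbf v_{B_k}$ exactly.

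Step 1 converts this into a statement about conjectures. For any conjecture $\mu$ of hub $i_k$, the block-$k$ component of a best response must maximize
\[
\int u_{i_k}\big(\tilde g^{B_k}(m^{B_k}_{i_k},m^{B_k}_{j_k});\mathbf v\big)\,d\mu ,
\]
because $y_k>0$ and the rest of the expected payoff is additively separable in the other components. The valuation part of this integrand is known, so only the marginal of $\mu$ on $m^{B_k}_{j_k}$ matters. Hence the block-$k$ components of the surviving sets $S^k_{i_k}$ and $S^k_{j_k}$ coincide, round by round, with the iterated elimination in the two-player complete-information game on $B_k$. In that game the centers are $i_k,j_k$, the agents are $B_k$, and the allocation rule is the two-center mechanism. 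I will make this precise by induction on the round $k$. The inductive claim is that the projection of $S^k_{i_k}(\theta_{i_k})$ onto block-$k$ messages equals the stage-$k$ survivors of the reduced game. One must check that product structure is preserved: surviving messages are products across blocks. This holds because the best-response set of a separable objective is the product of the componentwise argmax sets, and any marginal on the opponent's block-$k$ message can be embedded in a full conjecture.

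Step 2 invokes the sufficiency half of Theorem \ref{theo:robust_iff}, applied to $B_k$ with hubs $i_k,j_k$ as the two universally connected agents. That proof shows that in the reduced game only truthful reports about the ranking survive for both centers, so the two-center allocation is $\mathbf e_{i^*_k(\mathbf v)}$. By Step 1, every rationalizable profile of the block mechanism therefore has $\tilde g^{B_k}(\mathbf m^{B_k})=\mathbf e_{i^*_k(\mathbf v)}$.

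I expect the main obstacle to be the bookkeeping in Step 1 when blocks overlap and share a hub. The same agent then appears as hub in several blocks. In addition, his own valuation and the other blocks' winners affect his utility levels, though not the comparison between his block-$k$ messages. I would handle this by noting two things. First, the payoff differences that drive the two-center argument, such as the identity $\beta_{ij}(v_i-v_j)$ from Lemma \ref{lem:two_agents}, involve only $\mathbf v_{B_k}$. Second, they are scaled by the common positive factor $y_k$. So strict and weak preference rankings over block-$k$ messages are identical to those in the stand-alone game, which is exactly what the induction needs.
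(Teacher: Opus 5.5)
Your proposal is correct and follows essentially the same route as the paper. The paper's proof also uses additive separability of each hub's payoff across the blocks he serves, then an induction on elimination rounds showing that each hub's surviving set factors blockwise into his surviving set in the standalone two-hub game on $B_k$ (scaled by $y_k>0$), then an appeal to the sufficiency part of Theorem \ref{theo:robust_iff}, with singleton blocks handled by definition; your further points (only the marginal on the other hub's block-$k$ report matters because both hubs observe $\mathbf v_{B_k}$, and overlapping blocks are harmless) are what the paper's argument uses implicitly.
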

Proposition \ref{prop:block_achievability} follows from separability: since $y_k$ is fixed independently of any message, $j$'s payoff is additively separable across blocks, and $j$ controls a message only where he serves as hub. Within such a block $B_k$, his problem is --- up to the positive scalar $y_k$ --- exactly the two-agent problem of Lemma \ref{lem:two_agents}, so Theorem \ref{theo:robust_iff}'s argument applies blockwise and pins down $i_k^*(\mathbf v)$; this also covers overlapping blocks for free, since distinct blocks use disjoint message sets and a hub's report to one cannot be checked against, or affect his payoff in, another.\footnote{The block mechanism's message space is correspondingly larger than Section \ref{sec:model}'s --- a hub belonging to several blocks sends one report per block --- but the elimination operators of Section \ref{sec:rob_eff} apply verbatim to it, block by block.} The restriction to $y_k>0$ is needed because a zero-weight block carries no payoff term and so disciplines no message; singleton blocks need none and are exempt.
The weights $(y_k)$ are fixed independently of messages by design: letting a hub split two blocks according to his own private comparison of their champions would rest that split on a report the mechanism cannot verify, the same degenerate-belief channel behind the necessity direction of Theorem \ref{theo:robust_iff}.

Proposition \ref{prop:block_achievability} pins down only the best agent within a block, not necessarily the best agent in $N$, so the block mechanism need not be fully efficient. What it lets us give instead is a worst-case guarantee on how far its allocation can be from efficient in any network. Because the paper's belief-free approach rules out any distributional assumption on $\mathbf v$, there is no distribution to average over: the guarantee must hold for every $\mathbf v$, not merely on average. Stated in cardinal terms it would be uninformative here.\footnote{Since valuations are unrestricted on $[0,1]$ beyond distinctness, the worst-case value loss $v_{i^*(\mathbf v)}-v_i$, for any $\mathbf v$ where the recipient $i\neq i^*(\mathbf v)$, equals $1$.} We therefore take rank as the ordinal alternative, which does have worst-case content for such mechanisms.

\begin{definition}\label{def:rank}
For $i\in N$ and $\mathbf v$, let $\mathrm{rank}(i;\mathbf v) := |\{j\in N: v_j>v_i\}|+1$, so $\mathrm{rank}(i^*(\mathbf v);\mathbf v)=1$. Let $n_1^* := \max\{|S| : S \text{ verifiable}\}$ denote the size of the largest verifiable block, the subscript recording that it is the relevant statistic for a mechanism concentrated on a single block.
\end{definition}

\begin{corollary}\label{cor:rank_bound}
Let $B^*$ be a verifiable block with $|B^*|=n_1^*$, and consider the single-block mechanism with $y_1=1$ on $B^*$. Then, for every $\mathbf v$ and in every rationalizable message profile, the recipient's rank satisfies $\mathrm{rank}(i^*_{B^*}(\mathbf v);\mathbf v) \leq n-n_1^*+1$; and there exists $\mathbf v$ for which this bound holds with equality.
\end{corollary}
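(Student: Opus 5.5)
The plan has two parts: first establish the upper bound from Proposition~\ref{prop:block_achievability}, then exhibit a valuation profile at which the bound is attained.

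\emph{Upper bound.} Apply Proposition~\ref{prop:block_achievability} to the single-block mechanism with $K=1$, $B_1=B^*$ and $y_1=1>0$; the singleton case $|B^*|=1$ is covered directly by the definition $\tilde g^{B^*}\equiv\mathbf e_i$. In every rationalizable message profile the allocation is then $\mathbf x=\tilde g^{B^*}(\mathbf m^{B^*})=\mathbf e_{i^*_{B^*}(\mathbf v)}$, so the recipient is exactly $i^*_{B^*}(\mathbf v)$. Write $i:=i^*_{B^*}(\mathbf v)$. By definition, $v_i>v_j$ for every $j\in B^*\setminus\{i\}$, so every agent $j$ with $v_j>v_i$ lies in $N\setminus B^*$. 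Hence $|\{j:v_j>v_i\}|\le n-n_1^*$, which gives $\mathrm{rank}(i;\mathbf v)\le n-n_1^*+1$.

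\emph{Tightness.} Choose any $\mathbf v$ with distinct entries in which every agent outside $B^*$ has a higher valuation than every agent inside $B^*$. For example, assign the values in $(1/2,1]$ to $N\setminus B^*$ and the values in $[0,1/2)$ to $B^*$. Then all $n-n_1^*$ outsiders outrank the block's champion, and the recipient's rank equals $n-n_1^*+1$ exactly. This uses only that valuations are unrestricted in $[0,1]$ apart from being distinct, and it is consistent with the recipient being pinned down in every rationalizable profile as in the first step.

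The only genuine content sits in the first step, and it is entirely carried by Proposition~\ref{prop:block_achievability}. The one point to check is that the single-block mechanism with $y_1=1$ is literally the two-center construction of Theorem~\ref{theo:robust_iff} run on the induced subnetwork $\mathbf G[B^*]$ with its two hubs. I therefore expect no real obstacle; the remainder is counting.
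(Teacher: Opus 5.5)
Your proposal is correct and follows the paper's own proof essentially step for step. You use Proposition~\ref{prop:block_achievability} (with $y_1=1>0$) to identify the recipient as $i^*_{B^*}(\mathbf v)$, note that every higher-valued agent lies in $N\setminus B^*$ to get the bound $n-n_1^*+1$, and attain it by giving all outsiders higher valuations than every member of $B^*$; your split into $(1/2,1]$ and $[0,1/2)$ is just a concrete instance of the paper's attainment profile.
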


Putting the full weight on the largest verifiable block, $B^*$, is what pins the worst-case rank at $n-n_1^*+1$ rather than something looser; the same argument, with $n_1^*$ replaced by any block size $s$, gives the general bound of $n-s+1$. The bound moves between the paper's two polar cases: when $n_1^*=n$, $\mathbf G$ itself is verifiable and the bound gives $\mathrm{rank}\leq 1$, recovering full efficiency as in Theorem \ref{theo:robust_iff}; when $n_1^*$ is small relative to $n$, the guarantee is correspondingly weak.

\begin{remark}\label{rem:multiblock_no_gain}
Corollary \ref{cor:rank_bound}'s result---concentrate all weight on the largest verifiable block---is a reasonable choice under this paper's assumptions. Since $\mathbf v$ is drawn from no assumed distribution, only a worst-case criterion can discipline the choice among mechanisms, and rank is the only one with content here. A cardinal alternative, e.g. weighting each agent's valuation by his allocated share, $\sum_i x_i v_i$, is uninformative as a criterion of efficiency here: because valuations are unrestricted on $[0,1]$, this quantity's worst case is $0$ for every mechanism and every allocation, so it cannot distinguish between them.
\end{remark}

Corollary \ref{cor:rank_bound} implies that, among single-block mechanisms, guaranteeing the best possible worst-case rank requires finding the largest verifiable block. This raises a natural operational question---how costly is it to identify that block? Finding $n_1^*$, and a maximizing block, need not require searching all $2^n$ subsets of $N$.

\begin{proposition}[Computing $n_1^*$]\label{prop:compute_n1star}
If $\mathbf G$ has no edges, $n_1^*=1$. Otherwise,
$$
n_1^* = \max_{(i',j')\,:\,g_{i'j'}=1} |N_{i'}\cap N_{j'}|,
$$
and any $(i,j) \in \argmax_{(i',j')\,:\,g_{i'j'}=1} |N_{i'}\cap N_{j'}|$ gives $N_i\cap N_j$ as a largest verifiable block, with hubs $i,j$.
\end{proposition}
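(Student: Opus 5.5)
The plan is to prove two inequalities and then show that any maximizing edge yields a block attaining the bound. First I will record a convention that the statement relies on: $i\in N_i$ for every $i$, since $N_i$ contains $i$ himself. Then $g_{ij}=1$ holds exactly when $j\in N_i$, equivalently $i\in N_j$.

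The edgeless case is handled first. If $\mathbf G$ has no edges, then $N_i=\{i\}$ for every $i$. A set satisfying condition (ii) of Definition \ref{def:verifiable_block} would need two distinct hubs $i\neq j$ with $\{i,j\}\subseteq S\subseteq N_i=\{i\}$, which is impossible. So only singletons are verifiable, and $n_1^*=1$.

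Now assume there is at least one edge. For the lower bound, I take any edge $(i',j')$ and set $S:=N_{i'}\cap N_{j'}$. Because $g_{i'j'}=1$, we have $j'\in N_{i'}$, and of course $j'\in N_{j'}$; symmetrically $i'\in S$. Hence $S$ contains the two distinct agents $i',j'$, and $S\subseteq N_{i'}$, $S\subseteq N_{j'}$ hold by construction. So $S$ is verifiable via (ii) with hubs $i',j'$, and $n_1^*\geq |N_{i'}\cap N_{j'}|$. Taking the maximum over edges gives $n_1^*\geq \max_{(i',j')}|N_{i'}\cap N_{j'}|$. This maximum is at least $2$, so it also dominates every singleton block.

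For the upper bound, I take any verifiable $S$. If $|S|=1$, then $|S|\le 2\le$ the maximum, since an edge exists. If $S$ is verifiable via (ii) with hubs $i\neq j$, then $S\subseteq N_i\cap N_j$. Moreover $j\in S\subseteq N_i$, so $g_{ij}=1$ and $(i,j)$ is an edge. Therefore $|S|\le |N_i\cap N_j|\le \max_{(i',j')}|N_{i'}\cap N_{j'}|$. Combining the two bounds gives the displayed formula.

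For the final claim, any $(i,j)$ in the argmax yields, by the lower-bound step, a verifiable block $N_i\cap N_j$ with hubs $i,j$. Its size equals the maximum, which is $n_1^*$, so it is a largest verifiable block.

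There is no real obstacle here. The only delicate point is the convention that $i\in N_i$. It is needed both to place the hubs inside $N_i\cap N_j$ and to derive an edge from $j\in N_i$ with $j\neq i$. I will state this convention explicitly at the start.
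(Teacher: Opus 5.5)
Your proof is correct and is essentially the paper's argument. It uses the same two inequalities: the upper bound comes from $j\in S\subseteq N_i$ with $j\neq i$ forcing $g_{ij}=1$, and the lower bound from $i,j\in N_i\cap N_j$ for an edge. You additionally spell out the edgeless case and the singleton blocks, which the paper leaves implicit.

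One point to add to your conventions. Since $i\in N_i$, your rule ``$g_{ij}=1$ exactly when $j\in N_i$'' (like the model's definition) literally gives $g_{ii}=1$. So the maximum must be read over pairs with $i'\neq j'$; otherwise a star on $n\geq 3$ agents would give $n$ instead of $2$. This restriction is what your phrase ``the two distinct agents $i',j'$'' tacitly uses, and the paper relies on it in the same way.
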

Computing $n_1^*$ is tractable: it reduces to a single pass over the edges of $\mathbf G$, an $O(n\cdot|E|)$ computation with a naive neighbor-set comparison at each edge, or faster with standard triangle-listing algorithms from the graph algorithms literature --- polynomial in the size of the network, unlike a brute-force search over all $2^n$ subsets. The reason a fast algorithm is available at all is that $|N_i\cap N_j|$ for an edge $(i,j)$ has a graph-theoretic reading: since $N_i$ includes $i$ itself, $|N_i\cap N_j|$ equals $2$ plus the number of triangles through edge $ij$, a standard statistic in network analysis.

Note that Corollary \ref{cor:rank_bound} is established only within the class of block mechanisms; whether the same guarantee, or a better one obtained by spreading weight across several blocks as discussed in Remark \ref{rem:multiblock_no_gain}, is achievable by some other mechanism entirely remains open.

\section{Discussion}\label{sec:discussion}
\subsection{Generalization of the utility function}
In this section, we discuss how the utility specification can be generalized, with particular emphasis on relaxing the assumption regarding the externality parameter $\alpha_i$. We then characterize conditions under which the mechanisms introduced in the previous sections remain applicable.

The key requirement for the results in the baseline model to be valid is that the two-agent mechanism in Lemma~\ref{lem:two_agents} is robustly efficient, since both the two-center mechanism and the one-center mechanism build on this mechanism. Specifically, for a pair of agents $1,2$ with utilities $u_1(\mathbf{x};\mathbf{v})$ and $u_2(\mathbf{x};\mathbf{v})$, there must exist an allocation $\mathbf{y} = (y_1,y_2)$ such that
\begin{equation}\label{eq:condition-general}
\begin{aligned}
  v_{1} > v_{2} \Rightarrow u_{1}(\mathbf{y};\mathbf{v}) > u_{1}(\mathbf{e}_{2};\mathbf{v}) \text{ and } u_{2}(\mathbf{e}_{1};\mathbf{v}) > u_{2}(\mathbf{y};\mathbf{v})\\
  v_{2} > v_{1} \Rightarrow u_{1}(\mathbf{y};\mathbf{v}) < u_{1}(\mathbf{e}_{2};\mathbf{v}) \text{ and } u_{2}(\mathbf{e}_{1};\mathbf{v}) < u_{2}(\mathbf{y};\mathbf{v})
\end{aligned}
\end{equation}
Condition \eqref{eq:condition-general} guarantees that if agents coordinate on a ``reversed-ranking'' report profile---that is, a profile that would induce the mechanism to allocate the good to the lower-valuation agent---then the high-valuation agent has a strict incentive to deviate. To simplify the condition, for $i\neq j$, define
\begin{align*}
\Delta_{ij}(\mathbf{x};\mathbf{v})
:= u_i(\mathbf{x};\mathbf{v}) - u_i(\mathbf{e}_j;\mathbf{v}),
\end{align*}
which measures agent $i$'s utility gain from $\mathbf{x}$ relative to assigning the entire unit to agent $j$. Then, condition \eqref{eq:condition-general} for one agent can be written as follows.

\begin{definition}[Sign-matching]
For a given allocation profile $\mathbf{x}$, the relative utility $\Delta_{ij}(\mathbf{x};\mathbf{v})$ satisfies the \textit{Sign-matching property} if 
\[
(v_i - v_j)\Delta_{ij}(\mathbf{x};\mathbf{v}) > 0 \quad \text{for all } v_i \neq v_j.
\]
\end{definition}
The \textit{Sign-matching property} dictates that the relative utility $\Delta_{ij}(\mathbf{x};\mathbf{v})$ and the difference in valuations $(v_i - v_j)$ have the same sign.\footnote{This is the analogue of the typical single-crossing property, which is expressed as a positive cross-derivative $\frac{\partial^2 U}{\partial q \partial v_i} > 0$. It implies that the utility difference between the allocation $\mathbf{x}$ and $\mathbf{e}_{j}$ must be increasing in the valuation difference $v_{i} - v_{j}$.}
We need to check the existence of an allocation profile $\mathbf{y}$ such that $\Delta_{12}(\mathbf{y};\mathbf{v})$ and $\Delta_{21}(\mathbf{y};\mathbf{v})$ satisfies the \textit{Sign-matching property}.

\subsubsection*{Linear-in-valuations utility functions}
We suggest one generalization of the utility function shown below.
\begin{align*}
  u_{i}(\mathbf{x};\mathbf{v}) = \sum_{j \in N}A_{ij}(\mathbf{x})v_{j}
\end{align*}
with $A_{ij}(\mathbf{x})$ being differentiable, $\frac{\partial A_{ij}(\mathbf{x})}{\partial x_{j}} > 0$, and $A_{ij}(\mathbf{x}) = 0$ if $x_{j} = 0$. We call this class of utility functions \textit{linear-in-valuations} utility functions. By setting $A_{ii}(\mathbf{x}) = x_{i}$ and $A_{ij}(\mathbf{x}) = \alpha_{i}x_{j}$ for $j \neq i$, we can recover the utility function in the main model. Moreover, this class of utility functions includes the case where the externality that agent $i$ receives depends on the agent to whom the good is allocated, e.g. $A_{ij}(\mathbf{x}) = \alpha_{ij}x_{j}$. By using this class, we can prove the existence of an allocation profile such that the \textit{sign-matching property} is satisfied for both agents $1$ and $2$, enabling us to establish the generalized two-agent mechanism.
\begin{proposition}\label{prop:general-2-agent-existence}
  Under linear-in-valuation utility functions, there exists an allocation profile $\mathbf{y} = (y_{1}, y_{2})$ such that $\Delta_{12}(\mathbf{y};\mathbf{v})$ and $\Delta_{21}(\mathbf{y};\mathbf{v})$ satisfy the \textit{sign-matching property}, if
  \begin{equation}\label{eq:sign-matching-1}
    \begin{aligned}
      u_{1}(\mathbf{x};\mathbf{v}) &> u_{1}(\mathbf{e}_{2};\mathbf{v}), &&\text{ for every $\mathbf{x}$ such that $x_{1} > 0$ and $x_{2} = 1 - x_{1}$, and}\\
      u_{2}(\mathbf{x};\mathbf{v}) &> u_{2}(\mathbf{e}_{1};\mathbf{v}), &&\text{ for every $\mathbf{x}$ such that $x_{2} > 0$ and $x_{1} = 1 - x_{2}$,}
    \end{aligned}
  \end{equation}
  for any $\mathbf{v}$ such that $v_{1} = v_{2}$. 
\end{proposition}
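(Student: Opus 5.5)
The plan is to reduce sign-matching to two scalar equations in $\mathbf y$, and then solve them with a Poincaré--Miranda (multidimensional intermediate value) argument on the feasible triangle.

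\textbf{Reduction to two equations.} Restrict to allocations supported on $\{1,2\}$. Since $A_{ij}(\mathbf x)=0$ whenever $x_j=0$, we have $u_1(\mathbf e_2;\mathbf v)=A_{12}(\mathbf e_2)v_2$, so
$$
\Delta_{12}(\mathbf y;\mathbf v)=a\,v_1-b\,v_2,\qquad a:=A_{11}(\mathbf y),\quad b:=A_{12}(\mathbf e_2)-A_{12}(\mathbf y).
$$
This is a linear form in $(v_1,v_2)$. I claim that $(v_1-v_2)(a v_1-b v_2)>0$ for all $v_1\neq v_2$ in $[0,1]^2$ holds if and only if $a=b>0$.
\begin{itemize}
\item Taking $v_2=0$ forces $a>0$, and taking $v_1=0$ forces $b>0$.
\item If $a\neq b$, pick $v_1=c\,v_2$ with $c$ strictly between $b/a$ and $1$. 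Then the two factors have opposite signs.
\item Conversely, $a=b>0$ gives $a(v_1-v_2)^2>0$.
\end{itemize}
Hence sign-matching for both agents is equivalent to finding $\mathbf y$ with $y_1,y_2>0$, $y_1+y_2\le 1$, and $f_1(\mathbf y)=f_2(\mathbf y)=0$, where
$$
f_1(\mathbf y):=A_{11}(\mathbf y)+A_{12}(\mathbf y)-A_{12}(\mathbf e_2),\qquad f_2(\mathbf y):=A_{22}(\mathbf y)+A_{21}(\mathbf y)-A_{21}(\mathbf e_1).
$$
Equivalently, $f_1(\mathbf y)=u_1(\mathbf y;\mathbf v)-u_1(\mathbf e_2;\mathbf v)$ evaluated at $v_1=v_2=1$, and similarly for $f_2$. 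Positivity of $a$ and $b$ is then automatic from $y_1>0$ and $\partial A_{11}/\partial x_1>0$.

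\textbf{Boundary signs on the triangle.}
\begin{itemize}
\item On the edge $y_1=0$: $A_{11}=0$ and $A_{12}$ is strictly increasing in $y_2$, so $f_1(0,y_2)<0$ for $y_2<1$ and $f_1(\mathbf e_2)=0$.
\item On the edge $y_2=0$: symmetrically, $f_2(y_1,0)<0$ for $y_1<1$ and $f_2(\mathbf e_1)=0$.
\item On the hypotenuse $y_1+y_2=1$: condition \eqref{eq:sign-matching-1} at $v_1=v_2$ says exactly that $f_1>0$ wherever $y_1>0$, and $f_2>0$ wherever $y_2>0$.
\end{itemize}

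\textbf{Poincaré--Miranda step.} Map the square $[0,1]^2$ bilinearly and homeomorphically onto the quadrilateral with vertices $0,\mathbf e_1,(\tfrac12,\tfrac12),\mathbf e_2$, arranged so that the two pairs of opposite faces are:
\begin{itemize}
\item $\{0\text{--}\mathbf e_2,\ \mathbf e_1\text{--}(\tfrac12,\tfrac12)\}$, on which $f_1\le 0$ and $f_1>0$ respectively;
\item $\{0\text{--}\mathbf e_1,\ (\tfrac12,\tfrac12)\text{--}\mathbf e_2\}$, on which $f_2\le0$ and $f_2>0$ respectively.
\end{itemize}
Both $f_i$ are continuous because the $A_{ij}$ are differentiable. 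Poincaré--Miranda then yields a common zero $\mathbf y$ in the quadrilateral, which is contained in the feasible set.

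\textbf{Ruling out degenerate zeros.} It remains to show $y_1,y_2>0$.
\begin{itemize}
\item $\mathbf y\neq \mathbf e_2$, because $f_2(\mathbf e_2)>0$.
\item $\mathbf y\neq \mathbf e_1$, because $f_1(\mathbf e_1)>0$.
\item If $y_1=0$, then $y_2<1$ and so $f_1(\mathbf y)<0$, a contradiction. Symmetrically $y_2\neq 0$.
\end{itemize}
So $\mathbf y$ is interior in the required sense, and by the reduction above it delivers sign-matching for both $\Delta_{12}$ and $\Delta_{21}$.

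I expect the main obstacle to be the topological step. The weak inequalities at the corners $\mathbf e_1,\mathbf e_2$, where $f_1$ or $f_2$ vanishes, must be handled carefully. Poincaré--Miranda tolerates weak inequalities on faces, so the argument should go through, but the face assignment in the quadrilateral has to be checked precisely.
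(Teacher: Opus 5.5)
Your proposal is correct and follows essentially the same route as the paper. Both arguments reduce sign-matching to the two equations $A_{11}(\mathbf y)+A_{12}(\mathbf y)=A_{12}(\mathbf e_2)$ and $A_{22}(\mathbf y)+A_{21}(\mathbf y)=A_{21}(\mathbf e_1)$, get a common zero by Poincaré--Miranda (monotonicity of $A_{ij}$ on the legs, strict self-preference on the hypotenuse), rule out boundary zeros by strictness, and conclude $\Delta_{12}(\mathbf y;\mathbf v)=A_{11}(\mathbf y)(v_1-v_2)$ with $A_{11}(\mathbf y)>0$. The differences are cosmetic: the paper parametrizes the triangle by $(s,t)\mapsto(s,(1-s)t)$, collapsing one face to $\mathbf e_1$, while your bilinear map splits the hypotenuse at its midpoint, which works and need not even be a homeomorphism. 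You also prove ``sign-matching iff $a=b>0$'' explicitly, where the paper appeals to continuity and coefficient matching.
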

Proposition \ref{prop:general-2-agent-existence} guarantees the general two-agent mechanism to be robustly efficient. The condition for this is not demanding: it says that when two agents have identical valuations, an agent strictly prefers the good to be allocated to himself rather than the other agent, given that there is no money burning. In a general environment with $n$ agents, we say that the environment satisfies strict self-preference if the condition \eqref{eq:sign-matching-1} is satisfied for all pairs of agents $(i,j)$.

However, to extend this result to $n$ agents and apply the same machinery as in the mechanism used for Theorem \ref{theo:robust_iff}, an additional assumption is needed.

\begin{definition}[Efficiency-aligned reallocation]
  Let $x_{0} := 1 - \sum_{l \in N} x_{l}$ denote the burned part of the good, and let $i^{*} = i^{*}(\mathbf{v})$. The environment satisfies \textit{efficiency-aligned reallocation} if, for every agent $i$ and every $k \in (N \cup \{0\}) \setminus \{i, i^{*}\}$,
  \begin{align*}
    u_{i}(\mathbf{x} + t(\mathbf{e}_{i^{*}} - \mathbf{e}_{k});\mathbf{v}) > u_{i}(\mathbf{x};\mathbf{v}) \quad \text{for all } \mathbf{x} \text{ and } t \in (0, x_{k}].
  \end{align*}
  That is, every agent strictly benefits whenever any part of the good, allocated or burned, is moved to the highest-valuation agent from anyone other than himself.
\end{definition}

\begin{proposition}\label{prop:robust-mechanism-general}
  Assume linear-in-valuations utility functions, and that the environment satisfies strict self-preference and efficiency-aligned reallocation. A robustly efficient mechanism exists if the network is such that there are at least two agents who are connected to everyone. Moreover, there exists a weakly robustly efficient mechanism if there is one agent who is connected to everyone.
\end{proposition}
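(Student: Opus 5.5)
The plan is to rerun the proofs of Theorem \ref{theo:robust_iff} (sufficiency) and Proposition \ref{prop:weak_robust}, replacing each use of the specific utility $u_i$ by the two abstract properties assumed here. The first step is the generalized two-agent device. For each pair $(i,j)$ of agents we invoke Proposition \ref{prop:general-2-agent-existence}, which strict self-preference makes applicable, to obtain an allocation $\mathbf y^{ij}$ supported on $\{i,j\}$ whose relative utilities $\Delta_{ij}(\mathbf y^{ij};\mathbf v)$ and $\Delta_{ji}(\mathbf y^{ij};\mathbf v)$ satisfy sign-matching. This $\mathbf y^{ij}$ replaces $\gamma^{ij}$.

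One point needs care: in the baseline model $\beta_{ij}$ did not depend on $\mathbf v$, whereas here $\mathbf y^{ij}$ must be fixed before messages are received. I would therefore first inspect the proof of Proposition \ref{prop:general-2-agent-existence} to check whether the constructed $\mathbf y$ is independent of $\mathbf v$. With $A_{ij}$ depending only on $\mathbf x$, both $\Delta_{ij}$ and $\Delta_{ji}$ are linear in $\mathbf v$. If the valuations of third agents enter $u_i$ only through the allocation (zero when $x_k=0$), then $\Delta_{ij}(\mathbf y;\mathbf v)=a v_i - b v_j$ for constants $a,b$. Sign-matching for all $\mathbf v$ then forces $a=b>0$. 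That is one scalar equation per agent and two unknowns $(y_i,y_j)$, which I expect to be solvable by an intermediate-value argument, with self-preference at $v_i=v_j$ ensuring that the root lies in the interior. I flag this uniformity in $\mathbf v$ as the step to verify first.

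The second step replays the two-center mechanism with $\mathbf y^{12}$ in place of $\gamma^{12}$, going case by case through the elimination argument of Theorem \ref{theo:robust_iff}.
\begin{itemize}
\item \emph{Both centers claim themselves highest.} The contest between them reduces exactly to sign-matching: the lower center strictly prefers conceding, and the higher center strictly prefers contesting over conceding to the lower one.
\item \emph{Some third agent $i^*$ is highest.} Here the baseline proof used comparisons of the form ``giving the good (or the money-burned remainder, or the share in point 4) to $i^*$ rather than to the other center or to the burn is strictly better.'' Each of these is an instance of moving mass $t$ from some $k\neq i,i^*$ (including $k=0$) to $i^*$, so efficiency-aligned reallocation delivers the required strict inequality directly.
\end{itemize}
I would list every pairwise payoff comparison used in the appendix proof and check that each is either a sign-matching comparison or an efficiency-aligned reallocation. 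Point 4's ratio $m_{12}/m_{1h_1}$ was calibrated to the baseline utility, so I would instead simply require that it burn or misallocate strictly more than truthful reporting would. Monotonicity in $t$ from efficiency-aligned reallocation should suffice for this.

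The third step does the same for the one-center mechanism with $\mathbf y^{1\hat h_1}$. The center's untruthful messages are weakly dominated by Lemma \ref{lem:WD_elimination}: when $i^*\neq1$, reporting $i^*$ as highest moves mass to $i^*$, which is strictly better on an open set of opponents' messages by efficiency-aligned reallocation; when $1=i^*$, sign-matching against each $\hat h_1$ applies. The peripheral agents' rounds then follow as before. The main obstacle I anticipate is the $\mathbf v$-independence of $\mathbf y^{ij}$ together with the adaptation of point 4. If uniformity fails, the fallback is to let $\mathbf y$ depend only on the reported pair $(m_{ii},m_{ij})$, which the centers observe and which is truthful at the relevant stage of the elimination.
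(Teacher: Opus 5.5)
Your plan matches the paper's: replace $\gamma^{ij}$ by $\mathbf y^{ij}$ in both mechanisms, then check that every strict comparison in the proofs of Theorem~\ref{theo:robust_iff} and Proposition~\ref{prop:weak_robust} is either a sign-matching comparison or an efficiency-aligned reallocation. Your $\mathbf v$-independence worry resolves favorably. The defining conditions $A_{ii}(\mathbf y)+A_{ij}(\mathbf y)=A_{ij}(\mathbf e_j)$ involve only $A$, so no fallback is needed. The gap is that two comparisons fall into neither of your categories, and these are exactly the ones the paper treats separately.

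\textbf{First gap: point 4.} In Case A of Part 1 against $h_2=2$, and in Part 2 against $h_1=1$ when $i^*\neq 1$, a center must prefer the rule-4 allocation $\sigma\mathbf e_{i^*}$ to the full allocation $\mathbf e_j$ to the other center. This is not a reallocation from $j$ to $i^*$: only $\sigma$ goes to $i^*$ and $1-\sigma$ is burned. So it holds only if $\sigma$ exceeds a threshold that depends on $\mathbf v$ and the $A_{ij}$. The baseline ratio $v_j/v_{i^*}$ clears that threshold only because of the specific form of $u_i$. Your replacement requirement (point 4 should burn or misallocate more than truthful reporting) secures only $\sigma<1$. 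That upper bound is what punishes a center who falsely claims to be highest (Part 3), but it gives no lower bound. The paper keeps rule 4 and lets the dominating message name $i^*$ with a reported ratio close to $1$. The share is at least that ratio whatever $\epsilon(\mathbf m)$ is, so continuity and $u_i(\mathbf e_{i^*};\mathbf v)>u_i(\mathbf e_j;\mathbf v)$ make $\sigma\mathbf e_{i^*}$ strictly better, uniformly in the opponent's message.

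\textbf{Second gap: the one-center mechanism.} In Part 1 with $i^*\neq 1$ and $h_1(m_1)=1$, agent 1 compares $\mathbf y^{1j}$, where $j:=\hat h_1(m_1)$, with $\mathbf e_{i^*}$. Going from $\mathbf y^{1j}$ to $\mathbf e_{i^*}$ moves agent 1's \emph{own} share to $i^*$, and efficiency-aligned reallocation explicitly excludes $k=i$. Agent 1 values his own share at full weight, so this comparison depends on how $\mathbf y^{1j}$ is calibrated; the baseline used $\beta_{1j}<\alpha_1$. Your claim that it follows ``by efficiency-aligned reallocation'' is therefore unjustified. The paper splits on $v_j$:
\begin{itemize}
\item If $v_j>v_1$, sign-matching and reallocation give $u_1(\mathbf y^{1j};\mathbf v)<u_1(\mathbf e_j;\mathbf v)\le u_1(\mathbf e_{i^*};\mathbf v)$.
\item If $v_j<v_1$, set $c:=A_{1j}(\mathbf e_j)$. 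The identity $A_{11}(\mathbf y^{1j})+A_{1j}(\mathbf y^{1j})=c$ gives $u_1(\mathbf y^{1j};\mathbf v)=cv_1-A_{1j}(\mathbf y^{1j})(v_1-v_j)<cv_1$. Efficiency-aligned reallocation at profiles with $v_j\uparrow v_{i^*}$ yields $A_{1i^*}(\mathbf e_{i^*})\ge c$, hence $u_1(\mathbf e_{i^*};\mathbf v)\ge cv_{i^*}>cv_1$.
\end{itemize}
Both arguments must be added before your case-by-case replay goes through.
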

Efficiency-aligned reallocation requires agents' preferences to be aligned with the designer's objective for every part of the good, not only for the good as a whole; this is needed because the mechanism of Theorem \ref{theo:robust_iff} allocates split and partial shares. It holds in the main model, where the gain is $\alpha_{i}t(v_{i^{*}} - v_{k}) > 0$ if $i \neq i^{*}$ and $t(v_{i^{*}} - \alpha_{i^{*}}v_{k}) > 0$ if $i = i^{*}$, with $v_{0} := 0$. The property captures the partial alignment of incentives between designer and agents, serving as the core feature that enables robust implementation.\footnote{In the next section, we consider an alternative objective for the designer, illustrating a case where this property is not satisfied.}

\subsection{Welfare-maximizing designer}
In the baseline model, the designer's objective is to allocate the good to the agent with the highest valuation. Although this goal is intuitively appealing across various motivating examples, the presence of heterogeneous externalities implies that such an allocation does not inherently maximize aggregate utility. Moreover, from the point of view of implementation theory, assigning the good to the highest-valuation agent represents merely one possible social choice function. Accordingly, this subsection explores an alternative objective for the designer---maximizing utilitarian welfare---and evaluates both the validity and the limitations of the main framework's results under this distinct criterion.

We maintain the same setting as the baseline model, with the exception that the designer now seeks to allocate the good to maximize aggregate welfare. Formally, an allocation profile $\mathbf{x}$ is considered efficient if it maximizes total welfare:
\begin{align*}
  W(\mathbf{x};\mathbf{v}) &= \sum_{i \in N}u_{i}(\mathbf{x};\mathbf{v}) = \sum_{i \in N}\bigg[(1 - \alpha_{i}) +\sum_{j \in N}\alpha_{j} \bigg]v_{i}x_{i}
\end{align*}  
Consequently, the designer aims to allocate the entire good to the agent $i$ who maximizes the welfare weight:
\begin{align*}
  w_{i} &:= \bigg[1 - \alpha_{i} + \sum_{j \in N}\alpha_{j}\bigg]v_{i}  
\end{align*}
We say that agent $i$ is the welfare-maximizing recipient if $i = \argmax_{i' \in N}w_{i'}$. Furthermore, we say that a mechanism is robustly welfare-maximizing if, for any $(\alpha_{i})_{i \in N}$ and any valuation profile $\mathbf{v}$, the allocation is $g(\mathbf{m}) = \mathbf{e}_{i}$ for all rationalizable message profiles $\mathbf{m} \in \prod_{j \in N}S_{j}(\theta_{j})$, where $i$ is the welfare-maximizing recipient.

A key departure from the baseline model is that the objectives of the designer and the agents are not necessarily aligned. Specifically, the agent targeted by the designer is not always the one who would maximize the utility of the other agents, conditional on those agents not receiving the good themselves.\footnote{If $\alpha_{i} = \alpha_{j} = \alpha$ for all $i,j$, then maximizing $W(\mathbf{x};\mathbf{v})$ is equivalent to allocating the good to the agent with the highest valuation.} This misalignment introduces a conflict of interest that complicates robust implementation. Indeed, we can show that a robustly welfare-maximizing mechanism cannot exist if there exists an agent who is not connected to everyone else in the network.

\begin{proposition}\label{prop:welfare-impossibility}
  If there exists an agent who is not connected to everyone, then a robustly welfare-maximizing mechanism does not exist.
\end{proposition}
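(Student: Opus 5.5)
We argue by contradiction. Since robust welfare maximization must hold for every $(\alpha_i)_{i\in N}$, it suffices to exhibit one choice of externality factors and two valuation profiles that some agent cannot distinguish but that call for different welfare-maximizing recipients, and then to show that the indistinguishability propagates into a common rationalizable message profile.

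Let agent $a$ be unconnected to some agent $b$. Pick a third agent $c\neq a,b$ that $a$ observes (or take $c=a$ if $n=2$, in which case $a$ observes nobody). Choose heterogeneous factors so that $1-\alpha_c$ is much larger than $1-\alpha_b$. Two welfare weights then order $c$ and $b$ against the raw valuations: for suitable $\mathbf v$ with $v_b>v_c$ we get $w_c>w_b$. Agent $a$, however, values a unit given to $b$ or $c$ only through $\alpha_a v_b$ versus $\alpha_a v_c$, so he strictly prefers $b$. The two profiles will be $\mathbf v$ and $\mathbf v'$, which agree on $N_a$ and differ only in $v_b$ (for instance, $v_b$ either slightly above or well above $v_c$, or constructed so that the welfare-maximizing recipient flips between $c$ and $b$).

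The core step mirrors the necessity argument for Theorem \ref{theo:robust_iff} and Example \ref{ex:necessity}. Suppose $g$ is robustly welfare-maximizing. At every profile, $\prod_j S_j(\theta_j)$ is nonempty and every rationalizable profile yields $\mathbf e$ of the welfare maximizer. Consider the true profile $\mathbf v$ with recipient $c$, and the rationalizable profile $\mathbf m$ there. Agent $a$'s type is identical under $\mathbf v'$, where the recipient is $b$. We want to show that $a$ can rationalize a message that supports the outcome $\mathbf e_b$ at $\mathbf v$, or that truthful-at-$\mathbf v'$ messages of the others remain rationalizable. The cleaner route uses the fact that the rationalizable sets $S_j(\theta_j)$ depend only on types, so a set-valued best-reply argument is available. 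Define candidate sets $Z_j(\theta_j)\supseteq S_j(\theta_j)$ that also include, for the agents whose types coincide across $\mathbf v$ and $\mathbf v'$, the messages they use at $\mathbf v'$. Then verify that $Z$ is a best-reply set, so that $Z\subseteq S$ by the monotone operator in the footnote.

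The key incentive fact is that an agent who does not receive the good and compares $\mathbf e_b$ with $\mathbf e_c$ prefers $b$ when $v_b>v_c$. This holds even though the designer prefers $c$, and it is exactly the failure of efficiency-aligned reallocation. As a result, agents, and agent $b$ in particular (who himself prefers receiving the good), can rationalize pushing the outcome toward $b$. Combining this with $a$'s uncertainty about $v_b$ yields a rationalizable profile at $\mathbf v$ whose outcome is not $\mathbf e_c$, which is the contradiction.

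The main obstacle is this last step: making the "others' types differ" issue rigorous. The agents observing $b$ have different types under $\mathbf v$ and $\mathbf v'$, so we cannot simply transplant messages. The fix I would try first is to choose $\mathbf v,\mathbf v'$ so that everyone observing $b$, including $b$ himself, strictly prefers $\mathbf e_b$ to $\mathbf e_c$ in both profiles. Then messages implementing $b$ survive for them by the misalignment, while the agents not observing $b$ are handled by indistinguishability. If the inequality bookkeeping on $\alpha$'s becomes delicate, I would restrict to a concrete case ($\alpha_b$ close to 1, the other $\alpha$'s moderate) and check the required inequalities directly.
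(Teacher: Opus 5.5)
\textbf{There is a genuine gap.} Your plan puts the belief uncertainty on the wrong agent.

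You make some $c\neq a$ the welfare maximizer at $\mathbf v$ and use $a$'s ignorance of $v_b$ as the lever. But $a$ needs no lever: if $v_b$ is the top valuation and $\alpha_a v_b>v_a$, then $\mathbf e_b$ is already $a$'s favourite allocation at the true profile. The binding constraint is the welfare maximizer's, and it can never be met at true values. With $A=\sum_j\alpha_j$, $w_c>w_b$ gives $v_c>\frac{1-\alpha_b+A}{1-\alpha_c+A}v_b$. Moreover $(1-\alpha_b+A)-\alpha_c(1-\alpha_c+A)=1-\alpha_b\alpha_c+(1-\alpha_c)\sum_{j\neq b,c}\alpha_j>0$. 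Hence $u_c(\mathbf e_c;\mathbf v)=v_c>\alpha_c v_b=u_c(\mathbf e_b;\mathbf v)$.

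So your proposed fix (everyone observing $b$ strictly prefers $\mathbf e_b$ to $\mathbf e_c$ in both profiles) is infeasible whenever $c\in N_b$. Choosing $c$ as an arbitrary neighbour of $a$ does not exclude this. Such a $c$ knows $v_b$ and $v_c$, so he cannot justify a message supporting $\mathbf e_b$ under any belief once $g$ lets him unilaterally secure more of the good. Your argument places no restriction on $g$ that rules this out. (Such a $c$ need not even exist if $a$ is isolated and $n\geq 3$.)

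Independently, the pairwise comparison of $\mathbf e_b$ with $\mathbf e_c$ is too weak. Against an arbitrary $g$, a unilateral deviation can reach any allocation. The belief justifying each agent's message must therefore make $\mathbf e_b$ his top allocation in $X$, i.e.\ $v_b$ maximal and $\alpha_i v_b>v_i$ for all $i\neq b$ under that belief.

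The missing idea is to make the unobserving agent himself the welfare maximizer, i.e.\ $c=a$ in general rather than only for $n=2$; this is the paper's construction with agents $1$ and $2$. Take $\alpha_a$ small and $\alpha_b$ close to $1$, so that $a$ maximizes welfare at $\mathbf v$ although $v_b>v_a$. Make every other $v_i$ small enough that $\alpha_i v_b>v_i$. Let $\mathbf v'$ differ from $\mathbf v$ only by raising $v_b$ to some $v'_b\leq 1$ with $\alpha_a v'_b>v_a$, which is possible once $v_a<\alpha_a$. Then $\theta_a(\mathbf v)=\theta_a(\mathbf v')$ and $b$ maximizes welfare at $\mathbf v'$.

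Robust welfare maximization at $\mathbf v'$ supplies some $\mathbf m$ with $g(\mathbf m)=\mathbf e_b$. Your best-reply-set device now closes the argument with $Z_j=\{m_j\}$ at each of $j$'s types under $\mathbf v$ and $\mathbf v'$. Every $j\neq a$ justifies $m_j$ by the point belief on $m_{-j}$ and on the profile corresponding to that type, where $\mathbf e_b$ is his top allocation. Agent $a$ justifies $m_a$ by the point belief on $(m_{-a},v'_{-a})$, which is admissible because he does not observe $b$.

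This also removes the ``others' types differ'' obstacle you flagged. The same $m_j$ can be used at both of $j$'s types precisely because $\mathbf e_b$ is $j$'s top allocation under both $\mathbf v$ and $\mathbf v'$.
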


This result not only highlights the difficulty of achieving a welfare-maximizing mechanism but also demonstrates that the existence of the robust mechanism established in Theorem \ref{theo:robust_iff} fundamentally relies on the partial alignment of incentives between the designer and the agents.

The intuition underlying this result closely mirrors that of the impossibility result in Theorem \ref{theo:robust_iff}. The proof exploits the fact that an agent who lacks information about certain peers may hold extreme beliefs, assuming the valuations of those unobserved agents to be exceptionally high. To illustrate why a single pair of mutually unobserved agents is sufficient to yield the impossibility, suppose agents 1 and 2 do not observe each other. Consider a valuation profile $\mathbf{v}$ in which agent 1 is welfare-maximizing, yet $v_{2} > v_{1}$. In this scenario, allocating the good to agent 2 is inefficient; therefore, under a robust mechanism, there must exist an agent who has a profitable deviation from this allocation regardless of their beliefs. First, agents who observe both agents 1 and 2 cannot possess such a deviation. Because they know that agent 2 generates greater externalities (since $v_2 > v_1$), they lack any incentive to deviate in a way that shifts the allocation from agent 2 to agent 1. Furthermore, agents who fail to observe either agent 1 or agent 2 might hold a belief that agent 2's valuation is higher than that of agent 1.\footnote{For any agent $i \neq 1$, holding a belief that $v_{2} > v_{1}$ is sufficient. For agent 1, the belief we need to get the impossibility is $\alpha_{1} v_{2} > v_{1}$; that is, agent 2's valuation is presumed to be sufficiently high that agent 1 prefers agent 2 to receive the good---thereby benefiting from the resulting spillover---rather than obtaining the good himself.} Consequently, no agent can be reliably counted upon to correct the misallocation across all possible belief structures. 
This impossibility breaks down, however, if agent 1 is able to observe agent 2. In that case, agent 1 knows with certainty that he is the true welfare maximizer, giving him a strict incentive to deviate from any profile that allocates the good to agent 2 in order to secure the good for himself.

\subsection{Relation to ex post incentive compatibility}
Ex post implementation is heavily studied in the literature as an alternative notion of robustness in mechanism design and implementation theory. This section applies this concept to our model and discusses its implications.

The definitions are based on \citet{bergemann2008ex} adapted to our model, where a strategy of agent $i$ is a function $s_{i}:\Theta_{i} \to M_{i}$. Given a mechanism $g(\mathbf{m})$, a strategy profile $\bold{s}^{*} = (s_{1}^{*},\cdots,s_{n}^{*})$ is an ex post equilibrium if for all $i,\mathbf{v}$ and $m_{i}$,
\begin{align*}
  u_{i}(g(\mathbf{s}^{*}(\theta));\mathbf{v}) \geq u_{i}(g(m_{i},s^{*}_{-i}(\theta_{-i}));\mathbf{v}) \text{ where } \theta := \theta(\mathbf{v}).
\end{align*}
We say that the mechanism $g(\mathbf{m})$ is ex post efficient if every ex post equilibrium $s^{*}$ of the game induced by the mechanism $g(\mathbf{m})$ is such that $g(s^{*}(\theta))$ allocates the good to the highest agent, i.e. $g(\mathbf{s}^{*}(\theta(\mathbf{v}))) = \mathbf{e}_{i^{*}}$. 

One important thing to notice is that the type $\theta_{i}$ of agent $i$ consists of the valuations of agents in his neighborhood. That is, given a strategy $s_{i}$, the message must be the same as long as the valuations of agent $i$ and his neighbors are the same, regardless of the valuations of agents outside of his neighborhood.

This argument highlights an important condition discussed in the literature for a mechanism to be ex post efficient: \textit{ex post incentive compatibility} (EPIC). It guarantees that the truthful message profile is an equilibrium, which is the condition for \textit{partial implementation}, i.e. that there exists an ex post equilibrium that is efficient.\footnote{Usually, EPIC is defined over the social choice function. In our case, the space of possible message profiles does not equal the space of possible type profiles. This is due to the fact that the space of message profiles is larger than the space of type profiles, since the space of message profiles includes all inconsistent message profiles. Therefore, unlike the usual setting, a social choice function cannot be seen as a mechanism, and hence the EPIC is defined with a given mechanism.}
\begin{definition}[\citet{bergemann2008ex}]
  A direct mechanism $g(\cdot)$ is ex post incentive compatible (EPIC) if for all $i$, $\mathbf{v}$, and $\theta_{i}'$, we have
  \begin{align*}
    u_{i}(g(\theta(\mathbf{v}));\mathbf{v}) \geq u_{i}(g(\theta_{i}', \theta_{-i}(\mathbf{v}));\mathbf{v})
  \end{align*}
\end{definition} 

\begin{proposition}\label{prop:ex-post-IC}
  There exists a direct mechanism $g$ with $g(\theta(\mathbf v)) = \mathbf e_{i^*(\mathbf v)}$ that is ex post incentive compatible, if and only if all agents have at least one neighbor.
\end{proposition}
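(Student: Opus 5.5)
The plan is to prove the two directions separately: for necessity, build a profitable misreport for an isolated agent; for sufficiency, use a simple consistency-check mechanism that burns the good.

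\emph{Necessity.} Suppose some agent $i$ is isolated, so $N_i=\{i\}$ and $\theta_i=v_i$. Then no other agent's type depends on $v_i$. Fix any direct mechanism $g$ with $g(\theta(\mathbf v))=\mathbf e_{i^*(\mathbf v)}$; I will derive a violation of EPIC from two valuation profiles that differ only in $v_i$.
\begin{itemize}
\item Profile $\mathbf v$: agent $i$ is highest, say $v_i=0.9$. Let some agent $k\neq i$ have $v_k=\tfrac12+\delta$, with every other agent strictly between $0$ and $\tfrac12$ (all valuations distinct).
\item Profile $\mathbf v'$: identical to $\mathbf v$ except $v'_i=\tfrac12$. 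Now $k$ is highest, so truthful reporting yields $\mathbf e_k$.
\end{itemize}
Since $\theta_{-i}(\mathbf v')=\theta_{-i}(\mathbf v)$, at $\mathbf v'$ agent $i$ can report $\theta_i'=0.9$. This reproduces the message profile $\theta(\mathbf v)$, so the outcome becomes $\mathbf e_i$. Agent $i$'s utility from the deviation is $v'_i=\tfrac12$. His utility from truthful reporting is $\alpha_i(\tfrac12+\delta)$. Because $\alpha_i<1$, we can choose $\delta>0$ small enough that $\alpha_i(\tfrac12+\delta)<\tfrac12$. The deviation is then strictly profitable, so $g$ is not EPIC. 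The only care needed is to pick $\delta$ depending on $\alpha_i$, so that the argument does not break when $\alpha_i$ is close to $1$.

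\emph{Sufficiency.} Suppose every agent has at least one neighbor. Define $g(\mathbf m)=\mathbf e_{h}$ whenever $\mathbf m$ is consistent, where $h$ is the agent with the highest commonly reported valuation (ties broken by lowest index). Define $g(\mathbf m)=\mathbf 0$, i.e.\ burn the whole good, whenever $\mathbf m$ is inconsistent. Under truthful reports the profile is consistent and reports true valuations, so $g(\theta(\mathbf v))=\mathbf e_{i^*(\mathbf v)}$.

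The key step is to show that every unilateral misreport from truth produces an inconsistent profile. Suppose agent $i$ sends $\theta_i'\neq\theta_i(\mathbf v)$ while everyone else reports truthfully. There are two cases.
\begin{itemize}
\item If $i$ misreports $m_{ij}$ for some $j\in N_i\setminus\{i\}$, then $m_{ij}\neq v_j=m_{jj}$.
\item If $i$ misreports $m_{ii}$, then, since $i$ has a neighbor $k$, we have $m_{ii}\neq v_i=m_{ki}$.
\end{itemize}
This is exactly where the no-isolated-agent assumption is used: every coordinate that $i$ controls is also reported by some other, truthful agent. Hence any deviation yields the outcome $\mathbf 0$ and utility $0$. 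The truthful outcome gives agent $i$ utility at least $0$, since utilities are nonnegative on $X$. Therefore EPIC holds.

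I expect the main obstacle to be the necessity direction: the valuation profiles must be chosen so that the deviation is strict for every $\alpha_i\in(0,1)$. The sufficiency direction is routine once we observe that each report about an agent is always cross-checked by a second observer.
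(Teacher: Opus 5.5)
Your proposal is correct and follows the paper's proof. For necessity, the isolated agent inflates his own report to reproduce a profile in which he is highest; your small $\delta$ plays the role of the paper's condition $v_1' > \alpha_1 v_2'$. For sufficiency, the mechanism burns the whole good on any inconsistent profile. Your case split showing that every unilateral deviation is detected is somewhat more explicit than the paper's one-line justification: a misreport of one's own valuation is caught by a neighbor, and a misreport about a neighbor is caught by that neighbor's self-report.
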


The necessity of no isolated agent follows because an isolated agent can overreport his valuation without creating any inconsistency: since no other agent reports the isolated agent's valuation, the mechanism must treat the report as truthful and allocate accordingly. As there always exist valuation profiles where the isolated agent prefers obtaining the good himself, he has an incentive to exaggerate. Conversely, when every agent has at least one neighbor, the mechanism which burns the entire good when reports are inconsistent is EPIC, since each agent has an incentive to report truthfully given the truthful reports of the others.

This does not prove the existence of an ex post efficient mechanism when no agent is isolated. Full ex post implementation requires two conditions: \textit{ex post incentive compatibility} (EPIC) and \textit{ex post monotonicity} (EM), shown by \citet{bergemann2008ex} to be necessary and almost sufficient. We have already established EPIC; in our setting, EM is also satisfied regardless of the network structure, as it depends only on the utility function and the social choice function.

\begin{definition}[\citet{bergemann2008ex}]
  An allocation rule $g$ satisfies \textit{ex post monotonicity} (EM) if, for every valuation profile $\mathbf{v}$ and every deceptive strategy profile $\hat{\mathbf{s}}$ such that $\hat{\mathbf{s}}(\theta(\mathbf{v})) = \theta(\mathbf{v}')$ for some $\mathbf{v}'$ with $i^{*}(\mathbf{v}) \neq i^{*}(\mathbf{v}')$, there exist an agent $i$ and an allocation $\mathbf{y} \in X$ such that 
\begin{align}\label{eq:ex-post-mono-1}
  u_{i}\big(\mathbf{y}; \mathbf{v}\big) > u_{i}\big(g(\theta(\mathbf{v}')); \mathbf{v}\big)
\end{align}
and, for any valuation profile $\tilde{\mathbf{v}}$ satisfying $\theta_{-i}(\tilde{\mathbf{v}}) = \theta_{-i}(\mathbf{v}')$,
we have
\begin{align}\label{eq:ex-post-mono-2}
u_{i}\big(g(\theta(\tilde{\mathbf{v}})); \tilde{\mathbf{v}}\big) \geq u_{i}\big(\mathbf{y}; \tilde{\mathbf{v}}\big).
\end{align}
\end{definition}
This condition guarantees the existence of a whistle-blower when the message profile is deceptive but consistent (equation \eqref{eq:ex-post-mono-1}), while ensuring no agent has an incentive 
to deviate when the profile is truthful (equation \eqref{eq:ex-post-mono-2}).

To see that EM is satisfied in our context, let $g(\theta(\mathbf{v})) = \mathbf{e}_{i}$ and $g(\theta(\mathbf{v}')) = \mathbf{e}_{j}$ with $i \neq j$, so $i$ and $j$ are the highest agents under $\mathbf{v}$ and $\mathbf{v}'$, respectively. By efficiency-aligned reallocation, for any $k \neq j$, there exists $\mathbf{y}$ with $y_i > 0$ such that $u_k(\mathbf{y};\mathbf{v}) > u_k(g(\theta(\mathbf{v}'));\mathbf{v})$, satisfying \eqref{eq:ex-post-mono-1}, for instance by choosing $\mathbf{y}$ so that $y_{i} > 0$ and $y_{j} = 1 - y_{i}$. Conversely, for any $k \neq i$, if $j$ is the highest agent, then $u_k(\mathbf{e}_j;\mathbf{v}) \geq u_k(\mathbf{y};\mathbf{v})$, satisfying \eqref{eq:ex-post-mono-2}.

\begin{remark}
  Assume that $n \geq 3$. If the utility function is continuous and the environment satisfies \textit{efficiency-aligned reallocation}, then \textit{ex post monotonicity} holds for $g$ mentioned above. 
\end{remark}

\section{Concluding remarks}\label{sec:conclusion}
A feature of our mechanisms that the analysis leaves implicit is how little the designer needs to know. The two-center mechanism of Theorem \ref{theo:robust_iff} uses only the messages of the two agents connected to everyone, and its punishment split $\gamma^{12}$ depends only on their externality weights $\alpha_{1}$ and $\alpha_{2}$. In the aid-targeting application, a program coordinator therefore needs to find two community members who know every household, such as a village head and a health worker, and to know how much each of them cares about the aid reaching the neediest household. She needs nothing about the other households: neither whom they know, nor how much they care about accurate targeting. The mechanism of Proposition \ref{prop:weak_robust} reverses this trade-off. It needs only one well-informed member, but its splits $\gamma^{1j}$ involve the externality weight of every household $j$ that this member may name, so the coordinator must know every household's stake in accurate targeting. Relaxing robustness with respect to agents' beliefs is thus paid for with more knowledge of their preferences on the designer's side.

The same comparison has a cost in terms of the network itself. The sparsest network admitting a weakly robustly efficient mechanism is the star, with $n - 1$ links, whereas the sparsest one admitting a robustly efficient mechanism, in which two agents are linked to each other and to everyone else, has $2n - 3$. Full robustness to belief hierarchies thus costs exactly $n - 2$ additional links: those of a second agent who observes everyone. In the targeting application, this is the step from relying on one well-informed member to adding a second.

\section*{Funding}
This work has received funding from the French government under the ``France 2030'' investment plan managed by the French National Research Agency (reference: ANR-17-EURE-0020) and from the Excellence Initiative of Aix-Marseille University - A*MIDEX. It has also benefited from the support of the Independent Research Fund Denmark (grant no.~4260-00050B).

\section*{Conflict of interest}
The author declares no conflict of interest.

\section*{Use of generative AI}
During the preparation of this work the authors used Claude (Anthropic) in order to improve language and readability. After using this tool, the authors reviewed and edited the content as needed and take full responsibility for the content of the publication.

\printbibliography

\appendix

\section*{Appendix: Proofs}
\noindent\bold{Proof of Lemma \ref{lem:two_agents}:} Label the two agents 1 and 2, and w.l.o.g. let $v_{1} > v_{2}$. Let $m_i$ and $m_i'$ be such that $m_{i1} \geq m_{i2}$ and $m_{i1}' < m_{i2}'$ for $i \in \{1,2\}$. We will check that for agent 1, $m_{1}'$ is strictly dominated. When agent 2 sends $m_2$, we have $\pi_1(m_1,m_2) = v_1$ and $\pi_1(m'_1,m_2) = 0 < v_1$. 
When agent 2 sends $m_2'$, we have $\pi_1(m_1,m'_2) = \frac{\alpha_1(1 - \alpha_2)}{1 - \alpha_1 \alpha_2}v_1 + \frac{\alpha_1\alpha_2(1 - \alpha_1)}{1 - \alpha_1\alpha_2}v_2$ and $\pi_1(m'_1,m'_2) = \alpha_1 v_2$.
Hence,
\begin{align*}
\pi_1(m_1,m'_2) - \pi_1(m'_1,m'_2) &= \frac{\alpha_1(1 - \alpha_2)}{1 - \alpha_1 \alpha_2}v_1 + \frac{\alpha_1\alpha_2(1 - \alpha_1)}{1 - \alpha_1\alpha_2}v_2 - \alpha_1 v_2\\
&= \frac{\alpha_1(1 - \alpha_2)}{1 - \alpha_1 \alpha_2}(v_1 - v_2) > 0 \text{ if and only if }v_1 > v_2
\end{align*}
Therefore, $m_1'$ is strictly dominated.
Once $m_{1}'$ is eliminated, we can check that for agent 2, $m_{2}'$ is strictly dominated. We have $\pi_2(m_1,m_2) = \alpha_2 v_1$ and $\pi_2(m_1,m'_2) = \frac{\alpha_2(1 - \alpha_1)}{1 - \alpha_1 \alpha_2}v_2 + \frac{\alpha_1\alpha_2(1 - \alpha_2)}{1 - \alpha_1\alpha_2}v_1$.
Hence, we have
\begin{align*}
\pi_2(m_1,m_2) - \pi_2(m_1,m'_2) &= \alpha_2 v_1 -   \frac{\alpha_2(1 - \alpha_1)}{1 - \alpha_1 \alpha_2}v_2 - \frac{\alpha_1\alpha_2(1 - \alpha_2)}{1 - \alpha_1\alpha_2}v_1 \\
&= \frac{\alpha_2(1 - \alpha_1)}{1 - \alpha_1 \alpha_2}(v_1 - v_2) > 0 \text{ if and only if }v_1 > v_2
\end{align*} \quad $\qedsymbol$\\

\noindent\bold{Proof of Theorem \ref{theo:robust_iff}:} {\it Necessity}\\
We first prove the following lemma.
\begin{lemma}\label{lem:nec_1}
Let $\bold{v}$ and $\bold{w}$ be valuation profiles with corresponding type profile $\theta$ and $\lambda$. If $\lambda_i \notin S_{i}(\theta_i)$ for some $i$, then there exists $j \in N$ such that $u_{j}(g(m_{j},\lambda_{-j});\mathbf{v}) > u_{j}(g(\lambda_{j},\lambda_{-j});\mathbf{v})$ for some $m_j \in M_{j}$.
\end{lemma}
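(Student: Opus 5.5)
The plan is to prove the contrapositive: assume that at the true profile $\mathbf{v}$ no agent has a profitable unilateral deviation from $\lambda$, i.e.
\[
u_{j}(g(\lambda_{j},\lambda_{-j});\mathbf{v}) \geq u_{j}(g(m_{j},\lambda_{-j});\mathbf{v}) \quad \text{for all } j \in N \text{ and all } m_{j}\in M_{j},
\]
and show that then $\lambda_{j}\in S_{j}(\theta_{j})$ for every $j$. The idea is the standard one: a profile that is a complete-information Nash equilibrium at the true state survives iterated elimination of never-best responses, because each agent can hold a degenerate but correct conjecture. First observe that $\lambda_{j} = \mathbf{w}_{N_{j}} \in [0,1]^{N_{j}} = M_{j}$, so $\lambda_{j}$ is a legitimate message for $j$ at his true type $\theta_{j}$. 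It is only a report of $\mathbf{w}$, which the definition of $S_j^k$ allows.

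I would argue by induction on $k$ that $\lambda_{j}\in S^{k}_{j}(\theta_{j})$ for all $j\in N$ simultaneously. The base case $k=0$ is immediate since $S^{0}_{j}=M_{j}$. For the inductive step, fix $j$ and take the conjecture $\mu_{j}$ to be the point mass on $(\lambda_{-j},v_{-j})$, with $v_{-j}$ the true valuations of the others.

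I then check the two requirements in the definition of $S^{k}_{j}(\theta_{j})$:
\begin{enumerate}
\item $\mu_{j}$ is concentrated on $E^{k-1}_{j}(\theta_{j})$. Observational consistency holds trivially, since the valuations put on $j$'s neighbours are the true ones. For the message part we need $\lambda_{-j}\in S^{k-1}_{-j}(\theta_{-j}(v_{j},v_{-j}))=S^{k-1}_{-j}(\theta_{-j})$, which is exactly the induction hypothesis applied to every $l\neq j$ at his true type.
\item $\lambda_{j}$ is a best response to $\mu_{j}$. Under a point mass the expected payoff of $m_{j}$ is $u_{j}(g(m_{j},\lambda_{-j});\mathbf{v})$. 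By the no-profitable-deviation hypothesis $\lambda_{j}$ attains the maximum, so $\lambda_j \in \BR_{j}(\theta_{j},\mu_{j})$. The argmax is nonempty because $\lambda_j$ itself attains it, so no compactness or continuity of $g$ is needed.
\end{enumerate}
Hence $\lambda_{j}\in S^{k}_{j}(\theta_{j})$. Taking the intersection over $k$ gives $\lambda_{j}\in S_{j}(\theta_{j})$ for all $j$, contradicting the assumption that $\lambda_{i}\notin S_{i}(\theta_{i})$ for some $i$. Therefore some $j$ has an $m_{j}$ with $u_{j}(g(m_{j},\lambda_{-j});\mathbf{v})>u_{j}(g(\lambda_{j},\lambda_{-j});\mathbf{v})$.

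The only delicate point I expect is bookkeeping in step 1. The set $E^{k-1}_{j}$ evaluates the others' surviving sets at types generated by $(v_{j},v'_{-j})$, and the conjectured $v'_{-j}$ must equal $v_{-j}$ so that these types are the true $\theta_{-j}$ on which the induction hypothesis is stated. This is why the conjecture must pair the messages $\lambda_{-j}$, which are reports of $\mathbf{w}$, with the true valuations $v_{-j}$ rather than with $w_{-j}$. With that choice the induction closes for all agents at once, and no further structure of $g$ is used.
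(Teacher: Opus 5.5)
Your proof is correct and is essentially the paper's argument. The paper locates the first round $\hat k+1$ at which some $\lambda_j$ is eliminated and then uses the point-mass conjecture on $(\lambda_{-j},v_{-j})$, which is admissible because $\lambda_{-j}\in S^{\hat k}_{-j}(\theta_{-j})$; you run the same point-mass argument (including pairing $\lambda_{-j}$ with the true $v_{-j}$) as an induction in contrapositive form.
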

\begin{proof}
If $\lambda_i \notin S_{i}(\theta_i)$, then it implies that there exists a largest $k$ such that $\lambda_{i'} \in S^k_{i'}(\theta_{i'})$ for all $i' \in N$. Let $\hat{k}$ be such $k$. Thus, there exists an agent $j \in N$ such that $\lambda_{j}$ is eliminated at round $\hat{k} + 1$, implying that for any $\mu_{j} \in C_{j}(\theta_{j})$ such that 
$$
\mu_{j}(\{(m_{-j},v_{-j})\,|\,m_{-j} \in S_{-j}^{\hat{k}}(\theta_{-j}(\bold{v}))\}) = 1
$$
there exists $m_{j}$ such that 
$$
\int_{M_{-j}\times V_{-j}}u_{j}(g(m_{j},m_{-j});(v_{j},v_{-j}))d\mu_{j} > \int_{M_{-j}\times V_{-j}}u_{j}(g(\lambda_{j}, m_{-j});(v_{j},v_{-j}))d\mu_{j}
$$
Since $\lambda_{-j} \in S_{-j}^{\hat{k}}(\theta_{-j})$, we can take $\mu_{j}^*$ such that $\mu_{j}^*(\lambda_{-j},v_{-j}) = 1$. 
Then, there must exist $m_{j}^*$ such that 
\begin{align*}
\int_{M_{-j}\times V_{-j}}\!u_{j}(g(m_{j}^*,m_{-j});(v_{j},v_{-j}))d\mu^*_{j} &> \int_{M_{-j}\times V_{-j}}\!u_{j}(g(\lambda_{j},m_{-j});(v_{j},v_{-j}))d\mu^*_{j} \nonumber\\
u_{j}(g(m_{j}^*,\lambda_{-j});(v_{j},v_{-j})) &> u_{j}(g(\lambda_{j},\lambda_{-j});(v_{j},v_{-j}))
\end{align*}
The lemma is proved.
\end{proof}

We prove the statement by contradiction. Assume that $g$ is robustly efficient, and assume that there is at most one agent who is connected to everyone. If such agent exists, assume without loss of generality that agent 1 is such agent. Otherwise, take randomly one agent and name him agent 1. Assume that a true valuation profile $\bold{v}^*$, with corresponding type profile $\theta^*$, is such that $v_{2}^{*} > v_{1}^{*} > v_{i}^{*}$ for every other $i$, i.e. agent 2 has the highest valuation and agent 1 has the second highest valuation. Moreover, assume that $v_{1}^{*} > \alpha_{1}v_{2}^{*}$ and $\alpha_{i}v_{1}^{*} > v_{i}^{*}$ for every $i \neq 1,2$.

Let us take an agent $i \neq 1$. First, assume that $i$ does not observe agent $2$. Take $\bold{v}^i$ and $\bold{w}^i$, with corresponding type profiles $\theta^i$ and $\lambda^{i}$, such that 
\begin{itemize}
  \item $v^i_{j} = v^*_j$ for all $j \in N$
  \item $w^i_{j} = v^*_{j}$ for all $j \neq 2$
  \item $\alpha_{2}w^i_1 > w_2^i$
\end{itemize}
Moreover, take some constant $\overline{w}_2$, and set $w_2^i = \overline{w}_2$ for every such $i$. By taking such $\bold{v}^i$ and $\bold{w}^i$, we can see that $\theta^i_i = \lambda^i_i$, and hence $S_{i}(\lambda_{i}^i) = S_{i}(\theta_{i}^i)$. By Lemma \ref{lem:nec_1}, if $\lambda_{i}^{i} \notin S_{i}(\lambda_{i}^i)$, then there exists $j \in N$ such that $u_{j}(g(m_{j},\lambda_{-j});\mathbf{w}^{i}) > u_{j}(g(\lambda_{j},\lambda_{-j});\mathbf{w}^{i})$ for some $m_j \in M_{j}$. We show that this is not possible. Let us denote $\bold{x}^* = g(m_{j}^*,\lambda_{-j}^i)$ and $\bold{x}^w = g(\lambda_{j}^i,\lambda_{-j}^i)$. By definition of $g$, we have $\bold{x}^w = \mathbf{e}_{1}$.

First, assume that $j \neq 1,2$. Then, we have
\begin{align*}
  w_{j}^i x_{j}^* + \alpha_{j} \sum_{l \neq j} w_{l}^i x^*_l > \alpha_{j} w_{1}^i
\end{align*}
This is not possible since $\alpha_{j} w_{1}^i = \alpha_{j} v_{1}^{*} > v_{j}^{*} = w_{j}^i$ for every $j \neq 1,2$, and $\alpha_{2} w_{1}^i > w_{2}^i$.

Next, assume that $j = 1$, then $w_{1}^i x_{1}^* + \alpha_{1} \sum_{l \neq 1} w_{l}^i x^*_l > w_{1}^i$
which is clearly not possible. Finally, assume that $j = 2$, then 
\begin{align*}
  w_{2}^i x_{2}^* + \alpha_{2} \sum_{l \neq 2} w_{l}^i x^*_l > \alpha_{2}w_{1}^i
\end{align*}
which is not possible since $\alpha_{2} w_{1}^{i} > w_{2}^{i}$. Therefore, $\lambda_{i}^{i} \in S_{i}(\lambda_{i}^i) = S_{i}(\theta_{i}^i)$.\vspace{2mm}

Now, assume that $i$ observes agent 2. Take an agent $k(i) \notin N_{i}$ and let $k := k(i)$. Let $\bold{v}^i$ and $\bold{w}^i$ be as follows.
\begin{itemize}
  \item $v_j^i = v_j^*$ for all $j \neq k$
  \item $i^{*}(\mathbf{w}^{i}) = k$ and $\alpha_{j}w_{k}^{i} > w_{j}^{i}$ for all $j \neq k$
  \item $w^i_j = v^i_j$ for all $j \notin \{2,k\}$, and $w^i_2 = \overline{w}_{2}$ such that $v_{1}^* > \overline{w}_2$
  \item $\alpha_{2}(v_{k}^i - w_{k}^i) \geq v_2^i - w_2^i$
\end{itemize}
Then, $\theta_i^i = \theta_i^*$, and $\lambda_i^i = (\overline{w}_2,(v_{j}^*)_{j \in N_i \setminus \{2\}})$. %
In words, in both $\bold{v}^i$ and $\bold{w}^i$, the valuation of $k$ is the highest, and $\bold{w}^i$ is such that, comparing to $\bold{v}^i$, the valuation of agent $2$ decreases less than $\alpha_2(v_{k}^{i} - w_{k}^i)$.  We can see that for $\overline{w}_2$ close enough to $v_{2}^{*}$, there exist $v_{k}^i$ and $w_{k}^{i}$ which satisfy these conditions. Moreover, notice that the second and the fourth bullet points imply that $i^{*}(\mathbf{v}^{i}) = k$.

\begin{lemma}
$\lambda_i^i \in S_{i}(\theta_i^{i})$ for any $i \neq 1$ such that $i \in N_2$.
\end{lemma}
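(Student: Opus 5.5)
The plan is to reduce the claim to Lemma \ref{lem:nec_1}, exactly as in the case where $i$ does not observe agent $2$. Apply Lemma \ref{lem:nec_1} with true profile $\mathbf v = \mathbf v^i$ (type profile $\theta^i$) and alternative profile $\mathbf w = \mathbf w^i$ (type profile $\lambda^i$). By its contrapositive, it suffices to show that no agent $j\in N$ has a message $m_j$ with
\[
u_j(g(m_j,\lambda^i_{-j});\mathbf v^i) > u_j(g(\lambda^i_j,\lambda^i_{-j});\mathbf v^i).
\]
In other words, we need that $\lambda^i$ is an ex post equilibrium profile when the true valuations are $\mathbf v^i$. Once that holds, every $\lambda^i_{j}$ lies in $S_j(\theta^i_j)$, in particular $\lambda^i_i \in S_i(\theta^i_i)$.

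\textbf{Step 1: the outcome at $\lambda^i$.} Since $\lambda^i=\theta(\mathbf w^i)$ is the truthful report profile under $\mathbf w^i$, we have $g(\lambda^i)=\mathbf e_{i^*(\mathbf w^i)}=\mathbf e_k$. This uses the standing restriction $g(\theta(\mathbf v))=\mathbf e_{i^*(\mathbf v)}$ and the second bullet defining $\mathbf w^i$.

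\textbf{Step 2: $\mathbf e_k$ maximizes every agent's utility among all allocations under $\mathbf v^i$.} For any $\mathbf x\in X$, the utility $u_j(\mathbf x;\mathbf v^i)=(1-\alpha_j)v^i_jx_j+\alpha_j\sum_l v^i_lx_l$ is linear in $\mathbf x$. Its maximum over $X$ is therefore attained at a vertex, and equals $\max\{v^i_j,\ \alpha_j\max_{l\neq j}v^i_l\}$.

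For $j=k$, the good going to $k$ yields $v^i_k$. Since $k=i^*(\mathbf v^i)$, this is the maximum.

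For $j\neq k$, it is enough to show $\alpha_j v^i_k> v^i_j$, since $v^i_k\ge v^i_l$ for all $l$. Two preliminary facts come from the fourth bullet. Because $v^i_2=v^*_2>v^*_1>\overline w_2=w^i_2$, we have $v^i_2-w^i_2>0$, and hence $v^i_k>w^i_k$. The cases are then as follows.
\begin{itemize}
\item For $j\notin\{2,k\}$: $v^i_j=w^i_j<\alpha_j w^i_k<\alpha_j v^i_k$.
\item For $j=2$: $\alpha_2 v^i_k \ge \alpha_2 w^i_k + (v^i_2-w^i_2) > w^i_2 + v^i_2 - w^i_2 = v^i_2$.
\end{itemize}
Thus $u_j(\mathbf e_k;\mathbf v^i)\ge u_j(\mathbf x;\mathbf v^i)$ for every $\mathbf x$ and every $j$. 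In particular this holds for $\mathbf x=g(m_j,\lambda^i_{-j})$, so no profitable deviation exists and Lemma \ref{lem:nec_1} concludes.

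\textbf{Main obstacle.} The hard part is Step 2 for agent $2$, the only agent (besides $k$) whose valuation differs between $\mathbf v^i$ and $\mathbf w^i$ and who may prefer the good himself. This is exactly where the fourth bullet $\alpha_2(v^i_k-w^i_k)\ge v^i_2-w^i_2$ is needed, so I would verify it carefully, together with the existence of such $v^i_k,w^i_k\in[0,1]$ for $\overline w_2$ close to $v^*_2$.

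A second point to check is that Lemma \ref{lem:nec_1} is applicable. Its proof only uses a point conjecture on $(\lambda_{-j},v_{-j})$, where $\lambda_{-j}$ survived the previous round and $v_{-j}$ is the true profile. That conjecture lies in $C_j(\theta_j)$ because it uses the true valuations $\mathbf v^i$, so the lemma applies even though $\lambda^i$ reports $\overline w_2$ and $w^i_k$ rather than the truth.
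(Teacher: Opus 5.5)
Your proof is correct. It shares the paper's skeleton: Lemma \ref{lem:nec_1} applied with $(\mathbf v,\mathbf w)=(\mathbf v^i,\mathbf w^i)$, the observation $g(\lambda^i)=\mathbf e_k$, and a case split on whether the potential deviator $j$ is $k$, $2$, or neither. The difference is in how the cases $j\neq k$ are closed.

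\textbf{The paper's route.} It never identifies agent $j$'s optimum under $\mathbf v^i$. Instead it takes the putative deviation allocation $\mathbf x^*$ and subtracts the inequality $u_j(\mathbf x^*;\mathbf w^i)\le \alpha_j w^i_k$ (optimality of $\mathbf e_k$ under $\mathbf w^i$) from the strict deviation inequality under $\mathbf v^i$. The fourth bullet then reduces the difference to $(v^i_2-w^i_2)(x^*_2+x^*_k-1)>0$ (up to a factor $\alpha_j$), which contradicts feasibility. In effect, it treats $\mathbf v^i$ as a perturbation of $\mathbf w^i$ and bounds the payoff change at $\mathbf x^*$.

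\textbf{Your route.} You push the inequalities down to the valuations themselves. From the second and fourth bullets you obtain $\alpha_j v^i_k>v^i_j$ for every $j\neq k$. By linearity, $\mathbf e_k$ then maximizes every $u_j(\cdot;\mathbf v^i)$ over all of $X$. This is the same direct style the paper itself uses in the preceding case $i\notin N_2$.

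\textbf{What your version buys.} It is shorter and more transparent. It isolates what the lemma actually uses: $i^*(\mathbf w^i)=k$, together with $\alpha_j v^i_k>v^i_j$ for all $j\neq k$. The fourth bullet serves only to secure the latter for $j=2$.

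\textbf{Two small points.}
\begin{itemize}
\item Your case inequalities do not presuppose $v^i_k\ge v^i_l$. Since $\alpha_j<1$, they imply it. So you can derive $i^*(\mathbf v^i)=k$ rather than cite it.
\item The step $v^i_2=v^*_2$ uses $k\neq 2$. This holds because $2\in N_i$ (links are mutual and $i\in N_2$) while $k\notin N_i$.
\end{itemize}
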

\begin{proof}
Assume by contradiction that there exists an agent $i \in N_2$ such that $\lambda^i_i \notin S_{i}(\theta^i_i)$. Then, by Lemma \ref{lem:nec_1}, we know that there exists $j \in N$ such that 
\begin{align}
u_{j}(g(m_{j}^*,\lambda_{-j}^i);\mathbf{v}^i) &> u_{j}(g(\lambda_{j}^i,\lambda_{-j}^i);\mathbf{v}^i) \label{eq:elim_1}
\end{align}
for some $m_j^* \in M_{j}$.
Let us denote $\bold{x}^* = g(m_{j}^*,\lambda_{-j}^i)$ and $\bold{x}^w = g(\lambda_{j}^i,\lambda_{-j}^i)$.

Let us assume first that $j \notin \{2,k\}$. 
By definition of the mechanism $g$, we have $\mathbf{x}^w = \mathbf{e}_{k}$.
Therefore, we have $u_{j}(\bold{x}^w;\mathbf{v}^i) = \alpha_{j} v_{k}^i$ and $u_{j}(\bold{x}^*;\mathbf{v}^i) = v_{j}^i x_{j}^* + \alpha_{j} \sum_{l \neq j} v_{l}^i x^*_l$. By (\ref{eq:elim_1}), we have
\begin{align}
v_{j}^i x_{j}^* + \alpha_{j} \sum_{l \neq j} v_{l}^i x^*_l - \alpha_{j} v_{k}^i > 0 \label{eq:elim_2}
\end{align}
Besides, by construction of $\mathbf{w}^{i}$, we have $w_{j}^i x_{j}^* + \alpha_{j} \sum_{l \neq j}w_{l}^i x_{l}^* \leq \alpha_{j}w_{k}^i$.
Since $w_{l}^i = v_{l}^{i}$ for all $l \notin \{2,k\}$, this inequality can be written as follows.
\begin{align}
v_{j}^{i} x_{j}^* + \alpha_{j}\bigg( \sum_{l \neq 2,j,k}v_{l}^i x_{l}^* + w_{2}^{i}x_{2}^* + w_{k}^{i}x_{k}^* \bigg)  - \alpha_{j}w_{k}^i \leq 0 \label{eq:epic}
\end{align}
Therefore, by \eqref{eq:elim_2} and \eqref{eq:epic}, we obtain $\alpha_{j}[(v_{2}^i - w_{2}^{i})x_2^* - (v_{k}^i - w_{k}^{i})(1 - x_{k}^*)] > 0$.
This cannot hold if $x_{k}^* = 1$. This is because $\sum_{l \in N}x_{l}^* \leq 1$, and then if $x_{k}^* = 1$, then $x_2^* = 0$, which makes above expression equal zero. Hence, let $x_{k}^* < 1$. Then, since $v_{k}^i - w_{k}^i \geq v_{2}^i - w_{2}^i > 0$, we have
\begin{align*}
\alpha_{j}[(v_{2}^i - w_{2}^{i})x_2^* - (v_{2}^i - w_{2}^{i})(1 - x_{k}^*)] &\geq \alpha_{j}[(v_{2}^i - w_{2}^{i})x_2^* - (v_{k}^i - w_{k}^{i})(1 - x_{k}^*)] > 0
\end{align*}
This yields $\alpha_{j}(v_{2}^i - w_{2}^{i})(x_2^* + x_{k}^* - 1) > 0$, which is a contradiction since it must be that $x_2^* + x_{k}^* \leq 1$ and $v_{2}^{i} > w_{2}^{i}$ by construction.

Next, assume that $j = 2$. Then, by the same argument, we obtain $(v_{2}^i - w_{2}^{i})x_2^* - \alpha_{2}(v_{k}^{i} - w_{k}^{i})(1 - x_{k}^*) > 0$, and we can confirm that $x^*_{k} \neq 1$. Since $\alpha_{2}(v_{k}^i - w_{k}^i) \geq v_{2}^i - w_{2}^{i} > 0$, we have
\begin{align*}
(v_{2}^i - w_{2}^{i})x^*_2 - (v_{2}^{i} - w_{2}^{i})(1 - x_{k}^*) &\geq (v_{2}^i - w_{2}^{i})x_2^* - \alpha_{2}(v_{k}^{i} - w_{k}^{i})(1 - x_{k}^*) > 0
\end{align*}
This yields $(v_{2}^i - w_{2}^{i})(x_{k}^* + x_{2}^* - 1) > 0$, which is a contradiction.

Next, assume that $j = k$. Then, by applying \eqref{eq:elim_1} with $j = k$, we obtain 
\begin{align*}
  v_{k}^{i}x_{k}^{*} + \alpha_{k}\sum_{l \neq k}v_{l}^{i}x_{l}^{*} > v_{k}^{i}
\end{align*}
which is not possible since $v_{k}^{i} > v_{l}^{i}$ for every $l \neq k$.
\end{proof}
Now, we know that $\lambda^i_i \in S_{i}(\theta_i^*)$ for all $i \neq 1$, with $\lambda^i_i = (\overline{w}_2,(v_{j}^*)_{j \in N_i \setminus \{2\}})$ such that $v_{1}^* > \overline{w}_2$ for $i \in N_2$, and $\lambda^i_i = (v_{j}^*)_{j \in N_i}$ for $i \notin N_2$. Let us define $\hat{\bold{v}} := (\hat{v}_{i})_{i \in N}$ such that 
\begin{align*}
\hat{v}_{i} = 
\begin{cases}
\overline{w}_2, &\text{ for } i = 2\\
\hfil v^*_i, &\text{ otherwise }
\end{cases}
\end{align*} 
and let us denote $\hat{\lambda}_i := \lambda_i^i$ for each $i \neq 1$. Then, we can show that $\hat{\lambda}_1 \in S_{1}(\theta_1^*)$ with $\hat{\lambda}_1 = (\overline{w}_2,(v_{j}^*)_{j \notin \{2\}})$. If $\hat{\lambda}_1 \notin S_{1}(\theta_1^{*})$, then it must be the case that there is some $k$ such that $\hat{\lambda}_1 \in S_{1}^k(\theta_1^*)$ and $\hat{\lambda}_1 \notin S_{1}^{k+1}(\theta_1^*)$. However, we can take $\mu_1^*$ which puts probability 1 to $m_{-1} = \hat{\lambda}_{-1}$, because $\hat{\lambda}_{-1} \in S_{-1}(\theta_{-1}^*)$. By taking $\mu_1^*$, we always have $\hat{\lambda}_1 \in \BR_1(\theta_1^*,\mu_{1}^*)$. This is because $g(\hat{\lambda}_1,\hat{\lambda}_{-1}) = \mathbf{e}_{1}$, and hence $\hat{\lambda}_1$ cannot be eliminated.

Therefore, $\hat{\lambda} \in S(\theta^{*})$, which is a contradiction to $g$ being robustly efficient, since agent 1 is not the highest in $\mathbf{v}^{*}$, but $g(\hat{\lambda}) = \mathbf{e}_{1}$. \vspace{3mm}

\noindent{\it Sufficiency}\\
We show that the two-center mechanism presented in Section~\ref{sec:rob_eff} is robustly efficient. Without loss of generality, assume that $v_1 > v_2$. The steps of the proof are as follows.
\begin{enumerate}
  \item We prove that for agent 1, reporting that agent 2 is the unique highest agent is eliminated.
  \item Given the first step, we prove that for agent 2, reporting the true highest agent is the only surviving strategy.
  \item Given the two steps above, we prove that for agent 1, reporting the true highest agent is the only surviving strategy.
\end{enumerate}
\textit{\underline{Part 1}}: Let $m_{1}$ be such that $h_{1} = 2$. We prove that $m_{1}$ is never a best response of agent 1. Since agent 1 observes every valuation, he knows whether $i^{*} = 1$, and we treat the two cases separately.\vspace{2mm}\\
\noindent\textbf{Case A: $i^{*} \neq 1$.}\\
Since $v_{1} > v_{2}$, we have $i^{*} \neq 1,2$. Let $m_{1}'$ be the truthful message of agent 1, so that $h_{1}(m_{1}') = i^{*}$ and $m_{12}'/m_{1i^{*}}' = v_{2}/v_{i^{*}}$. Against $h_{2} = 1$ and $h_{2} = q \neq 1,2$, rule 1 applies to both messages, and $g(m_{1}',m_{2})$ is obtained from $g(m_{1},m_{2})$ by moving half of the good from agent 2 to $i^{*}$, which agent 1 strictly prefers since $v_{i^{*}} > v_{2}$. Against $h_{2} = 2$, $m_{1}$ yields $\mathbf{e}_{2}$ by rule 3, while $m_{1}'$ gives $i^{*}$ the share $v_{2}/v_{i^{*}} + \epsilon_{1}$ by rule 4, so that $\pi_{1}(m_{1}',m_{2}) = \alpha_{1}(v_{2} + \epsilon_{1}v_{i^{*}}) > \alpha_{1}v_{2} = \pi_{1}(m_{1},m_{2})$. Hence $m_{1}$ is strictly dominated by $m_{1}'$.\vspace{2mm}\\
\noindent\textbf{Case B: $i^* = 1$}\\
We first show that any $m_{2}$ with $q := h_{2}(m_{2}) \neq 1,2$ is strictly dominated by any $m_{2}'$ with $h_{2}(m_{2}') = 1$. Against $h_{1} = 1$, $m_{2}$ gives $q$ a share smaller than 1 by rule 4, while $m_{2}'$ yields $\mathbf{e}_{1}$ by rule 3. Against $h_{1} \neq 1$, rule 1 applies to both messages, and $g(m_{1},m_{2}')$ is obtained from $g(m_{1},m_{2})$ by moving half of the good from $q$ to agent 1. Since $i^{*} = 1$ implies $v_{1} > v_{q}$, agent 2 strictly prefers $m_{2}'$ in every case. Hence agent 2's surviving messages satisfy $h_{2} \in \{1,2\}$. Given this, $m_{1}$ is strictly dominated by $m_{1}'$ with $h_{1}(m_{1}') = 1$: against $h_{2} = 1$, $m_{1}'$ yields $\mathbf{e}_{1}$ instead of $\frac{1}{2}\mathbf{e}_{1} + \frac{1}{2}\mathbf{e}_{2}$, and $v_{1} > \alpha_{1}v_{2}$; against $h_{2} = 2$, $m_{1}'$ yields $\gamma^{12}$ instead of $\mathbf{e}_{2}$, which agent 1 strictly prefers by Lemma \ref{lem:two_agents}, since $v_{1} > v_{2}$.\vspace{2mm}\\
\noindent\textit{\underline{Part 2}}: We show that, given Part 1, $m_{2}$ such that $h_{2}(m_{2}) = i^{*}$ is the only surviving message. To do this, we show that $m_2$ strictly dominates any other strategy. 

First, assume that $1 \neq i^{*}$, and let 
\begin{itemize}
  \item $m_2'$ be such that $h_{2}' := h_{2}(m_{2}') = 1$,
  \item $m_2''$ be such that $h_{2}'' := h_{2}(m_{2}'') = 2$,
  \item $m_2^{(3)}$ be such that $h_{2}^{(3)} := h_{2}(m_2^{(3)}) = i$ for some $i \neq 1,2,i^{*}$.
\end{itemize}
We show that $m_2$ yields strictly higher payoff than $m_2'$, $m_2''$, and $m_2^{(3)}$ for any of the two possible strategies of agent 1, which are $h_{1}(m_{1}) = 1$ and $h_{1}(m_{1}) \neq 1,2$.

For $h_{1}(m_{1}) = 1$, we have 
\begin{align*}
\pi_2(m_2,m_{-2}) &= \alpha_2 \left(\frac{m_{21}}{m_{2i^*}} + \epsilon \right) v_{i^*}\\
\pi_{2}(m_2', m_{-2}) &= \alpha_2 v_1\\
\pi_{2}(m_2'', m_{-2}) &= \beta_{21}v_{2} + \alpha_2 \beta_{12}v_1\\
\pi_2(m_2^{(3)},m_{-2}) &= \alpha_2 \left(\frac{m_{21}^{(3)}}{m_{2h_{2}^{(3)}}^{(3)}} + \epsilon \right) v_{h_{2}^{(3)}}
\end{align*}
By Lemma \ref{lem:two_agents} and by the fact that $v_{i^*} = \max_{j \in N}v_{j}$, with $m_2$ such that $m_{21} = v_{1}$ and $m_{2i^{*}} = v_{i^{*}}$, we can show that $\pi_2(m_2,m_{-2}) = \alpha_{2}(v_{1} + \epsilon v_{i^{*}}) > \alpha_{2}v_{1} > \beta_{21}v_{2} + \alpha_2 \beta_{12}v_1.$
Moreover, for a given $m^{(3)}_{2}$, take $m_{2}$ such that $m_{2i^{*}} = m^{(3)}_{2h_{2}^{(3)}}$ and $m_{21} = m^{(3)}_{21}$, so that we have 
$$\pi_2(m_2,m_{-2}) = \alpha_2 \bigg(\frac{m_{21}}{m_{2i^*}} + \epsilon \bigg) v_{i^*} = \alpha_2 \bigg(\frac{m_{21}^{(3)}}{m_{2h_{2}^{(3)}}^{(3)}} + \epsilon \bigg)v_{i^{*}} > \alpha_2 \bigg(\frac{m_{21}^{(3)}}{m_{2h_{2}^{(3)}}^{(3)}} + \epsilon \bigg)v_{h_{2}^{(3)}}$$
For $h_{1}(m_{1}) \neq 1,2$, we have 
\begin{align*}
\pi_2(m_2,m_{-2}) &= \frac{\alpha_2}{2} \left(v_{i^*} + v_{h_{1}} \right)\\
\pi_{2}(m_2', m_{-2}) &=  \frac{\alpha_2}{2} \left(v_{1} + v_{h_{1}} \right)\\
\pi_{2}(m_2'', m_{-2}) &= \alpha_2 \left( \frac{m_{12}}{m_{1h_{1}}} + \epsilon \right)v_{h_{1}}\\
\pi_2(m_2^{(3)},m_{-2}) &= \frac{\alpha_2}{2} (v_{h_{1}} + v_{h_{2}^{(3)}} ) 
\end{align*}
By the fact that $v_{i^*} = \max_{j \in N}v_{j}$, we can show that $\pi_2(m_2,m_{-2})$ is higher than any other case for some $m_{2}$.

Now, assume that $i^{*} = 1$, and let 
\begin{itemize}
  \item $m_2'$ be such that $h_{2}' := h_{2}(m_{2}') = 2$,
  \item $m_2''$ be such that $h_{2}'' := h_{2}(m_2'') = i$ for some $i \neq 1,2$.
\end{itemize}
and consider the two strategies of agent 1 which are
\begin{itemize}
\item $h_{1}(m_{1}) = 1$
\item $h_{1}(m_{1}) \neq 1,2$.
\end{itemize}
For $h_{1}(m_{1}) = 1$, we have
\begin{align*}
\pi_{2}(m_2, m_{-2}) &= \alpha_2 v_1\\
\pi_{2}(m_2', m_{-2}) &= \beta_{21}v_{2} + \alpha_2 \beta_{12}v_1\\
\pi_2(m_2'',m_{-2}) &= \alpha_2 \left(\frac{m_{21}''}{m_{2h_{2}''}''} + \epsilon \right) v_{h_{2}''}
\end{align*}
With the same logic as the case where $i^{*} \neq 1$, we obtain that $\pi_{2}(m_2, m_{-2})$ is larger than under the other two strategies. 

For $h_{1}(m_{1}) \neq 1,2$, we have 
\begin{align*}
\pi_{2}(m_2, m_{-2}) &=  \frac{\alpha_2}{2} \left(v_{1} + v_{h_{1}} \right)\\
\pi_{2}(m_2', m_{-2}) &= \alpha_2 \left( \frac{m_{12}}{m_{1h_{1}}} + \epsilon \right)v_{h_{1}}\\
\pi_2(m_2'',m_{-2}) &= \frac{\alpha_2}{2} (v_{h_{1}} + v_{h_{2}''} ) 
\end{align*}
Again, $\pi_{2}(m_2, m_{-2})$ is the largest.\vspace{2mm}\\
\textit{\underline{Part 3}} : Given that agent 2 plays $h_{2}(m_{2}) = i^*$, we show that $m_{1}$ such that $h_{1}(m_{1}) = i^*$ strictly dominates any other strategy.

First, assume that $i^{*} \neq 1$, and let $m_{1}'$ and $m_{1}''$ be such that $h_{1}(m_{1}') = 1$ and $h_{1}'' := h_{1}(m_{1}'') \neq 1,i^{*}$, respectively. We have
\begin{align*}
  \pi_{1}(m_{1},m_{-1}) &= \alpha_{1}v_{i^{*}}\\
  \pi_{1}(m_{1}',m_{-1}) &= \alpha_{1}\bigg(\frac{m_{21}}{m_{2i^{*}}}+\epsilon\bigg)v_{i^{*}}\\
  \pi_{1}(m_{1}'',m_{-1}) &= \frac{\alpha_{1}}{2}(v_{i^{*}} + v_{h_{1}''} )
\end{align*}
We can see that $\pi_{1}(m_{1},m_{-1})$ is the largest. Now, assume $i^{*} = 1$. Then, $\pi_{1}(m_{1},m_{-1}) = v_{1}$, which is larger than the payoff for any other allocation. \quad $\qedsymbol$\\

\noindent \bold{Proof of Lemma \ref{lem:WD_elimination}:} Let $U$ be the nonempty relatively open set on which the inequality is strict, and let $\mu_{i}$ satisfy the conditions in the definition of $T_{i}^{k}(\theta_{i})$. Under $\mu_{i}$, the expected payoff of $m_{i}'$ exceeds that of $m_{i}$ by the integral over $F_{i}^{k-1}(\theta_{i})$ of a nonnegative function that is strictly positive on $U$, and $\mu_{i}(U\mid\theta_{i}) > 0$. Hence $m_{i} \notin \BR_{i}(\theta_{i},\mu_{i})$ for every such $\mu_{i}$. \quad $\qedsymbol$\\

\noindent \bold{Proof of Proposition \ref{prop:weak_robust}:} Assume that agent 1 is connected to everyone. Let $h_{1} := h_{1}(m_{1})$ be defined as in the mechanism. 
We prove the statement in three parts.
\begin{enumerate}
  \item We prove that for agent 1, telling a lie about the highest agent is eliminated at the first round, i.e. $m_1 \notin T^{1}_{1}(\theta_1)$ for $m_{1}$ such that $h_{1}(m_{1}) \neq i^{*}$.
  \item For all peripheral agents $i$ whose valuation is higher than agent 1, telling that agent 1's valuation is higher than himself is eliminated at the second round, i.e. for agent $i$ such that $v_i > v_1$, $m_i \notin T^2_{i}(\theta_i)$ if $m_{i1} > m_{ii}$.
  \item For all peripheral agents $i$ whose valuation is lower than agent 1, telling that agent 1's valuation is lower than himself is eliminated at the third round whenever agent 1 is the highest agent, i.e. for agent $i$ such that $v_1 > v_i$, $m_i \notin T^3_{i}(\theta_i)$ if $m_{ii} > m_{i1}$ and $i^{*} = 1$; if $i^{*} \neq 1$, agent $i$'s message does not affect the allocation.
\end{enumerate}\vspace{2mm}

\noindent\textit{\underline{Part 1}} : Let $m_{1}$ be such that $h_{1}(m_{1}) \neq i^{*}$, and let $m_{1}'$ be such that $h_{1}(m_{1}') = i^{*}$ and, if $i^{*} = 1$, $\hat{h}_{1}(m_{1}') = h_{1}(m_{1})$. If every peripheral agent concedes, both messages yield $\mathbf{e}_{1}$. Otherwise, agent 1 strictly prefers the allocation under $m_{1}'$. If $i^{*} \neq 1$, it is $\mathbf{e}_{i^{*}}$, while $m_{1}$ yields either $\mathbf{e}_{p}$ with $v_{p} < v_{i^{*}}$, or $\gamma^{1j}$ with $j := \hat{h}_{1}(m_{1})$; in the latter case, by the computation following Lemma \ref{lem:two_agents}, $\pi_{1}(\gamma^{1j}) - \pi_{1}(\mathbf{e}_{i^{*}}) = \beta_{1j}(v_{1} - v_{j}) - \alpha_{1}(v_{i^{*}} - v_{j}) < 0$, since $v_{i^{*}} \geq v_{j}$, $v_{i^{*}} > v_{1}$, and $\beta_{1j} = \alpha_{1}(1 - \beta_{j1}) < \alpha_{1}$. If $i^{*} = 1$, it is $\gamma^{1p}$ with $p := h_{1}(m_{1})$, while $m_{1}$ yields $\mathbf{e}_{p}$, and Lemma \ref{lem:two_agents} applies. Since the set of $m_{-1}$ in which some peripheral agent contests is a nonempty open subset of $F_{1}^{0}(\theta_{1})$, $m_{1} \notin T^{1}_{1}(\theta_{1})$ by Lemma \ref{lem:WD_elimination}.\vspace{2mm}

\noindent\textit{\underline{Part 2}} : Assume that there is an agent $i$ with $v_i > v_1$. Then, by Part 1, he knows that $m_1 \notin T^1_1(\theta_1)$ if $h_{1}(m_{1}) \neq i^{*}$. Hence, $F^{1}_{i}(\theta_{i})$ only includes $(m_{-i},v'_{-i})$ such that $h_{1}(m_{1}) = i^{*}(v_{i},v'_{-i})$.

Let $m_i$ and $m_i'$ be such that $m_{i1} \geq m_{ii}$ and $m_{ii}' > m_{i1}'$. For $m_i$, we have $g(m_i,m_{-i}) = \mathbf{e}_{1}$ if $m_{j1} \geq m_{jj}$ for all $j \notin \{1,i\}$, and $g(m_i,m_{-i}) = \mathbf{e}_{i^*}$ otherwise.
For $m_i'$, we always have $g(m_i',m_{-i}) = \mathbf{e}_{i^{*}}$. Since $v_{i} > v_{1}$ implies $i^{*} \neq 1$, we have $\pi_{i}(\mathbf{e}_{i^{*}}) > \pi_{i}(\mathbf{e}_{1})$. Hence $m_{i}'$ does at least as well as $m_{i}$, and strictly better on the nonempty relatively open set of contingencies in which every $j \notin \{1,i\}$ reports $m_{j1} > m_{jj}$, so $m_{i} \notin T^{2}_{i}(\theta_{i})$ by Lemma \ref{lem:WD_elimination}.\vspace{2mm}

\noindent\textit{\underline{Part 3}} : Take an agent $i$ with $v_1 > v_i$. Let $m_{i}$ and $m_i'$ be such that $m_{i1} \geq m_{ii}$ and $m_{ii}' > m_{i1}'$ respectively. We prove that $m_i$ yields a higher payoff than $m_i'$ for any $(m_{-i},v_{-i}') \in F_{i}^{2}(\theta_{i})$. Here, the set of $v_{-i}'$ consistent with $\theta_{i}(\mathbf{v})$ can be divided into two cases, namely $i^{*}(\mathbf{v}') \neq 1$, and $i^{*}(\mathbf{v}') = 1$. 

Take the first case and let $i^{*} := i^{*}(\mathbf{v}')$. By Part 1 and Part 2, we know that any $m_{-i} \in T^{2}_{-i}(\theta_{-i}(\mathbf{v}'))$ satisfies $h_{1}(m_{1}) = i^{*}$ and $m_{i^{*}i^{*}} > m_{i^{*}1}$. Hence, $g(m_{i},m_{-i}) = \mathbf{e}_{i^*}$ for any message of $i$.

Take the second case. By part 1, we know that any $m_{-i} \in T^{2}_{-i}(\theta_{-i}(\mathbf{v}'))$ satisfies $h_{1}(m_{1}) = 1$. Take any of such message $m_{1}$ and let us define $\hat{h}_{1}(m_{1})$ as in the one-center mechanism. Take first $m_{-1,i}$ such that for all $j \neq 1,i$, $m_{j1} \geq m_{jj}$. Then, $g(m_{i}, m_{-i}) = \mathbf{e}_{1}$, and $g(m_{i}',m_{-i}) = \gamma^{1\hat{h}_{1}(m_{1})}$. Since $1$ is the highest, $\mathbf{e}_{1}$ yields a strictly higher payoff for agent $i$ than $g(m_{i}',m_{-i})$: by Lemma \ref{lem:two_agents} if $i = \hat{h}_{1}(m_{1})$, and otherwise because $\gamma^{1\hat{h}_{1}(m_{1})}$ splits at most the whole good between agent 1 and an agent with a lower valuation.

Take now $m_{-1,i}$ such that $m_{jj} > m_{j1}$ for some $j \neq 1,i$. Then, we have $g(m_{i},m_{-i}) = \gamma^{1\hat{h}_{1}(m_{1})}$ for any $m_{i}$.

Hence $m_{i}$ does at least as well as $m_{i}'$. If $i^{*}(\mathbf{v}) = 1$, it does strictly better on the nonempty relatively open set of contingencies with $i^{*}(\mathbf{v}') = 1$ in which every $j \notin \{1,i\}$ reports $m_{j1} > m_{jj}$, so $m_{i}' \notin T^{3}_{i}(\theta_{i})$ by Lemma \ref{lem:WD_elimination}. If $i^{*}(\mathbf{v}) \neq 1$, agent $i$'s message does not affect the allocation, which is $\mathbf{e}_{i^{*}(\mathbf{v})}$ by the first case. \quad $\qedsymbol$\\

\noindent \bold{Proof of Proposition \ref{prop:block_achievability}:}
Fix $j\in N$ and let $\mathcal K_j$ be the blocks where $j$ serves as hub. Since $\mathbf x=\sum_k y_k\,\tilde g^{B_k}(\mathbf m^{B_k})$ and $u_j(\cdot;\mathbf v)$ is linear, $j$'s payoff is additively separable across blocks:
$$u_j(\mathbf x;\mathbf v)=\sum_{k\in\mathcal K_j}y_k\,u_j(\tilde g^{B_k}(\mathbf m^{B_k});\mathbf v) + \sum_{k\notin\mathcal K_j}y_k\,u_j(\tilde g^{B_k}(\mathbf m^{B_k});\mathbf v)$$
with the $k$-th term depending only on $j$'s own coordinate $m_j^{B_k}$ and on $B_k$'s other hub's message; and $j$'s message space $\prod_{k\in\mathcal K_j}M_j^{B_k}$ is a literal product across these blocks. Moreover, the second sum is independent of $j$'s message, and hence is taken as a constant.
This implies that, against any conjecture, $j$'s best response decomposes into $|\mathcal K_j|$ independent problems, one per block: for $y_k=0$ every message is a best response, and for $y_k>0$ the problem is, up to the scalar $y_k$, exactly $j$'s problem in the standalone two-hub game Theorem \ref{theo:robust_iff} analyzes on $B_k$.

This decomposition holds at every round of elimination, so by induction on the round --- the base case being the full message space, the inductive step the coordinatewise argument above applied to conjectures supported on the previous round's (already-factored, by hypothesis) surviving sets of the other hubs --- $j$'s surviving set is the product, over blocks $k\in\mathcal K_j$ with $y_k>0$, of his surviving set in the standalone game on $B_k$. By Theorem \ref{theo:robust_iff}'s sufficiency proof, every rationalizable profile of hub messages to such a block resolves to $\mathbf e_{i_k^*(\mathbf v)}$; for $|B_k|=1$ this holds by definition of $\tilde g^{B_k}$. \quad $\qedsymbol$\\

\noindent \bold{Proof of Corollary \ref{cor:rank_bound}:}
By Proposition \ref{prop:block_achievability}, the single-block mechanism on $B^*$ delivers the entire good to $i^*_{B^*}(\mathbf v)$ in every rationalizable message profile. Any agent with valuation exceeding $v_{i^*_{B^*}(\mathbf v)}$ must lie outside $B^*$, since $i^*_{B^*}(\mathbf v)$ is the maximizer within $B^*$; there are at most $|N\setminus B^*| = n-n_1^*$ such agents, giving $\mathrm{rank}(i^*_{B^*}(\mathbf v);\mathbf v) - 1 \leq n-n_1^*$. For attainment, take any $\mathbf v$ in which the $n-n_1^*$ highest valuations in $N$ are held by the agents in $N\setminus B^*$ (in any order) and the remaining, lower valuations are held by $B^*$. Then $i^*_{B^*}(\mathbf v)$ has exactly $n-n_1^*$ higher-valued agents, so $\mathrm{rank}(i^*_{B^*}(\mathbf v);\mathbf v) = n-n_1^*+1$.\quad $\qedsymbol$\\

\noindent \bold{Proof of Proposition \ref{prop:compute_n1star}:}
($\leq$) Let $S$ be verifiable with $|S|>1$, with hubs $i,j\in S$, $i\neq j$, so $S\subseteq N_i$ and $S\subseteq N_j$. Since $j\in S\subseteq N_i$ and $j\neq i$, this gives $g_{ij}=1$. Thus $S\subseteq N_i\cap N_j$, giving $|S|\leq|N_i\cap N_j|\leq\max_{(i,j):g_{ij}=1}|N_i\cap N_j|$. Since $S$ was an arbitrary verifiable set with $|S|>1$, this bound applies in particular to the maximizer defining $n_1^*$, whose size is strictly larger than $1$ since $\mathbf G$ has an edge; this gives $n_1^*\leq\max_{(i,j):g_{ij}=1}|N_i\cap N_j|$.

($\geq$) Let $(i,j) \in \argmax_{(i',j')\,:\,g_{i'j'}=1} |N_{i'}\cap N_{j'}|$. Since $g_{ij}=1$, we have $i\in N_j$ and $j\in N_i$. Hence, together with the fact $i\in N_i$ and $j\in N_j$, we have $i,j\in N_i\cap N_j$. The set $S:=N_i\cap N_j$ satisfies $S\subseteq N_i$ and $S\subseteq N_j$ by construction, so $S$ is verifiable with hubs $i,j$, giving $n_1^*\geq|N_i\cap N_j| = \max_{(i',j')\,:\,g_{i'j'}=1} |N_{i'}\cap N_{j'}|$.\quad $\qedsymbol$\\

\noindent \bold{Proof of Proposition \ref{prop:general-2-agent-existence}:} 
We need to show the existence of $\mathbf{y}$ such that $\Delta_{12}(\mathbf{y};\mathbf{v})$ and $\Delta_{21}(\mathbf{y};\mathbf{v})$ satisfy the \textit{sign-matching property}. If such $\mathbf{y}$ exists, then by the continuity of the utility function, it should satisfy
\begin{align}\label{eq:zero-intercept}
  v_{1} = v_{2} \quad \Longleftrightarrow \quad \Delta_{12}(\mathbf{y};\mathbf{v}) = \Delta_{21}(\mathbf{y};\mathbf{v}) = 0
\end{align}
We first prove the following lemma.
  \begin{lemma}\label{lem:zero-intercept}
    There exists an allocation profile $\mathbf{y}$ such that \eqref{eq:zero-intercept} holds, if,  whenever $v_{1} = v_{2}$, for every $\mathbf{x}$ such that $x_{1} > 0$ and $x_{2} = 1 - x_{1}$, we have $u_{1}(\mathbf{x};\mathbf{v}) > u_{1}(e_{2};\mathbf{v})$, and for every $\mathbf{x}$ such that $x_{2} > 0$ and $x_{1} = 1 - x_{2}$, we have $u_{2}(\mathbf{x};\mathbf{v}) > u_{2}(e_{1};\mathbf{v})$.
  \end{lemma}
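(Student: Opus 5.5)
The plan is to reduce \eqref{eq:zero-intercept} to two scalar equations in $(y_1,y_2)$ and then solve them with a Poincaré--Miranda (two-dimensional intermediate value) argument on the simplex $Y:=\{\mathbf y\in\mathbb R^2_{+}: y_1+y_2\le 1\}$.

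\textbf{Step 1: reduction to two equations.} With two agents, $A_{11}(\mathbf e_2)=0$ and $A_{12}(\mathbf y)$ is the only term involving $v_2$. Hence
\[
\Delta_{12}(\mathbf y;\mathbf v)=A_{11}(\mathbf y)v_1+\big(A_{12}(\mathbf y)-A_{12}(\mathbf e_2)\big)v_2 .
\]
Define
\[
f_1(\mathbf y):=A_{11}(\mathbf y)+A_{12}(\mathbf y)-A_{12}(\mathbf e_2),\qquad f_2(\mathbf y):=A_{22}(\mathbf y)+A_{21}(\mathbf y)-A_{21}(\mathbf e_1).
\]
On the diagonal $v_1=v_2=v$ we have $\Delta_{12}=v f_1(\mathbf y)$ and $\Delta_{21}=v f_2(\mathbf y)$. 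So the forward direction of \eqref{eq:zero-intercept} holds exactly when $f_1(\mathbf y)=f_2(\mathbf y)=0$.

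Conversely, if $f_1(\mathbf y)=0$, then $\Delta_{12}(\mathbf y;\mathbf v)=A_{11}(\mathbf y)(v_1-v_2)$, and likewise $\Delta_{21}(\mathbf y;\mathbf v)=A_{22}(\mathbf y)(v_2-v_1)$. Since $\partial A_{ii}/\partial x_i>0$ and $A_{ii}=0$ at $x_i=0$, we get $A_{ii}(\mathbf y)>0$ whenever $y_i>0$. So a common zero $\mathbf y$ with $y_1,y_2>0$ gives the full equivalence, and in fact directly delivers the sign-matching property needed for Proposition \ref{prop:general-2-agent-existence}.

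\textbf{Step 2: boundary signs.} I will check the signs of $f_1,f_2$ on $\partial Y$.
\begin{itemize}
\item On the edge $y_1=0$ with $y_2<1$: $A_{11}=0$, and since $A_{12}(0,\cdot)$ is strictly increasing in $x_2$, $f_1<0$.
\item On the hypotenuse $y_1+y_2=1$ with $y_1>0$: the hypothesis (strict self-preference at $v_1=v_2$, divided by $v>0$) gives $f_1>0$. At $\mathbf e_2$ itself, $f_1=0$.
\item Symmetrically, $f_2<0$ on $\{y_2=0,\ y_1<1\}$, $f_2>0$ on the hypotenuse with $y_2>0$, and $f_2(\mathbf e_1)=0$.
\end{itemize}

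\textbf{Step 3: Poincaré--Miranda.} Take a homeomorphism $\varphi:[0,1]^2\to Y$ that sends the face $\{s=0\}$ onto the edge $y_1=0$ and the face $\{t=0\}$ onto the edge $y_2=0$. It should split the hypotenuse, sending $\{s=1\}$ onto the segment from $\mathbf e_1$ to the midpoint and $\{t=1\}$ onto the segment from the midpoint to $\mathbf e_2$. For instance, use a piecewise-linear map obtained by viewing $Y$ as the quadrilateral with vertices $(0,0),\mathbf e_1,(\tfrac12,\tfrac12),\mathbf e_2$.

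Then $F:=(f_1,f_2)\circ\varphi$ is continuous, because $A_{ij}$ is differentiable. By Step 2 it satisfies $f_1\circ\varphi\le0$ on $\{s=0\}$, $f_1\circ\varphi\ge 0$ on $\{s=1\}$, $f_2\circ\varphi\le 0$ on $\{t=0\}$, and $f_2\circ\varphi\ge0$ on $\{t=1\}$. The weak-inequality version of Poincaré--Miranda then yields a common zero $\mathbf y^\ast\in Y$.

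I expect the main obstacle to be the corners, where the signs only hold weakly. At $\mathbf e_1$, $f_1(\mathbf e_1)=A_{11}(\mathbf e_1)-A_{12}(\mathbf e_2)>0$ by self-preference with $x_1=1$, so the face $\{s=1\}$ is fine. At $\mathbf e_2$ only $f_1\ge0$ holds, which is why I use the weak form of the theorem.

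\textbf{Step 4: the zero is interior.} Finally I will rule out $y^\ast_1=0$ and $y^\ast_2=0$. If $y^\ast_1=0$, then Step 2 forces $\mathbf y^\ast=\mathbf e_2$. But $f_2(\mathbf e_2)=A_{22}(\mathbf e_2)-A_{21}(\mathbf e_1)>0$ by self-preference of agent $2$, which is a contradiction. The case $y^\ast_2=0$ is symmetric. Hence $y^\ast_1,y^\ast_2>0$, and Step 1 completes the proof.
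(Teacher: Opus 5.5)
Your proof is correct and follows the paper's argument. The steps match: the reduction to $f_1=f_2=0$, the Poincar\'e--Miranda boundary checks from the monotonicity of $A_{ij}$ and strict self-preference, and the exclusion of boundary zeros via $f_1(\mathbf e_1)>0$ and $f_2(\mathbf e_2)>0$. The differences are minor: the paper uses $\mathbf y=(s,(1-s)t)$, which collapses the face $\{s=1\}$ to $\mathbf e_1$ (only continuity is needed, not a homeomorphism), where you split the hypotenuse; and you fold the later coefficient comparison that yields sign-matching into your Step~1.
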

  \begin{proof}
  The condition \eqref{eq:zero-intercept} for $\Delta_{12}(\mathbf{y},\mathbf{v})$ can be written as follows.
  \begin{align*}
    A_{11}(\mathbf{y})v_{1} + A_{12}(\mathbf{y})v_{2} = A_{12}(\mathbf{e}_{2})v_{2} \quad \Longleftrightarrow \quad v_{1} = v_{2} 
  \end{align*}
  By setting $v_{1} = v_{2} = v$, we obtain $A_{11}(\mathbf{y})v + A_{12}(\mathbf{y})v = A_{12}(e_{2})v$. Same thing applies for $\Delta_{21}(\mathbf{y};\mathbf{v})$, and we obtain the two following equations.
  \begin{align}
    A_{11}(\mathbf{y}) + A_{12}(\mathbf{y}) &= A_{12}(\mathbf{e}_{2})\label{eq:cond-zero-1}\\
    A_{22}(\mathbf{y}) + A_{21}(\mathbf{y}) &= A_{21}(\mathbf{e}_{1})\label{eq:cond-zero-2}
  \end{align}
  We prove the existence of $\mathbf{y}$ which satisfies \eqref{eq:cond-zero-1} and \eqref{eq:cond-zero-2}. Define a function $F(s,t) = (F_{1}(s,t), F_{2}(s,t))$ as follows.
  \begin{align*}
    F_{1}(s,t) := A_{11}(s, (1 - s)t) + A_{12}(s, (1 - s)t) - A_{12}(0,1)\\
    F_{2}(s,t) := A_{22}(s, (1 - s)t) + A_{21}(s, (1 - s)t) - A_{21}(1,0)
  \end{align*}
  where $s := y_{1}$, $t := \frac{y_{2}}{1 - y_{1}}$. Thus, \eqref{eq:cond-zero-1} and \eqref{eq:cond-zero-2} imply $F_{1}(s,t) = F_{2}(s,t) = 0$. By the Poincaré-Miranda theorem, it is sufficient to check that
  \begin{align}
    F_{1}(0,t) \leq 0 \text{ and }F_{1}(1,t) \geq 0 \text{ for all } t \in [0,1] \label{eq:cond-poinc-1}\\
    F_{2}(s,0) \leq 0 \text{ and }F_{2}(s,1) \geq 0 \text{ for all } s \in [0,1]\label{eq:cond-poinc-2}
  \end{align}
  \eqref{eq:cond-poinc-1} gives
  \begin{align*}
    A_{11}(0,t) + A_{12}(0,t) - A_{12}(0,1) = A_{12}(0,t) - A_{12}(0,1) \leq 0\\
    A_{11}(1,0) + A_{12}(1,0) - A_{12}(0,1) = A_{11}(1,0) - A_{12}(0,1) \geq 0
  \end{align*}
  The first inequality is true since $A_{12}(x_{1},x_{2})$ is non-decreasing in $x_{2}$. The second one is true when agent 1 prefers $(1,0)$ to $(0,1)$ whenever $v_{1} = v_{2}$. Moreover, \eqref{eq:cond-poinc-2} gives
  \begin{align*}
    A_{22}(s, 0) + A_{21}(s, 0) - A_{21}(1,0) = A_{21}(s, 0) - A_{21}(1,0) \leq 0\\
    A_{22}(s, 1 - s) + A_{21}(s, 1 - s) - A_{21}(1,0) \geq 0
  \end{align*}
  The first inequality is true since $A_{21}(x_{1},x_{2})$ is non-decreasing in $x_{1}$. The second inequality is true if $u_{2}(x_{1},x_{2};\mathbf{v}) \geq u_{2}(\mathbf{e}_{1};\mathbf{v})$ for every $(x_{1},x_{2})$ such that $x_{1} + x_{2} = 1$, whenever $v_{1} = v_{2}$.

  Moreover, every zero of $F$ satisfies $s \in (0,1)$ and $t > 0$, so that $y_{1} > 0$ and $y_{2} > 0$. Indeed, since $\partial A_{ij}/\partial x_{j} > 0$, the first inequality in \eqref{eq:cond-poinc-1} is strict for $t < 1$ and the first one in \eqref{eq:cond-poinc-2} is strict for $s < 1$, while $F_{1}(1,t) > 0$ and $F_{2}(0,1) > 0$ by the strict preferences assumed in the lemma.
  \end{proof}
  Now, by assuming that \eqref{eq:zero-intercept} holds, we prove the \textit{sign-matching property}. The sufficient condition for the \textit{sign-matching property} is $\Delta_{12}(\mathbf{y};\mathbf{v}) = k_{1}(v_{1} - v_{2})$ and $\Delta_{21}(\mathbf{y};\mathbf{v}) = k_{2}(v_{2} - v_{1})$ for some $k_{1}, k_{2} > 0$. By comparing the coefficient of $v_{1}$ and $v_{2}$ on the expression of $\Delta_{12}(\mathbf{y};\mathbf{v})$, we obtain $A_{11}(\mathbf{y}) - A_{11}(\mathbf{e}_{2}) = k_{1}$ and $A_{12}(\mathbf{y}) - A_{12}(\mathbf{e}_{2}) = - k_{1}$.
  The first equation implies $A_{11}(\mathbf{y}) = k_{1}$ since $A_{11}(\mathbf{e}_{2}) = 0$ by assumption. Moreover, by \eqref{eq:zero-intercept}, we have \eqref{eq:cond-zero-1}, and hence the second equation becomes
  \begin{align*}
    A_{12}(\mathbf{y}) - (A_{11}(\mathbf{y}) + A_{12}(\mathbf{y})) = - A_{11}(\mathbf{y}) = -k_{1}
  \end{align*}
  the same equation obtained by the first equation. By applying the same argument to $\Delta_{21}(\mathbf{y};\mathbf{v})$, we obtain $A_{22}(\mathbf{y}) = k_{2}$. Finally, $k_{1} = A_{11}(\mathbf{y}) > 0$ and $k_{2} = A_{22}(\mathbf{y}) > 0$, since $y_{1}, y_{2} > 0$ by the proof of Lemma \ref{lem:zero-intercept}, $A_{ii}(\mathbf{x}) = 0$ if $x_{i} = 0$, and $\partial A_{ii}/\partial x_{i} > 0$. \quad $\qedsymbol$\\

\noindent \bold{Proof of Proposition \ref{prop:robust-mechanism-general}:} Let $\mathbf{y}^{ij}$ be the allocation profile such that $y_{k}^{ij} = 0$ for every $k \neq i,j$, and   
  \begin{align*}
    u_{i}(y^{ij}_{i},y^{ij}_{j};\mathbf{v}) > u_{i}(\mathbf{e}_{j};\mathbf{v}) \text{ and } u_{j}(\mathbf{e}_{i};\mathbf{v}) > u_{j}(y^{ij}_{i},y^{ij}_{j};\mathbf{v})\Longleftrightarrow v_{i} > v_{j}\\
    u_{i}(y^{ij}_{i},y^{ij}_{j};\mathbf{v}) < u_{i}(\mathbf{e}_{j};\mathbf{v}) \text{ and } u_{j}(\mathbf{e}_{i};\mathbf{v}) < u_{j}(y^{ij}_{i},y^{ij}_{j};\mathbf{v})\Longleftrightarrow v_{i} < v_{j}
  \end{align*}
  \textit{Robustly efficient mechanism:} Take two agents, 1 and 2, who are connected to everyone.
Let the mechanism $g(\cdot)$ be the same as the one in the proof of Theorem \ref{theo:robust_iff}, except that if $h_{1} = 1$ and $h_{2} = 2$, then we set $g(m_{1},m_{2}) = \mathbf{y}^{12}$.
We prove that this mechanism is robustly efficient. Without loss of generality, assume that $v_1 > v_2$.\vspace{2mm}

The proof of the sufficiency part of Theorem \ref{theo:robust_iff} carries over, with the same messages eliminated in the same order. Each strict comparison made there is of one of the following types. (i) $\mathbf{y}^{ij}$, in place of $\gamma^{ij}$, against $\mathbf{e}_{i}$ or $\mathbf{e}_{j}$: this holds by the definition of $\mathbf{y}^{ij}$. (ii) Two allocations such that the preferred one is obtained from the other by moving part of the good, allocated or burned, to $i^{*}$ from someone other than the agent making the comparison: this holds by efficiency-aligned reallocation. (iii) A partial allocation $\sigma\mathbf{e}_{j}$ from rule 4 against an allocation that beats $\mathbf{e}_{j}$ by (i) or (ii): this holds since $u_{i}(\sigma\mathbf{e}_{j};\mathbf{v}) \leq u_{i}(\mathbf{e}_{j};\mathbf{v})$ by $\partial A_{ij}/\partial x_{j} > 0$. The only step that uses linearity beyond these comparisons is where a partial allocation $\sigma\mathbf{e}_{i^{*}}$ must beat a full allocation to another agent, namely in Case A of Part 1 against $h_{2} = 2$, and in Part 2 against $h_{1} = 1$ when $i^{*} \neq 1$. There, the truthful ratio is replaced by a reported ratio close enough to 1: since $u_{i}(\mathbf{e}_{i^{*}};\mathbf{v}) > u_{i}(\mathbf{e}_{j};\mathbf{v})$ for $j \neq i,i^{*}$ by efficiency-aligned reallocation and $u_{i}$ is continuous, $\sigma\mathbf{e}_{i^{*}}$ is then strictly better, and the ratio affects no other comparison.\vspace{2mm}

\noindent \textit{Weakly robustly efficient mechanism}: Let agent 1 be the central agent who is connected to everyone. Let $h_{i} := h_{i}(m_{i})$ and $\hat{h}_{i} := \hat{h}_{i}(m_{i})$ be defined the same as in the one-center mechanism. 

Let the mechanism be the same as the one-center mechanism, except in the case where $h_{1}(m_{1}) = 1$ and $m_{ii} > m_{i1}$ for some $i \neq 1$. In this case, we set $g(\mathbf{m}) = \mathbf{y}^{1\hat{h}_1}$.
The proof of Proposition \ref{prop:weak_robust} carries over, with the same messages eliminated on the same sets of contingencies, and with comparisons of types (i) and (ii) above in place of those relying on linearity. The only comparison of another type arises in part 1, when $i^{*} \neq 1$ and $h_{1}(m_{1}) = 1$: agent 1 then compares $\mathbf{y}^{1j}$, where $j := \hat{h}_{1}(m_{1})$, with $\mathbf{e}_{i^{*}}$. If $v_{j} > v_{1}$, then $u_{1}(\mathbf{y}^{1j};\mathbf{v}) < u_{1}(\mathbf{e}_{j};\mathbf{v}) \leq u_{1}(\mathbf{e}_{i^{*}};\mathbf{v})$ by (i) and (ii). If $v_{j} < v_{1}$, let $c := A_{1j}(\mathbf{e}_{j})$. By \eqref{eq:cond-zero-1} applied to the pair $(1,j)$, $u_{1}(\mathbf{y}^{1j};\mathbf{v}) = c v_{1} - A_{1j}(\mathbf{y}^{1j})(v_{1} - v_{j}) < c v_{1}$, while efficiency-aligned reallocation at profiles in which $v_{j}$ approaches $v_{i^{*}}$ gives $A_{1i^{*}}(\mathbf{e}_{i^{*}}) \geq c$, so that $u_{1}(\mathbf{e}_{i^{*}};\mathbf{v}) \geq c v_{i^{*}} > c v_{1}$.\quad $\qedsymbol$\\

\noindent \textbf{Proof of Proposition \ref{prop:welfare-impossibility}:} 
Take two agents, $1$ and $2$, and assume that they do not observe each other. We assume by contradiction that a mechanism $g$ robustly allocates to the welfare maximizer, that is, an agent $i$ such that $i = \argmax_{i' \in N}\big(1 - \alpha_{i'} + \sum_{j \in N}\alpha_{j}\big)v_{i'}$.
  
Take two valuation profiles, $\mathbf{v}$ and $\mathbf{v}'$ as follows. Let agent $1$ be the welfare maximizer under $\mathbf{v}$, and let $i^{*} = 2$. Moreover, let $\alpha_{i}v_{2} > v_{i}$ for all $i \neq 1,2$.
Let $2$ be the welfare maximizer under $\mathbf{v}'$. Moreover, let $v_{k}' = v_{k}$ for any $k \neq 2$, and let $\alpha_{1}v'_{2} > v_{1}$. Since $g$ is robustly welfare-maximizing, the good should be allocated to $2$ at $\mathbf{v}'$ with some message profile which survives the iterative eliminations. Let $\mathbf{m}$ be this message profile, and therefore we have $g(\mathbf{m}) = \mathbf{e}_{2}$.

On the other hand, since $\mathbf{e}_{2}$ is not welfare maximizing at $\mathbf{v}$, there must be an agent $i \in N$ such that, for any belief $b_{i} \in \Delta(V_{-i})$ consistent with his information, we have 
\begin{align*}
  \int_{V_{-i}}\pi_{i}(\mathbf{e}_{2};(v_{i},w_{-i}))db_{i}(w_{-i}) < \int_{V_{-i}}\pi_{i}(g(m_{i}', m_{-i});(v_{i},w_{-i}))db_{i}(w_{-i}) \text{ for some }m_{i}' 
\end{align*}
First, we prove that such $i$ cannot be an agent $i \neq 1$. Assume by contradiction that an agent $i \neq 1$ is such agent, and take a belief such that $b_{i}(v_{-i}) = 1$, i.e. the agent has a correct belief. By letting $\mathbf{x}' = g(m_{i}', m_{-i})$, this implies
\begin{align*}
  \alpha_{i}v_{2} < x_{i}'v_{i} +\alpha_{i}\sum_{j \neq i}x_{j}'v_{j} \Rightarrow \alpha_{i}(1 - x_{2}')v_{2} < x_{i}'v_{i} + \alpha_{i}\sum_{j \neq 1,2,i}x_{j}'v_{j} + \alpha_{i}x_{1}'v_{1}
\end{align*}
Since $\alpha_{i}v_{2} > v_{i}$, we have
\begin{align*}
  x_{i}'v_{i} + \alpha_{i}\sum_{j \neq 1,2,i}x_{j}'v_{j} + \alpha_{i}x_{1}'v_{1} < \alpha_{i}x_{i}'v_{2} + \alpha_{i}\sum_{j \neq 1,2,i}x_{j}'v_{2} + \alpha_{i}x_{1}'v_{2}
\end{align*}
Hence, we obtain 
\begin{align*}
  \alpha_{i}(1 - x_{2}')v_{2} < \alpha_{i}x_{i}'v_{2} + \alpha_{i}\sum_{j \neq 1,2,i}x_{j}'v_{2} + \alpha_{i}x_{1}'v_{2} \Rightarrow 1 < x_{i}' + \sum_{j \neq 1,2,i}x_{j}' + x_{1}' + x_{2}'
\end{align*}
This is not possible since $\sum_{i \in N}x'_{i} \leq 1$. 
This means that such agent should be agent 1. However, we prove that this is not possible.
Assume by contradiction that this is the case. Then, for any belief $b_{1} \in \Delta(V_{-1})$, we have
\begin{align*}
  \int_{V_{-1}}\pi_{1}(\mathbf{e}_{2};(v_{1},w_{-1}))db_{1}(w_{-1}) < \int_{V_{-1}}\pi_{1}(g(m_{1}', m_{-1});(v_{1},w_{-1}))db_{1}(w_{-1}) \text{ for some }m_{1}'
\end{align*}
Take a belief such that $b_{1}(w_{-1}) = 1$ where $w_{-1}$ is such that $w_{2} = v_{2}'$ and $w_{i} = v_{i}$ for all $i \neq 1,2$. This is the belief that agent 1 believes that agent 2's valuation is $v_{2}'$, and for any other agent $i$, the valuation is $v_{i}$, the correct one. This belief is consistent to the information of agent 1 since agent 1 does not observe agent 2. By taking this belief, we obtain
\begin{align*}
  \alpha_{1}v_{2}' < x_{1}'v_{1} + \alpha_{1}\sum_{j \neq 1,2}x'_{j} w_{j} + \alpha_{1}x_{2}'v_{2}' \Rightarrow \alpha_{1}v_{2}'(1 - x_{2}') &< x_{1}'v_{1} + \alpha_{1}\sum_{j \neq 1,2}x'_{j} v_{j}
\end{align*}
Since $\alpha_{1}v_{2}' > v_{1}$, and $v_{2}' > v_{2} > v_{i}$ for all $i \neq 1,2$, we obtain
\begin{align*}
  \alpha_{1}v_{2}'(1 - x_{2}') < x_{1}'\alpha_{1}v_{2}' + \alpha_{1}\sum_{j \neq 1,2}x'_{j} v_{2} \Rightarrow 1 - x_{2}' < x_{1}' + \sum_{j \neq 1,2}x'_{j}
\end{align*}
which is a contradiction since $\sum_{j \in N}x'_{j} \leq 1$. \quad $\qedsymbol$\\

\noindent \textbf{Proof of Proposition \ref{prop:ex-post-IC}}\\
($\Rightarrow$): Assume by contradiction that there is a direct mechanism $g$ with $g(\theta(\mathbf v)) = \mathbf e_{i^*(\mathbf v)}$ that satisfies EPIC under a network where there is an isolated agent. Let agent $1$ be such an isolated agent. 
Let $\mathbf{v}$ and $\mathbf{v}'$ be the valuation profiles as follows. $i^{*}(\mathbf{v}) = 1$, and $v_{i}' = v_{i}$ for all $i \neq 1$, $i^{*}(\mathbf{v}') = 2$, and $v_{1}' > \alpha_{1}v_{2}'$. Let $\theta := \theta(\mathbf{v})$ and $\theta' := \theta(\mathbf{v}')$. Since $g(\theta(\mathbf v)) = \mathbf e_{i^*(\mathbf v)}$, we have $g(\theta) = \mathbf e_1$ and $g(\theta') = \mathbf e_2$.

Since for all $i \neq 1$, we have $v_{i} = v_{i}'$ and they are not connected to agent $1$, for all agent $i \neq 1$, we have $\theta_{i} = \theta_{i}'$. Hence, we have $g(\theta_{1},\theta_{-1}') = g(\theta_{1},\theta_{-1}) =  \mathbf{e}_{1}$ and $g(\theta_{1}', \theta_{-1}') = g(\theta_{1}', \theta_{-1}) = \mathbf{e}_{2}$.
Besides, since $v_{1}' > \alpha_{1}v_{2}'$, we have
\begin{align*}
  u_{1}(g(\theta_{1}', \theta_{-1}');\mathbf{v}') = \alpha_{1}v_{2}' < v_{1}' = u_{1}(g(\theta_{1}, \theta_{-1});\mathbf{v}') = u_{1}(g(\theta_{1}, \theta_{-1}');\mathbf{v}')
\end{align*}
This implies that $g$ is not EPIC.\vspace{2mm}

\noindent ($\Leftarrow$): Assume that everyone has at least one neighbor and let the mechanism $g$ be as follows. 
\begin{itemize}
  \item $g(\theta(\mathbf{v})) = \mathbf{e}_{i^{*}(\mathbf{v})}$.
  \item $g(\mathbf{m}) = \mathbf{0}$ otherwise.
\end{itemize}
In words, the mechanism allocates to the highest agent reported by the message profile if the message profile is consistent, and burns the whole good if the message profile is inconsistent. We show that this is EPIC. Take an agent $i \neq i^{*}$, and from the mechanism, we have $u_{i}(g(\theta(\mathbf{v})); \mathbf{v}) = \alpha_{i}v_{i^{*}} > 0 = u_{i}(g(\theta_{i}',\theta_{-i}(\mathbf{v})); \mathbf{v})$.
The message profile $(\theta_{i}',\theta_{-i}(\mathbf{v}))$ is necessarily inconsistent since there is at least one agent who is connected to $i$. For agent $i^{*}(\mathbf{v})$, same argument holds.
\quad $\qedsymbol$\vspace{3mm}

\end{document}